\newtheorem{theorem}{Theorem}
\newtheorem{lemma}{Lemma}
\newtheorem{observation}{Observation}
\newtheorem{definition}{Definition}
\newtheorem{claim}{Claim}
\newcommand{\dowod}{\begin{proof}}
\newcommand{\koniec}{\end{proof}}
\newcommand{\lbmatching}{$(l,b)$-matching}
\newcommand{\kt}{\ensuremath{K_{t+1}}}
\newcommand{\ktt}{\ensuremath{K_{t,t}}}
\newcommand{\kpq}{\ensuremath{K^p_q}}
\newcommand{\kpt}{\ensuremath{K^p_2}}
\newcommand{\np}{\ensuremath{\mathcal{NP}}}
\newcommand{\mc}{\ensuremath{\bar{M}}}
\newcommand{\nc}{\ensuremath{\bar{N}}}
\newcommand{\gm}{\ensuremath{\hat{G}}}
\newcommand{\wm}{\ensuremath{\hat{w}}}
\newcommand{\mm}{\ensuremath{\hat{M}}}
\newcommand{\mytodo}[2]{\todo[size=\tiny, color=#1!50!white]{#2}}
\newcommand{\mwcom}[1]{\mytodo{green}{#1}}
\newcommand{\revcom}[1]{\mytodo{red}{#1}}
\date{}
\begin{document}

\begin{frontmatter}

\title{Clique-free t-matchings in degree-bounded graphs\tnoteref{grant}}
\tnotetext[grant]{Partially supported by Polish National Science Center grant 2018/29/B/ST6/02633.}

\author[ii]{Katarzyna Paluch}%\corref{cor}}%{https://orcid.org/0000-0002-7504-6340}
\ead{abraka@cs.uni.wroc.pl}

\author[ii]{Mateusz Wasylkiewicz}
\ead{mateusz.wasylkiewicz@cs.uni.wroc.pl}

\address[ii]{Institute of Computer Science,  University of Wroc{\l}aw, Wroc{\l}aw, Poland}

%\cortext[cor]{Corresponding author}

\begin{abstract}
We consider \mwcom{who has to be the corresponding author?} problems of finding a maximum size/weight $t$-matching without forbidden subgraphs in an undirected graph $G$ with the maximum degree bounded by $t+1$, where $t$ is an integer greater than $2$. Depending on the variant forbidden subgraphs denote certain subsets of $t$-regular complete partite subgraphs of $G$. A graph is complete partite if there exists a partition of its vertex set such that every pair of vertices from different sets is connected by an edge and  vertices from the same set form an independent set. \mwcom{maybe stress that the forbidden subgraphs does not have to be induced subgraphs} A clique \kt{} and a bipartite clique \ktt{} are examples of complete partite graphs. These problems are natural generalizations of the triangle-free and square-free $2$-matching problems in subcubic graphs. In the weighted setting we assume that the weights of edges of $G$ are \emph{vertex-induced} on every forbidden subgraph. We present simple and fast combinatorial algorithms for these problems. The presented algorithms are the first ones for the weighted versions, and for the unweighted ones,  are faster than those known previously. Our approach relies on the use of gadgets with so-called {\em half-edges}.  A {\em half-edge} of edge $e$ is, informally \mwcom{maybe add information that we calculate the complement of $t$-matching to the abstract} speaking, a half of $e$ containing exactly one of its \mwcom{$k$-restricted variant removed} endpoints. %Additionally, we consider the variation of these problems, in which given pairwise edge-disjoint complete partite subgraphs $S_1, \ldots, S_r$ of $G$ and $r$ natural numbers $k_1, \ldots, k_r$, we want to find a maximum $t$-matching such that at least $k_i$ edges of $S_i$ do not belong to it. \mwcom{to DAM: give full postal address, Indicate clearly if color should be used for any figures in print, disclose any interests, Role of the funding source}
\end{abstract}

\begin{keyword}
restricted $t$-matching \sep partite complete graph \sep combinatorial algorithm \sep half-edge \sep gadget
\end{keyword}

\end{frontmatter}

\pgfdeclaredecoration{complete sines}{initial} %copied from: https://tex.stackexchange.com/questions/25678/nicer-wavy-line-with-tikz
{
    \state{initial}[
        width=+0pt,
        next state=sine,
        persistent precomputation={\pgfmathsetmacro\matchinglength{
            \pgfdecoratedinputsegmentlength / int(\pgfdecoratedinputsegmentlength/\pgfdecorationsegmentlength)}
            \setlength{\pgfdecorationsegmentlength}{\matchinglength pt}
        }] {}
    \state{sine}[width=\pgfdecorationsegmentlength]{
        \pgfpathsine{\pgfpoint{0.25\pgfdecorationsegmentlength}{0.5\pgfdecorationsegmentamplitude}}
        \pgfpathcosine{\pgfpoint{0.25\pgfdecorationsegmentlength}{-0.5\pgfdecorationsegmentamplitude}}
        \pgfpathsine{\pgfpoint{0.25\pgfdecorationsegmentlength}{-0.5\pgfdecorationsegmentamplitude}}
        \pgfpathcosine{\pgfpoint{0.25\pgfdecorationsegmentlength}{0.5\pgfdecorationsegmentamplitude}}
}
    \state{final}{}
}

\tikzset{vertex/.style={minimum size=#1,circle,fill=black,draw,inner sep=0pt},
	decoration={markings,mark=at position .5 with {\arrow[black,thick]{stealth}}},
	vertex/.default=2.5mm,
	bigVertex/.style={vertex=4mm},
	e0/.style={line width=0.8pt},
	e1/.style={line width=2.4pt},
	em/.style={line width=0.8pt,decoration={
            complete sines,
            segment length=8,
            amplitude=5
        },
        decorate},
    em0/.style={line width=0.8pt,decoration={
            complete sines,
            segment length=12,
            amplitude=3
        },
        decorate},
	transformsTo/.pic=
	{
		\coordinate (-leftEnd) at (0,0);
		\coordinate (-rightEnd) at (5,0);
		\draw[thick] (0,0) -- (0,1) -- (3,1) -- (3,2) -- (5,0) -- (3,-2) -- (3,-1) -- (0,-1) -- (0,0);
	}}

\section{Introduction}
%\vspace{-0.25cm}
We consider several variants of the problem of finding a maximum size or weight $t$-matching without forbidden subgraphs.  

Given  a positive integer $t$, a subset~$M$ of edges of an undirected simple graph is a \emph{$t$-matching} if every vertex is incident to at most $t$ edges of~$M$. $t$-matchings belong to a wider class of $b$-matchings, where for every vertex $v$ in the set of vertices $V$ of the graph, we are given a natural number $b(v)$ and a subset of edges is a \emph{$b$-matching} if every vertex is incident to at most $b(v)$ of its  edges.  A $b$-matching of maximum size/weight can be found in polynomial time by a reduction to a classical matching.

In the problems that we study in this paper we are given an undirected graph $G=(V,E)$ in which each vertex has degree at most $t+1$ and the goal is to find a maximum size/weight $t$-matching  that does not contain certain complete $t$-regular partite graphs.
A graph $H=(V_H,E_H)$ is  {\bf \em complete partite} if there exists a partition $V_1$, $V_2$, \ldots, $V_p$ of $V_H$ of size $p\geq 2$ such that $E_H=\{(v,u):\exists i \neq j \;\; v\in V_i \land u\in V_j\}$. Each such $V_i$ is called a {\bf \em color class} of $H$. We denote the $i$-th color class of $H$ by $V_i(H)$. If all color classes of $H$ have the same size, we denote $H$ by $K^p_q$ where $q=|V_1|$. Notice that $K^p_q$ is $t$-regular for $t=(p-1)q$. Observe that a clique $K_a$ consisting of $a$ vertices and a bipartite clique $K_{a,b}$ with color classes consisting of $a$ and $b$ vertices are special cases
of complete partite graphs. 
We say that a $t$-matching of $G$ is {\bf \em restricted} if it does not contain an edge set of any \kt{} and \ktt{} of $G$ and that it is $K^p_q${\bf \em -free} for $t=(p-1)q$ if it does not contain an edge set of any subgraph of $G$ isomorphic to $t$-regular $K^p_q$. The restricted $t$-matching problem and the \kpq{}-free $t$-matching problem consist in finding a maximum size restricted/\kpq{}-free $t$-matching, respectively. Depending on the variant of the problem, we refer to  respective subgraphs  \kpq{}, \kt{} and \ktt{} as {\bf \em forbidden} subgraphs. We say that the restricted/\kpq{}-free $t$-matching problem is {\bf \em bounded} if every vertex of the input graph is assumed to have degree at most $t+1$.  Polynomial time algorithms for the bounded  restricted $t$-matching problem and the bounded \kpq{}-free $t$-matching  problem % in graphs of maximum degree at most $t+1$
were presented by  B{\'e}rczi and V{\'e}gh~\cite{BercziVegh2010} and Kobayashi and Yin~\cite{KobayashiYin2012}, respectively. \\
\indent In the weighted versions  of these  problems,  each edge $e$ is associated with a  nonnegative weight $w(e)$ and we are interested in computing a restricted/\kpq{}-free  $t$-matching of maximum  weight.  For a subgraph $H$ of $G$, we say that the weight function is {\bf \em vertex-induced on $H$} if real (possibly negative) values called {\bf \em potentials} can be assigned to the vertices of $H$ in such a way that the weight of every edge of $H$ is equal to the sum of the potentials of its endpoints. Vornberger~\cite{Vornberger1980} showed that the weighted restricted $2$-matching problem is \np-hard. However, the weighted {\bf \em bounded}  restricted $2$-matching problem is solvable in polynomial time when the weights of the edges are vertex-induced on every subgraph isomorphic to a square, which was shown by Paluch and  Wasylkiewicz~\cite{PaluchWasylkiewicz2021a}. %In this paper we focus on  variants in which $t\geq 3$. In the weighted versions we assume that the weights of edges are vertex-induced on every forbidden subgraph.

%\vspace{-0.5cm}
\subsection{Our results}
We present simple combinatorial algorithms for the weighted and unweighted versions of both the bounded  restricted $t$-matching problem and the bounded  \kpq{}-free $t$-matching problem, both for $t\geq 3$. In the weighted setting we assume that the weights of the edges are nonnegative and vertex-induced on every forbidden subgraph. In these algorithms, instead of calculating restricted or \kpq{}-free $t$-matching directly, we calculate its complement first. Such complement has fewer edges than sought-after restricted or \kpq{}-free $t$-matching which results in faster running time of the algorithm. To accomplish this, we augment some forbidden subgraphs of the input graph with gadgets containing so-called {\em half-edges} and define a function $b$ on the set of vertices in such a way that, any $b$-matching in the thus constructed graph $G'$ yields a complement of restricted or \kpq{}-free $t$-matching. 
 Half-edges have already been  introduced in \cite{PaluchEtAl2012} and used in several subsequent papers.  
The presented algorithms  are the first ones for the weighted versions of these problems. The running time of our algorithms is $\mathcal{O}(\min\{nm\log{n},n^3\})$.  Moreover, for the unweighted version, our algorithms are faster than those known previously. The algorithm for the unweighted bounded  restricted $t$-matching problem given by B{\'e}rczi and V{\'e}gh~\cite{BercziVegh2010} runs in $\mathcal{O}(t^3n+nm\log{m})$ time whereas the algorithm for the unweighted bounded  \kpq{}-free $t$-matching problem given by Kobayashi and Yin~\cite{KobayashiYin2012} runs in $\mathcal{O}(pq^3n+nm\log{m})$ time.
For the unweighted versions of both problems, the running times of our algorithms are  $\mathcal{O}(\sqrt{n}m)$. Previous approaches  to these problems relied on the shrinking  of parts of or subsets of forbidden subgraphs and   because of this   do not lend themselves to the weighted  setting. \\
\indent The algorithms presented  in this paper extend and generalize those from \cite{PaluchWasylkiewicz2021a}. The generalizations  require some new ideas as well as making observations regarding the structure.  It turns out that the assumption $t \geq 3$ allows us to simplify some arguments. The most involved is the case of $q=2$ in the bounded  \kpq{}-free $t$-matching problem.  Notice that this case was considered separately also in the previous algorithm for this problem given by Kobayashi and \mwcom{$k$-restricted variant removed} Yin~\cite{KobayashiYin2012}.
\begin{comment}
The method with gadgets and half-edges used by us offers considerable flexibility. It can be used to a variety of problems, in which we want to find a $b$-matching without some forbidden subgraphs.  To demonstrate this we apply it  \revcom{It would be nice if you can put this in a separate paragraph and point to the part of the paper where you have included it. I would not put the definitions there...}  to the following new variation of the bounded  restricted $t$-matching problem.
The instance of the $k$-restricted $t$-matching problem  for $t \geq 1$ consists of an undirected graph $G$ in which the degree of each vertex is at most $t+1$, a set $\mathcal{K}$ of pairwise edge-disjoint complete partite subgraphs of $G$ and a vector $k\in(\mathbb{N}^+)^\mathcal{K}$.   Notice that the subgraphs of  $\mathcal{K}$ do not have to be isomorphic to each other and their color classes can have different sizes. We say that a $t$-matching $M$ is {\bf \em $k$-restricted} if  for every $H\in\mathcal{K}$,  at least $k_H$ edges of $H$ do not belong to $M$. The goal of the  problem is to compute a maximum size $k$-restricted $t$-matching of $G$.  To the best of our knowledge, this problem was not considered  before.
\end{comment}

\subsection{Related work}
Restricted  $2$-matchings and restricted $t$-matchings are classical problems of combinatorial optimization. Notice that a $2$-matching is a set of vertex-disjoint cycles and paths whereas a restricted $2$-matching is a $2$-matching whose every cycle has length at least five since $K_3$ is a triangle, i.e. cycle of length three, and $K_{2,2}$ is a square, i.e. cycle of length four. Therefore, the  restricted $2$-matching problem can be used to approximate some variants of the travelling salesman problem (see~\cite{FisherEtAl1979} for details). Hartvigsen gave a complicated algorithm for the problem of finding a  maximum size triangle-free $2$-matching. Papadimitriou \cite{CornuejolsPulleyblank1980} showed that it is \np-hard to find a  maximum size $2$-matching without any cycle of length at most five. Recently, Kobayashi~\cite{Kobayashi2020} gave a polynomial algorithm for finding a maximum weight $2$-matching that does not contain any triangle from a given set of forbidden edge-disjoint triangles. Polynomial algorithms for the  square-free and/or \mwcom{maybe define "square-free" and "triangle-free"?} triangle-free $2$-matchings in subcubic graphs were presented in \cite{HartvigsenLi2011,HartvigsenLi2012,BercziKobayashi2012,BercziVegh2010,Kobayashi2010,PaluchWasylkiewicz2021a,KobayashiYin2012}. The weighted version of the restricted $2$-matching problem is \np-hard for general weights, even in graphs which are both subcubic and bipartite (see~\cite{BercziKobayashi2012}), therefore an assumption is made in all of these algorithms that the weight function is vertex-induced on every forbidden square. Such weight functions can be seen as a generalization of weight functions which assign values to the vertices instead of the edges. Regarding the square-free $2$-matching problem in general graphs, Nam~\cite{Nam1994} constructed a complex algorithm for it for graphs, in which all squares are vertex-disjoint. Finding a polynomial algorithm for the  square-free $2$-matching problem in general graphs is an open problem.

A natural generalization of the restricted $2$-matching problem is the restricted $t$-matching problem. Since the weighted version of the restricted $2$-matching problem is \np-hard already, in the weighted version of the restricted $t$-matching problem  an assumption is  typically made that the weight function is vertex-induced on every forbidden subgraph. Polynomial algorithms for restricted $t$-matchings in bipartite graphs were presented in \cite{Hartvigsen2006,Pap2007,Makai2007,Takazawa2009,PaluchWasylkiewicz2021b}. In all previous algorithms for the weighted version of the restricted $t$-matching problem  in bipartite graphs, an assumption is made that the weight function is vertex-induced on every forbidden \ktt{}.  % since this version  is \np-hard for $t=2$.
It is an interesting question to generalize the known results for the square-free and/or triangle-free $2$-matching problem in subcubic graphs to restricted $t$-matchings. Since all of these algorithms used the fact that every node of a forbidden subgraph is incident to at most one edge which does not belong to this forbidden subgraph, it is natural to consider restricted $t$-matching problem in  graphs of the maximum degree at most $t+1$, i.e. the bounded restricted $t$-matching problem.  As stated before,  the unweighted version of this problem was considered by B{\'e}rczi and V{\'e}gh~\cite{BercziVegh2010}.

It is worth mentioning that the bounded restricted $t$-matching problem has  some connections to the (vertex-)connectivity augmentation problem in which we want to make a  given undirected graph $G$  $k$-connected by adding a minimum number of different  new edges. Let $n$ be the number of the vertices of $G$.  As pointed out by B{\'e}rczi and V{\'e}gh~\cite{BercziVegh2010}, this problem is equivalent to finding a maximum size $t$-matching which contains  no complete bipartite subgraphs $K_{a,b}$ consisting of $a+b=t+2$ vertices in the complement of $G$ for $t=n-k-1$. A special case of the connectivity augmentation problem is increasing connectivity by one problem when given graph $G$ is already $(k-1)$-connected. There exists a polynomial algorithm for this case given by V\'{e}gh \cite{Vegh2011} which works in $\mathcal{O}(kn^7)$ time. Notice that the maximum degree of the complement of a $(k-1)$-connected graph is at most $t+1$ for $t=n-k-1$, therefore the increasing connectivity by one problem can be reduced to a problem of finding a maximum size $t$-matching which contains  no complete bipartite subgraphs of size $t+2$ in the graphs of the maximum degree at most $t+1$. Weighted versions of these problems are also of interest since, in the same paper, V\'{e}gh considered a generalization of the increasing connectivity by one problem when the edges of the complement of $G$ are given weights and we want to minimize the total sum of added edges. This version is \np-hard for general weights, however V\'{e}gh presented a polynomial algorithm for the weighted version if the weights are vertex-induced on the whole complement of $G$.

It turns out that the polynomial solvability of the restricted $t$-matching problem has connections to jump systems. It was conjectured by Cunningham~\cite{Cunningham2003} that for every natural number $k$, the degree sequences of $2$-matchings with no cycles of length at most $k$ of any graph form a jump system if and only if there exists a polynomial algorithm for finding such $2$-matchings of maximum size. The only open case of this conjecture is $k=4$. It was shown by Kobayashi et al.~\cite{KobayashiEtAl2012} that the degree sequences of restricted $2$-matchings of any graph form a jump system which suggests that there should exist a polynomial algorithm for the restricted $2$-matching problem in general graphs. A generalization of Cunningham's conjecture to the $t$-matchings seems to be also true.  In the same paper, Kobayashi et al. proved that the degree sequences of  restricted $t$-matchings in any graph form a jump system. In fact, they proved an even stronger result. They showed that the degree sequences of $t$-matchings with no subgraphs from a given family $\mathcal{H}$ of $t$-regular connected graphs form a jump system in any graph if and only if every member of $\mathcal{H}$ is a complete partite graph. Moreover, it was shown by Kobayashi and Yin~\cite{KobayashiYin2012} that this problem is \np-hard, even in the graphs of  maximum degree at most $t+1$, if $\mathcal{H}$ is a singleton of any $t$-regular connected graph which is not complete partite. Notice that a $t$-regular complete partite graph is exactly \kpq{} for $t=(p-1)q$, hence it is natural to consider \kpq{}-free $t$-matchings for such values of $t$.

The weighted versions of restricted $t$-matchings were studied in the context of the $M$-concave functions on constant-parity jump systems, which can be seen as discrete counterparts of the submodular functions. Kobayashi et al.~\cite{KobayashiEtAl2012} proved that the degree sequences of maximum weight restricted $t$-matchings in a bipartite graph form an $M$-concave function on the constant-parity jump system if and only if the weight function is vertex-induced on every forbidden subgraph. The algorithm of B{\'e}rczi and Kobayashi~\cite{BercziKobayashi2012} for maximum weight square-free $2$-matchings in subcubic graphs is based on the fact that the degree sequences of these $2$-matchings form an $M$-concave function in any subcubic graph if the weight function is vertex-induced on every square. On the other hand, the degree sequences of maximum weight triangle-free $2$-matchings form an $M$-concave function in any graph, which was shown by Kobayashi~\cite{Kobayashi2014}. However, it is unknown if there exists a polynomial algorithm for finding maximum weight triangle-free $2$-matchings in general  graphs.

%\vspace{-0.5cm}
\section{Preliminaries}
%\vspace{-0.35cm}
Let $G=(V,E)$ be an undirected  graph with vertex set~$V$ and edge set~$E$. We denote the number of vertices of $G$ by $n$ and the number of edges of $G$ by $m$. We assume that all graphs are {\bf \em simple}, i.e., they contain neither loops nor parallel edges. We denote an edge connecting vertices $v$ and $u$ by $(v,u)$. For a subgraph $H$ of $G$, we denote the vertex set of $H$ by $V(H)$ and the edge set by $E(H)$. Given a weight function $w:E\to\mathbb{R}$, the {\bf \em weight of $H$}, denoted by $w(H)$, is defined as the sum of weights of edges of $H$. For an edge set $F\subseteq E$ and $v\in V$, we denote by $\deg_F(v)$ the number of edges of~$F$ incident to $v$. With a slight abuse of notation, for a graph $H$, we sometimes write $\deg_H(v)$ instead of $\deg_{E(H)}(v)$. For any natural number $k$, we say that $G$ is $k$-regular if every vertex of $G$ has degree exactly $k$.
For a vertex set $A\subseteq V$, we denote by $G[A]$ a subgraph of $G$ induced \revcom{The grammar problems are mostly about the wrong use or false omission of definite and/or indefinite articles} by $A$.  Given two graphs $H_1=(V_1,E_1)$ and $H_2=(V_2,E_2)$, we denote by $H_1 \cap H_2$ a graph $(V_1 \cap V_2, E_1 \cap E_2)$. With a slight abuse of notation, we denote that $H_1$ and $H_2$ are isomorphic by $H_1=H_2$.

\begin{comment}
For a natural number $t$, we say that an edge set $F\subseteq E$ is a {\bf \em $t$-matching} (resp. {\bf \em $t$-factor}) if $\deg_{F}(v)\leq t$ (resp. $\deg_{F}(v) = t$) for every $v\in V$. 
$t$-matchings (resp. $t$-factors) belong to a wider class of $b$-matchings (resp. $b$-factors), where for every vertex $v$ of $G$, we are given a natural number $b(v)$ and a subset of edges is a {\bf \em $b$-matching} if every vertex $v$ is incident to at most (resp. exactly) $b(v)$ of its edges. We say that a $b$-matching $M$ of $G$ has {\bf \em maximum weight} if there is no $b$-matching of $G$ of weight greater than $w(M)$. A $b$-matching of $G$ of maximum weight can be computed in polynomial time. We refer to Lov\'asz and Plummer~\cite{LovaszPlummer2009} for further background on $b$-matchings.
For a weight function $w:E\to\mathbb{R}$ and a subgraph $H$ of $G$, we say that $w$ is {\bf \em vertex-induced on $H$} if there exists a function $r:V(H)\to\mathbb{R}$ such \mwcom{notice that vertex-induced is a generalization of node weights} that $w(u,v)=r(u)+r(v)$ for every edge $(u,v)$ of $H$. We call $r$ a {\bf \em potential function} of $H$.
\end{comment}

%We denote the empty graph on $n$ vertices by $I_n$. For graphs $G_1=(V_1,E_1)$ and $G_2=(V_2,E_2)$ such that $V_1$ and $V_2$ are disjoint, we say that a graph $G_1\times G_2=(V_1\cup V_2,E_1\cup E_2 \cup (V_1\times V_2))$ is a {\bf \em product} of $G_1$ and $G_2$.

An instance of each of the two problems that we consider in the paper consists of an undirected graph~$G=(V,E)$ whose every vertex has degree at least one  and at most $t+1$, and a weight function $w:E\to\mathbb{R}_{\geq 0}$. In the bounded  $K^p_q$-free $t$-matching problem we are also given natural numbers $p\geq 2$, $q\geq 1$ such that $t=(p-1)q$. In the bounded  restricted (resp. $K^p_q$-free) $t$-matching problem we assume that $w$ is vertex-induced on every \kt{} and \ktt{} (resp. on every $K^p_q$) of $G$ and the goal is to find a maximum weight restricted (resp. $K^p_q$-free) $t$-matching of~$G$.

For  a weight function $w:E\to\mathbb{R}$ which is vertex-induced on some subgraph $H$ of $G$, we say that a function $r_H: V(H)\to\mathbb{R}$ which assigns a potential to every vertex of $H$ is a {\bf \em potential function} of $H$.

We say that an edge set $M\subseteq E$ is a {\bf \em co-$t$-matching} of $G$ if $E\setminus M$ is a $t$-matching of $G$. Additionally, we say that a co-$t$-matching $M\subseteq E$ is {\bf \em covering} (resp. {\bf \em \kpq{}-covering}) if $E\setminus M$ is a restricted (resp. \kpq{}-free) $t$-matching of $G$. We say that an edge set $M\subseteq E$ {\bf \em covers} a forbidden subgraph $H$ of $G$, or that $H$ {\bf \em is covered by} $M$, if $M$ contains at least one edge of $H$. Notice that if the degree of every vertex of $G$ is at most $t+1$, then an edge set $M\subseteq E$ is a co-$t$-matching of $G$ if every vertex of $G$ of degree $t+1$ is incident to at least one edge of $M$. Moreover, $M$ is a \kpq{}-covering (resp. covering) co-$t$-matching if, in addiction, $M$ covers every \kpq{} (resp. every \kt{} and \ktt{}) of $G$. A (\kpq{}-)covering co-$t$-matching $M$ of $G$ is said to be {\bf \em minimum weight} if there is no (\kpq{}-)covering co-$t$-matching $N$ of~$G$ of weight smaller than $w(M)$.

We will need to compute a $b$-matching of a graph $G$ where we are given vectors $l,b\in\mathbb{N}^V$ and a weight function~$w:E\to\mathbb{R}$. (We allow negative weights here.) For a vertex $v\in V$, $[l(v),b(v)]$ is said to be a {\bf \em capacity interval} of $v$. An edge set~$M\subseteq E$ is said to be an \lbmatching{} if $l(v) \leq \deg_M(v) \leq b(v)$ for every $v\in V$. Given an \lbmatching{} $M$ and an edge $e=(u,v) \in M$, we say that $u$ is {\bf \em matched to} $v$ in $M$. Moreover, we say that a vertex $v$ of $G$ is {\bf \em unmatched} in $M$ if $v$ is not incident to any edge of $M$. A {\bf \em matching} and a {\bf \em perfect matching} of $G$ is any \lbmatching{} of $G$ where the capacity interval of every vertex $v$ of $G$ is equal to, respectively, $[0,1]$ and $[1,1]$. An \lbmatching{} $M$ is said to be a {\bf \em maximum weight} \lbmatching{} if there is no \lbmatching{}~$M'$ of~$G$ of weight greater than $w(M)$. A maximum weight \lbmatching{} can be computed efficiently.

\begin{theorem}[\cite{Gabow1983}]\label{thm:gabow}
There is an algorithm that, given a multigraph~$G=(V,E)$ (i.e. $G$ may contain parallel edges), a weight function~$w:E\to\mathbb{R}$ and vectors $l,b\in\mathbb{N}^V$, finds a maximum weight \lbmatching{} of $G$ in  time $O\left((\sum_{v\in V}b(v))\min\{|E|\log{|V|},|V|^2\}\right)$, assuming that $G$ admits any \lbmatching{}.
\end{theorem}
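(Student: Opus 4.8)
The plan is to reduce the maximum weight \lbmatching{} problem to an ordinary maximum weight perfect matching problem on an auxiliary multigraph $\hat G$, to solve the latter with a known blossom-type algorithm, and to translate the resulting perfect matching back into an \lbmatching{} of $G$. First I would preprocess $G$: delete every vertex $v$ with $b(v)=0$ together with its incident edges, and replace $b(v)$ by $\min\{b(v),\deg_G(v)\}$; under the feasibility hypothesis this changes neither feasibility nor the optimum, and it ensures $\sum_v b(v)=O(|E|)$ and $|V|\le\sum_v b(v)$. Next I would apply a classical vertex- and edge-splitting gadget that turns the interval constraint $l(v)\le\deg_M(v)\le b(v)$ at each vertex into degree-$1$ (perfect-matching) constraints: each vertex $v$ is expanded into a small group of vertices — $b(v)$ ``demand'' copies together with a bounded number of auxiliary slots per incident edge — and each original edge $e=(u,v)$ of weight $w(e)$ is routed through a gadget whose edge weights are chosen (one fixed edge carrying $w(e)$, the rest carrying $0$) so that perfect matchings of $\hat G$ correspond, in a weight-preserving way, to \lbmatching{}s of $G$. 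Such a reduction is classical — it goes back to Tutte's reduction of $b$-factors to $1$-factors, and its refinements handle both the interval $[l(v),b(v)]$ and arbitrary (including negative) edge weights — so I would take it as known; the role of the feasibility hypothesis is exactly to guarantee that $\hat G$ has a perfect matching, so the matching routine never has to report infeasibility. Finally I would run the maximum weight perfect matching algorithm of Gabow (equivalently Galil--Micali--Gabow), which on an $\hat n$-vertex, $\hat m$-edge graph runs in $O(\hat n\min\{\hat m\log\hat n,\hat n^2\})$ time.

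The step I expect to be the main obstacle — and, in essence, the actual content of \cite{Gabow1983} — is making the reduction economical enough that this substitution yields the stated bound rather than a weaker one. The textbook gadget spends a constant number of new vertices per edge of $G$, so $\hat G$ would have $\Theta(|E|)$ vertices and the matching routine would cost about $O(|E|^2)$, which is too much. The remedy is to share gadget material at each endpoint so that the total number of auxiliary vertices is only $O(\sum_v b(v))$: at a vertex $v$ of degree $d$ one keeps $b(v)$ demand copies plus a ``foldable'' structure recording which $b(v)$ of the $d$ incident slots are selected, and one can arrange this with $O(\sum_v b(v))$ vertices overall and with the number of auxiliary edges bounded by a constant times $|E|$. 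For dense inputs one additionally prunes redundant parallel edges — within a bundle of parallel copies of an edge only boundedly many can ever be used by one matching, so the rest are discarded — which bounds the auxiliary edge count by $O(\hat n^2)$ as well, and then one uses the $O(\hat n^3)$ dense variant of weighted matching. Taking $\hat n=O(\sum_v b(v))$ and combining the two regimes gives the claimed running time $O\!\left((\sum_v b(v))\min\{|E|\log|V|,|V|^2\}\right)$.

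A couple of details I would verify along the way. Negative weights cause no trouble once we insist on a \emph{perfect} matching of $\hat G$: every feasible \lbmatching{} still corresponds to a perfect matching of $\hat G$, and among all perfect matchings the algorithm returns one of maximum weight, which is exactly what is wanted; in particular the zero-weight gadget edges allow a vertex $v$ to use as few as $l(v)$ real edges whenever the remaining incident edges are disadvantageous. Parallel edges need no special handling beyond the pruning above — each parallel copy simply gets its own gadget — and it is worth noting that this multigraph allowance is precisely what the later constructions in this paper require, since the gadgets with half-edges there naturally produce parallel edges.
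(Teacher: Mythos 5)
Your overall skeleton --- reduce the \lbmatching{} problem to maximum weight perfect matching via a Tutte-type vertex/edge-splitting gadget and then run a blossom algorithm --- is the same route the paper follows when it recounts Gabow's construction in Subsection~\ref{subsec:running-weighted} (two copies of the graph joined by $b(v)-l(v)$ paths of length three to encode the lower bounds, then substitutes isomorphic to $K_{d(v),d(v)-b^*(v)}$ to pass to a perfect matching instance). However, the step on which you place the entire burden of the running time does not work. You propose to compress the auxiliary graph so that it has only $\hat n=O(\sum_v b(v))$ vertices and then invoke the matching algorithm as a black box. No such compression exists: in a Tutte-type reduction every edge incident to $v$ needs its own external slot vertex (otherwise distinct incident edges of $G$ compete for the same matching partner and the correspondence with \lbmatching{}s breaks), so the substitute of $v$ necessarily has $\Theta(\deg_G(v))$ vertices and the auxiliary graph has $\Theta(|E|)$ vertices, not $O(\sum_v b(v))$. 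Moreover, even granting $\hat n=O(\sum_v b(v))$, the arithmetic does not yield the stated bound: the dense branch would give $\hat n\cdot\hat n^2=\left(\sum_v b(v)\right)^3$ rather than $\left(\sum_v b(v)\right)|V|^2$, and in general $\hat n^2$ is not $O(|V|^2)$.

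The mechanism that actually delivers the bound (and is the content of the citation, as the paper presents it) is different: the auxiliary graph is never built explicitly. One starts the blossom algorithm from an initial \emph{extreme} matching in which every internal vertex is already matched, so the number of unmatched vertices --- hence the number of augmentation phases --- is only $O(\sum_v b(v))$; and each phase is \emph{simulated} on the substitutes so that it costs $O(\min\{|E|\log{|V|},|V|^2\})$ measured in the parameters of the original graph, even though the auxiliary graph has far more edges (up to $\Theta(\sum_v \deg_G(v)^2)$). The product of the phase count and the per-phase cost is exactly the claimed running time. Your proposal contains neither the phase-counting argument nor the simulation, so the complexity claim is unsupported; the remaining parts (the preprocessing, handling negative weights by insisting on a perfect matching, and the harmlessness of parallel edges) are fine.
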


We define a {\bf \em minimum weight} \lbmatching{} of $G$ analogously. Notice that calculating a minimum weight \lbmatching{} of $G$ can be reduced to calculating a maximum weight \lbmatching{} of $G$ by negating the weight of every edge of $G$.

%Notice that since $\sum_{v\in V}u(v)$ can be bounded by $n^2$ (because $(l,u)$-matchings are simple) the running time of the above algorithm  is strongly polynomial. 

%\vspace{-0.5cm}

\section{Outline of the Algorithm}\label{sec:outline}
We give an outline of our algorithms for the weighted versions of the bounded  restricted and the bounded  \kpq{}-free $t$-matching problem assuming that the weight function  is vertex-induced on every forbidden subgraph.  Recall that we assume that $t\geq 3$. In the \kpq{}-free $t$-matching problem we additionally assume that $t=(p-1)q$.

%\vspace{-1cm}

\begin{algorithm}[H]
%\hspace*{\algorithmicindent} \textbf{Input:} integer $t\geq 3$, undirected graph $G=(V,E)$ in which every vertex has degree at most $t+1$, weight function $w:E\to\mathbb{R}_{\geq 0}$ which is vertex-induced on every forbidden subgraph of $G$ \\
%\hspace*{\algorithmicindent} \textbf{Output:} a maximum weight restricted $t$-matching or \kpq{}-free $t$-matching of $G$
\begin{algorithmic}[1]
	\State \label{itm:step1}Construct an auxiliary multigraph~$G'=(V',E')$ consisting of $O(n)$ vertices and $O(m)$ edges by augmenting some forbidden subgraphs of $G$ with gadgets containing 
	half-edges. (Both gadgets and half-edges are defined later.)
	\State \label{itm:step2}Define a weight function~$w':E'\to\mathbb{R}$ and vectors $l,b\in\mathbb{N}^{V'}$.
	\State \label{itm:step3}Compute a minimum weight \lbmatching{} $M'$ of $G'$.
	\State \label{itm:step4}Construct a co-$t$-matching~$\mc$ of~$G$ by replacing all half-edges of $M'$ with some edges of~$G$ in such a way that $w(\mc) \leq w'(M')$.
	\State \label{itm:step5}Add some edges of the remaining forbidden subgraphs to $\mc$ by replacing some of its edges with other ones without increasing the weight of $\mc$.
	\State Return $E\setminus\mc$.
\end{algorithmic}
\caption{Computing a maximum weight restricted or \kpq{}-free $t$-matching of graph~$G$ of maximum degree at most $t+1$ given a weight function \mbox{$w:E\to\mathbb{R}_{\geq 0}$}.}
\label{alg:main}
\end{algorithm}

%\vspace{-0.5cm}

The precise \revcom{Section 3 is difficult to read without the details of the algorithm, which is described later in the paper.
Especially, it is impossible to follow the proof of Claim on the complexity of the algorithm at this point.
I would suggest to reconsider the organization of the paper.} construction of $G'$, $w'$, $l$ and $b$ for the bounded  restricted (resp. \kpq{}-free) $t$-matching problem is given in Section~\ref{sec:cliques} (resp. Section~\ref{sec:partite}).

For the bounded  restricted $t$-matching problem, an implementation of Step~\ref{itm:step4} is given in Theorem~\ref{thm:cliques-matching-to-opt} whereas an implementation of Step~\ref{itm:step5} consists of subsequent applications of Lemma~\ref{lem:cliques-removing}. An implementation of these steps for the bounded  \kpq{}-free $t$-matching problem is given in Theorem~\ref{thm:partite}.

\begin{claim}\label{fact:running-time}
 Algorithm~\ref{alg:main}  runs in time $\mathcal{O}(\min\{nm\log{n},n^3\})$ in the weighted variant and $\mathcal{O}(\sqrt{n}m)$ in the unweighted.
\end{claim}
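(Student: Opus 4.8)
The statement is about the running time of Algorithm~\ref{alg:main}, so the proof is essentially an accounting exercise: bound the cost of each of the five steps and take the maximum. The dominant step will be Step~\ref{itm:step3}, the computation of a minimum weight \lbmatching{} of $G'$, so the heart of the argument is to invoke Theorem~\ref{thm:gabow} with the right parameters. First I would recall from Step~\ref{itm:step1} that $G'$ has $|V'|=\mathcal{O}(n)$ vertices and $|E'|=\mathcal{O}(m)$ edges; this is asserted in the statement of Step~\ref{itm:step1} but for the running-time bound I also need that the capacities stay small, namely that $\sum_{v\in V'} b(v)=\mathcal{O}(n)$. This follows because each original vertex of $G$ contributes $b(v)\le \deg_G(v)\le t+1$ to the co-$t$-matching relaxation, but more importantly because a co-$t$-matching / covering co-$t$-matching is what we are computing — its size is $\mathcal{O}(n)$ since we only need one covering edge per vertex of degree $t+1$ plus one per forbidden subgraph, and the forbidden subgraphs attached with gadgets are vertex-disjoint enough (each vertex lies in $\mathcal{O}(1)$ of them after the degree bound $t+1$) that their number is $\mathcal{O}(n)$ as well. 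Hence $\sum_v b(v)=\mathcal{O}(n)$, and Theorem~\ref{thm:gabow} gives a running time of $\mathcal{O}\!\left(n\cdot\min\{m\log n,\; n^2\}\right)=\mathcal{O}(\min\{nm\log n,\; n^3\})$ for Step~\ref{itm:step3}.

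Next I would dispatch the remaining steps. Steps~\ref{itm:step1} and~\ref{itm:step2} only build $G'$, $w'$, $l$, $b$ by local modifications at each forbidden subgraph and each vertex, so they take $\mathcal{O}(n+m)$ time once we know there are $\mathcal{O}(n)$ gadgets and $\mathcal{O}(m)$ edges; here I would point to the construction in Sections~\ref{sec:cliques} and~\ref{sec:partite} for the claim that the gadgets can be enumerated in linear time (finding all \kt{}, \ktt{}, or \kpq{} subgraphs in a graph of maximum degree $t+1$ takes $\mathcal{O}(n)$ time for fixed $t$, since each vertex has $\mathcal{O}(1)$ neighbours). Step~\ref{itm:step4} replaces half-edges by real edges; by the implementations referenced (Theorem~\ref{thm:cliques-matching-to-opt} and Theorem~\ref{thm:partite}) this is a single pass over $M'$ and the gadgets, hence $\mathcal{O}(n+m)$. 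Step~\ref{itm:step5} is a bounded number of applications of Lemma~\ref{lem:cliques-removing} (resp.\ the analogous procedure in Theorem~\ref{thm:partite}): each application augments $\mc$ along one remaining forbidden subgraph and there are $\mathcal{O}(n)$ of them, each costing $\mathcal{O}(1)$ for fixed $t$, so again $\mathcal{O}(n)$ overall. The final return is $\mathcal{O}(m)$. Summing, the total is dominated by Step~\ref{itm:step3}, giving $\mathcal{O}(\min\{nm\log n,\;n^3\})$ in the weighted case.

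For the unweighted case, the weight function $w'$ is constant (or uniform on real edges and $0$ on half-edges after the reduction), so in Step~\ref{itm:step3} a \emph{maximum-cardinality} \lbmatching{} suffices. With $\sum_v b(v)=\mathcal{O}(n)$ this can be reduced to a maximum-cardinality ordinary matching in a graph with $\mathcal{O}(n)$ vertices and $\mathcal{O}(m)$ edges (replace each vertex $v$ by $b(v)$ copies, each edge by $\mathcal{O}(1)$ edges, all bounded since $b(v)\le t+1$), and the Micali–Vazirani / Hopcroft–Karp-type bound gives $\mathcal{O}(\sqrt{n}\,m)$. All other steps remain $\mathcal{O}(n+m)=\mathcal{O}(m)$ (every vertex has degree $\ge 1$, so $m\ge n/2$), so the overall bound is $\mathcal{O}(\sqrt{n}\,m)$.

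**Main obstacle.** The only nontrivial point is justifying $\sum_{v\in V'} b(v)=\mathcal{O}(n)$, i.e.\ that the auxiliary graph does not blow up: one must check that the number of forbidden subgraphs receiving a gadget is $\mathcal{O}(n)$ (not $\mathcal{O}(n^{t})$ or similar) and that each gadget adds only $\mathcal{O}(1)$ vertices with $\mathcal{O}(1)$ total capacity. Both rely on the degree bound $\Delta(G)\le t+1$, which forces each vertex into a constant number of $t$-regular complete partite subgraphs, and on the specific (bounded-size) gadget construction of Sections~\ref{sec:cliques}–\ref{sec:partite}; I would cite those constructions for this fact rather than re-derive it here.
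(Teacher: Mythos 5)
Your accounting framework is right, but the central quantitative claim on which everything rests --- that $\sum_{v\in V'}b(v)=\mathcal{O}(n)$ --- is false, and this is precisely the difficulty the paper's proof is built to overcome. The capacity interval of a vertex $v$ of $G$ of degree $t+1$ is $[1,t+1]$ and of every other vertex $[0,\deg_G(v)]$, so $\sum_{v\in V'}b(v)=\Theta(m)$; since $t$ is a parameter of the problem (not a constant --- note the prior running times $\mathcal{O}(t^3n+nm\log m)$ quoted in the introduction, and that $m$ can be $\Theta(nt)$), this is not $\mathcal{O}(n)$. Your argument conflates the size of the \emph{optimal} covering co-$t$-matching (which is indeed $\mathcal{O}(n)$) with the upper capacities $b(v)$ that govern Gabow's running time: Theorem~\ref{thm:gabow} charges for $\sum_v b(v)$ phases regardless of how small the optimum turns out to be. A direct application of Theorem~\ref{thm:gabow} therefore yields only $\mathcal{O}(\min\{m^2\log n,\,n^2m\})$, which the paper states explicitly before spending all of Subsection~\ref{subsec:running-weighted} repairing it. The repair is not a citation but a modification of the algorithm's internals: one warm-starts the phase-based weighted matching computation with an explicitly constructed maximum-weight (hence extreme) matching $M_1$ of the auxiliary graph $\gm$ that leaves only $\mathcal{O}(n)$ vertices unmatched, so that only $\mathcal{O}(n)$ augmentation phases remain. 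In the unweighted case the analogous fix is Lemma~\ref{lem:running-bounded-degree}: one first exhibits some \lbmatching{} $M'$ with $\mathcal{O}(n)$ edges, proves a minimum-cardinality \lbmatching{} exists with degrees dominated by $M'$, and only then runs the cardinality algorithm with the shrunken capacities $b'(v)=\deg_{M'}(v)$. Without one of these devices the claimed bounds $\mathcal{O}(\min\{nm\log n,n^3\})$ and $\mathcal{O}(\sqrt{n}m)$ do not follow.

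A secondary gap of the same flavour: you enumerate the forbidden subgraphs in "linear time for fixed $t$, since each vertex has $\mathcal{O}(1)$ neighbours." Again $t$ is not fixed, and a vertex has up to $t+1$ neighbours; naively testing whether a candidate vertex set spans a $t$-regular complete partite graph costs $\Theta(t^2)$, so one must argue that only $\mathcal{O}(n)$ candidates are ever tested and that the amortized cost is $\mathcal{O}(t)$ per vertex, i.e.\ $\mathcal{O}(m)$ total. This is the content of Subsection~\ref{sec:running-kpq} (Algorithm~\ref{alg:find-kpq-main} together with Lemmas~\ref{lem:find-kpq-naive} and~\ref{lem:find-kpq-friend}) and cannot be waved through by a bounded-degree remark. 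The remaining steps of your accounting (Steps~\ref{itm:step1}, \ref{itm:step2}, \ref{itm:step4}, \ref{itm:step5} being linear once the forbidden subgraphs and their potentials are in hand) are essentially in line with the paper.
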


We prove Claim~\ref{fact:running-time} in Section~\ref{sec:running}.

%\vspace{-0.5cm}
\begin{comment}
\dowod
We prove in Section~\ref{sec:running} that all steps of Algorithm~\ref{alg:main}~except Step~\ref{itm:step3} can be implemented to run in $\mathcal{O}(m)$ time. The running time of computing a maximum weight \lbmatching{} of~$G'$ is equal to $O((\sum_{v\in V'}b(v))\min\{|E'|\log{|V'|},|V'|^2\})$. Recall that  $|V'|=O(m)$, $|E'|=O(m)$ and $\sum_{v\in V'}b(v)=\mathcal{O}(m)$, so Step~\ref{itm:step3} runs in $\mathcal{O}(m^2\log{m})$ time.

In the unweighted case, in Step~\ref{itm:step3} we compute a maximum cardinality \lbmatching{} of $G'$, which runs in  $O(\sqrt{\sum_{v\in V'}b(v)}|E'|)=\mathcal{O}(m^{3/2})$ time, as presented by Gabow~\cite{Gabow1983}.
\end{comment}

%To build the graph $G'$ in Step~\ref{itm:step1} and to execute Step~\ref{itm:step5}, we need to find all forbidden subgraphs of $G$. This can be done in $\mathcal{O}(t^2n)$ time using the observation that the vertex sets of any two different forbidden subgraphs of $G$ are either disjoint or almost equal. We do not store all forbidden subgraphs of $G$ explicitly since their total size can be equal to $\Theta(t^3n)$ (see Claim 5.1. in~\cite{BercziVegh2010}). We omit details due to space constraints, but they are provided in the appendix.

\tikzset{
	kt/.pic=
	{
		\tikzmath{\a=3;}
		\foreach \i in {0,...,5}
		{
			\node (v\i)	at (\i*60:\a)	[bigVertex]{};
			\draw[e0] (v\i) -- (\i*60:1.5*\a);
		}
		\node also [label={[font=\Huge]left:$v_1$}] (v2);
		\node also [label={[font=\Huge]right:$v_2$}] (v1);
		\foreach \i in {0,...,5}
			\foreach \j in {0,...,5}
			{
				\ifnum \i = \j {}
				\else
					\draw[e0] (v\i) -- (v\j);
				\fi
			}
	},
	ktGadget/.pic=
	{
		\tikzmath{\a=2;}
		\foreach \i in {0,...,5}
			\node (v\i) at (\i*60:3*\a)	[bigVertex]{};
		\node also [label={[font=\Huge]left:$v_1$}] (v2);
		\node also [label={[font=\Huge]right:$v_2$}] (v1);
		\node (z) at (0,0) [vertex,label={[font=\Huge,xshift=5]below right:$u_H$}]{};
		\foreach \i in {0,...,5}
			\draw[e0] (z) -- (v\i);
	},
	ktt/.pic=
	{
		\tikzmath{\a=3;}
		\begin{scope}[shift={(0,-\a)}]
		\foreach \k in {0,1}
			\foreach \i in {0,...,2}
			{
				\tikzmath{int \s; \s = 2*\k-1;}
				\node (v\k\i) at (\s*\a,\i*\a)	[bigVertex]{};
				\draw[e0] (v\k\i) -- (\s*\a+\s*\a/2,\i*\a);
			}
		\node also [label={[font=\Huge]above:$a_1$}] (v02);
		\node also [label={[font=\Huge]above:$b_1$}] (v12);
		\foreach \i in {0,...,2}
			\foreach \j in {0,...,2}
				\draw[e0] (v0\i) -- (v1\j);
		\end{scope}
	},
	kttGadget/.pic=
	{
		\tikzmath{\a=6; \b=5;}
		\begin{scope}[shift={(0,-\b)}]
		\foreach \k in {0,1}
			\foreach \i in {0,...,2}
			{
				\tikzmath{int \s; \s = 2*\k-1;}
				\node (v\k\i) at (\s*\a,\i*\b)	[bigVertex]{};
			}
		\node also [label={[font=\Huge]left:$a_1$}] (v02);
		\node also [label={[font=\Huge]right:$b_1$}] (v12);
		\foreach \k in {0,1}
		{
			\tikzmath{int \t; \t=1-\k;}
			\node (u\k) at ($(v\k1)!0.3!(v\t1)$) [vertex]{};
			\foreach \i in {0,...,2}
				\draw[e0] (v\k\i) -- (u\k);
		}
		\node also [label={[font=\Huge]right:$u_H^1$}] (u0);
		\node also [label={[font=\Huge]left:$u_H^2$}] (u1);
		\end{scope}
	}
}

\tikzset{
	 en/.style={line width=0.8pt,decoration={
            complete sines,
            segment length=5,
            amplitude=2
        },
    decorate},
	pics/flipping/.style n args={9}{code={
		\tikzmath{\a=1;}
		\coordinate (-center) at (0,0);
		\begin{scope}[font=\Huge,shift={(0,-\a)}]
		\foreach \i in {0,1}
		{
			\node (v0) at (-4*\a,0) [vertex,label=below:$a$]{};
			\node (v1) at (4*\a,0) [vertex,label=below:$b$]{};
			
			\draw[en] (v0) -- (-5*\a,-0.8*\a);
			\draw[e0] (v1) -- (5*\a,-0.8*\a);
			
			\node (u0) at (-\a,3*\a) [vertex,label=above:$u_H^1$]{};
			\node (u1) at (\a,3*\a) [vertex,label=above:$u_H^2$]{};
			
			\node (h00) at (-2*\a,0) [vertex,label=below:$v^a_b$]{};
			\node (h10) at (2*\a,0) [vertex,label=below:$v_b^a$]{};
			
			\foreach \j in {1,...,3}
				\node (h0\j) at ($(-4*\a,0)+(30*\j:2*\a)$) [vertex]{};
			\foreach \j in {1,...,3}
				\node (h1\j) at ($(4*\a,0)-(-30*\j:2*\a)$) [vertex]{};
			
			\draw[#1] (h00) -- (h10);
			
			\draw[#2] (v0) -- (h00);
			\draw[#4] (v0) -- (h01);
			\draw[en] (v0) -- (h02);
			\draw[en] (v0) -- (h03);
			
			\draw[#6] (u0) -- (h00);
			\draw[#8] (u0) -- (h01);
			\draw[e0] (u0) -- (h02);
			\draw[e0] (u0) -- (h03);
			
			\draw[#3] (v1) -- (h10);
			\draw[en] (v1) -- (h11);
			\draw[#5] (v1) -- (h12);
			\draw[e0] (v1) -- (h13);
			
			\draw[#7] (u1) -- (h10);
			\draw[e0] (u1) -- (h11);
			\draw[#9] (u1) -- (h12);
			\draw[e0] (u1) -- (h13);
			
			\draw[en] (h13) -- +(0,2*\a);
		}
		\end{scope}
	}},
	pics/flip1/.style n args={4}{code=
	{
		\tikzmath{\a=1;\b=2;}
		\begin{scope}[font=\Huge,shift={(0,-1.5*\b)}]
		\node (a) at (0,3*\b)		[bigVertex,label=above:$a$]{};
		\node (b) at (3*\a,0)		[bigVertex,label=below:$b$]{};
		\node (c) at (0,0)			[bigVertex,label=below:$a'$]{};
		\node (vca) at (0,\b)		[vertex]{};
		\node (vac) at (0,2*\b)		[vertex]{};
		\node (vab) at (\a,2*\b)	[vertex]{};
		\node (vba) at (2*\a,\b)	[vertex]{};
		
		\draw[en] (c) -- (vca);
		\draw[e0] (vca) -- (vac);
		\draw[#1] (vac) -- (a);
		\draw[#2] (a) -- (vab);
		\draw[#3] (vab) -- (vba);
		\draw[#4] (vba) -- (b);
		
		\end{scope}
	}},
	flip1a/.pic=
	{
		\tikzmath{\a=1;}
		\coordinate (-center) at (0,0);
		\pic (f) {flip1={e0}{en}{e0}{e0}};
		\draw[en] (fvac) -- +(-1.5*\a,0);
		\draw[en] (fvba) -- +(1.5*\a,0);
	},
	flip1b/.pic=
	{
		\coordinate (-center) at (0,0);
		\pic (f) {flip1={en}{e0}{en}{e0}};
	},
	pics/flip2/.style n args={5}{code=
	{
		\tikzmath{\a=1;\b=2;}
		\begin{scope}[font=\Huge,shift={(0,-1.5*\b)}]
		\node (a) at (0,3*\b)		[bigVertex,label=above:$a$]{};
		\node (b) at (3*\a,0)		[bigVertex,label=below:$b$]{};
		\node (c) at (0,0)			[bigVertex,label=below:$a'$]{};
		\node (d) at (3*\a,3*\b)	[bigVertex,label=above:$b'$]{};
		\node (vca) at (0,\b)		[vertex]{};
		\node (vac) at (0,2*\b)		[vertex]{};
		\node (vab) at (\a,2*\b)	[vertex]{};
		\node (vba) at (2*\a,\b)	[vertex]{};
		\node (vbd) at (3*\a,\b)	[vertex]{};
		\node (vdb) at (3*\a,2*\b)	[vertex]{};
		
		\draw[en] (c) -- (vca);
		\draw[e0] (vca) -- (vac);
		\draw[#1] (vac) -- (a);
		\draw[#2] (a) -- (vab);
		\draw[#3] (vab) -- (vba);
		\draw[#4] (vba) -- (b);
		\draw[#5] (b) -- (vbd);
		\draw[e0] (vbd) -- (vdb);
		\draw[en] (vdb) -- (d);
		\end{scope}
	}},
	flip2a/.pic=
	{
		\tikzmath{\a=1;}
		\coordinate (-center) at (0,0);
		\pic (f) {flip2={e0}{en}{e0}{en}{e0}};
		\draw[en] (fvac) -- +(-1.5*\a,0);
		\draw[en] (fvbd) -- +(1.5*\a,0);
	},
	flip2b/.pic=
	{
		\coordinate (-center) at (0,0);
		\pic (f) {flip2={en}{e0}{en}{e0}{en}};
	}
}

%\vspace{-0.5cm}

\section{The bounded restricted $t$-matching  problem} \label{sec:cliques}

%\vspace{-0.25cm}

In this section we address the weighted bounded restricted $t$-matching  problem. Recall that we assume that $t\geq 3$ and the nonnegative weight function $w$ is vertex-induced on every \kt{} and \ktt{} of $G$. \\
\indent We build a graph $G'=(V', E')$ together with a weight function \mbox{$w':E'\to\mathbb{R}$}, in which subgraphs, called  {\bf \em gadgets}, are added to some forbidden subgraphs of $G$. The precise construction of gadgets is the following. \\
\indent Let $H$ be any $K_{t+1}$ of $G$. We define a {\bf \em gadget for $H$}. Let $v_1$, $v_2$, \ldots, $v_{t+1}$ be vertices of $H$. We introduce one new vertex $u_H$ called a {\bf \em subdivision vertex}. For each vertex $v_i$ of $H$ we add edge $(v_i,u_H)$ called a {\bf \em half-edge} of $H$. We set the capacity interval of $u_H$ to $[2,2]$.

%\vspace{-0.5cm}

\begin{figure}[htpb]
\centering
\begin{tikzpicture}[scale=0.35,transform shape]
	\pic (arrow) [scale=0.5]{transformsTo};
	\pic (H) [left=7cm of arrow-leftEnd] {kt};
	\pic (g) [right=9cm of arrow-rightEnd] {ktGadget};
\end{tikzpicture}
\caption{A gadget for  \kt{} for $t=5$.}
\end{figure}
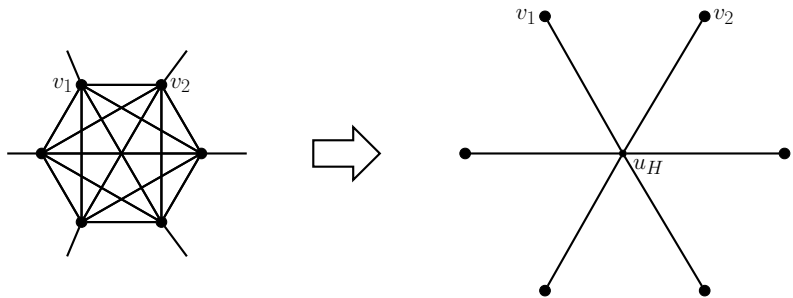

%The main ideas behind the gadget for a problematic triangle $t=(a,b,c)$ are the following. An \lbmatching{} $M'$ of $G'$ is to represent roughly a triangle-free $2$-matching $M$ of $G$. If $M'$ contains both half-edges of some edge $e$, then $e$ is included in $M$. If $M'$ contains an eliminator of $e$, then $e$ does not belong to $M$ (is excluded from $M$).  We want to ensure that at least one edge of $t$ does not belong to $M$. This is done by requiring that two of the global vertices $u_a, u_b, u_c$ are matched to subdivision vertices. In this way two half-edges of $t$ are guaranteed not to belong to $M'$ and hence to $M$.

%\vspace{-0.5cm}

Let $H$ be any \ktt{} of $G$. We define a {\bf \em gadget for $H$}. Let $a_1$, $a_2$, \ldots, $a_t\in V_1(H)$  and $b_1$, $b_2$, \ldots, $b_t \in V_2(H)$. We introduce two new subdivision vertices $u_H^1$ and $u_H^2$. We connect $u_H^1$ with all vertices of $V_1(H)$ by half-edges. Symmetrically, we connect $u_H^2$ with all vertices of $V_2(H)$ by half-edges. We set the capacity intervals of $u_H^1$ and $u_H^2$ to $[1,1]$.

The main ideas behind  these gadgets  are the following. An \lbmatching{} $M'$ of $G'$ is to represent roughly a covering co-$t$-matching $\mc$ of $G$. We want to ensure that at least one edge of $H$ belongs to $\mc$. Notice that $M'$ contains exactly two half-edges of $H$ incident to some vertices $z_1$ and $z_2$ of $H$. We add an edge $(z_1,z_2)$ to $\mc$. Moreover, we add to $\mc$ all edges of $H$ which belong to $M'$. In this way, $H$ is guaranteed to be covered by $\mc$.

Regarding weights  $w'$ of the edges in the gadgets, we assign them as follows. 
Let $r_H$ be any potential function of $H$.
Every half-edge of $H$ incident to vertex $v_i$ gets weight $r_H(v_i)$.

\begin{figure}[htpb]
\centering
\begin{tikzpicture}[scale=0.35,transform shape]
	\pic (arrow) [scale=0.5]{transformsTo};
	\pic (H) [left=7cm of arrow-leftEnd] {ktt};
	\pic (g) [right=11cm of arrow-rightEnd] {kttGadget};
\end{tikzpicture}
\caption{A gadget for  \ktt{} for $t=3$.}
\end{figure}
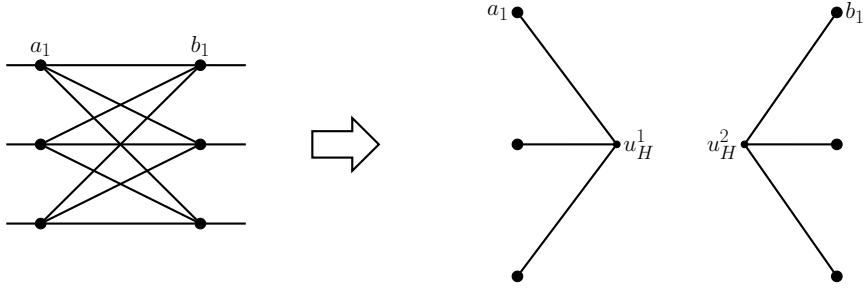

%\vspace{-0.25cm}

\subsection{Disjoint forbidden subgraphs}\label{sec:disjoint-forbidden-cliques}

In this subsection we consider a case when all forbidden subgraphs of $G$ are pairwise vertex-disjoint. We show how to drop this assumption later, in Subsection~\ref{sec:non-disjoint-forbidden-cliques}.

The precise construction of $G'$, $w'$, $l$ and $b$ is the following. We start off with $G'=G$.  We add a gadget to every forbidden subgraph of $G$. This operation is well-defined because these forbidden subgraphs are pairwise vertex-disjoint. We set the capacity interval of every vertex $v$ of degree $t+1$ of $G$ to $[1,t+1]$. We set the capacity interval of every other vertex $v$ of $G$ to $[0,\deg_G(v)]$. We set the weight of every edge $e$ of $G'$ which belongs to $G$ to $w(e)$.
In theorems below we show the correspondence between covering co-$t$-matchings of $G$ and {\lbmatching{}}s of  $G'$.

\begin{theorem}\label{thm:cliques-opt-to-matching}
Let $\mc$ be any covering co-$t$-matching of $G$. Then we can find an \lbmatching{} $M'$ of $G'$ such that $w'(M') = w(\mc)$.
\end{theorem}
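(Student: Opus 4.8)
The plan is to construct $M'$ directly from $\mc$ by describing, for each vertex of $G$ and for each gadget, which edges and half-edges of $G'$ to include, and then to verify that the resulting edge set satisfies every capacity constraint of $G'$ and has the same $w'$-weight as $w(\mc)$. First I would put into $M'$ every edge of $G$ that belongs to $\mc$ (these edges also lie in $G'$ and carry weight $w(e)=w'(e)$, so they contribute exactly $w(\mc)$ restricted to $E$). It remains to decide which half-edges to add, gadget by gadget, so that the subdivision vertices get their prescribed degrees and the original vertices of $G$ stay within their capacity intervals.

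Fix a forbidden subgraph $H$ and consider the edges of $\mc$ inside $H$. Since $E\setminus\mc$ is a restricted $t$-matching, $\mc$ covers $H$, i.e. $\mc$ contains at least one edge of $H$. For a \kt{} gadget, the subdivision vertex $u_H$ has capacity interval $[2,2]$, so I must choose exactly two half-edges. The idea is: if $\mc$ contains an edge $(z_1,z_2)$ of $H$, remove it from $M'$ and instead add the two half-edges $(z_1,u_H)$ and $(z_2,u_H)$; by the vertex-induced assumption, $w(z_1,z_2)=r_H(z_1)+r_H(z_2)$ equals the sum of the half-edge weights, so the weight is preserved, and the degrees of $z_1,z_2$ in $M'$ are unchanged. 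If $\mc$ happens to contain several edges of $H$, I keep all but one of them as genuine edges of $H$ in $M'$ and use the chosen two half-edges only for the one removed edge; I should check this is consistent, using the fact that every vertex of $H$ has degree $t+1$ in $G$ and hence at most one incident edge outside $H$, so its capacity $[1,t+1]$ accommodates any number of $\mc$-edges inside $H$. The \ktt{} case is analogous but now there are two subdivision vertices $u_H^1,u_H^2$ each with interval $[1,1]$: picking one covered edge $(a,b)$ of $H$ with $a\in V_1(H)$, $b\in V_2(H)$, I replace it by the two half-edges $(a,u_H^1)$ and $(b,u_H^2)$, whose weights sum to $r_H(a)+r_H(b)=w(a,b)$; since the bipartite clique is $t$-regular, again each of its vertices has at most one $\mc$-edge leaving $H$, so the capacity bookkeeping goes through.

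Finally I would verify the two families of constraints globally. For the subdivision vertices: each $u_H$ ends up with degree exactly $2$ and each $u_H^1,u_H^2$ with degree exactly $1$, as required, because $\mc$ covers $H$ so at least one edge is available to convert, and I convert exactly one. For the original vertices $v\in V$: the construction never changes $\deg_{M'}(v)$ relative to $\deg_{\mc}(v)$ (each converted edge $(z_1,z_2)$ is swapped for two half-edges incident to the same $z_1,z_2$). Hence $\deg_{M'}(v)=\deg_{\mc}(v)=\deg_G(v)-\deg_{E\setminus\mc}(v)$. If $\deg_G(v)=t+1$, then since $E\setminus\mc$ is a $t$-matching, $\deg_{\mc}(v)\ge 1$, and trivially $\deg_{\mc}(v)\le t+1$, so $v$ satisfies its interval $[1,t+1]$; if $\deg_G(v)\le t$ the interval is $[0,\deg_G(v)]$ and $0\le\deg_{\mc}(v)\le\deg_G(v)$ holds automatically. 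Summing weights, the kept edges of $G$ contribute their $\mc$-weight and each converted edge contributes the same weight split across its two half-edges, so $w'(M')=w(\mc)$.

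The main obstacle I expect is the bookkeeping in the case where $\mc$ contains more than one edge of a single forbidden subgraph $H$: I must argue cleanly that leaving the extra $\mc$-edges of $H$ in place in $M'$ (rather than converting them) keeps every vertex of $H$ within its capacity interval, which relies precisely on the degree bound $\deg_G\le t+1$ together with $t$-regularity of $H$ (so that each vertex of $H$ has exactly one "external" slot, and all its remaining $t$ slots are inside $H$ and therefore freely usable). Everything else is routine weight accounting enabled by the vertex-induced hypothesis.
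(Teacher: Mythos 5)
Your construction is exactly the paper's: keep the $\mc$-edges, pick one covered edge per gadgeted forbidden subgraph and replace it by its two half-edges, keep the remaining $\mc$-edges of that subgraph, and conclude $w'(M')=w(\mc)$ from the vertex-induced weights. The additional capacity-interval bookkeeping you carry out (which the paper leaves implicit) is correct, so the proposal matches the paper's proof.
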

\dowod
We initialize $M'$ as the empty set. We add every edge of $\mc$ that does not belong to any forbidden subgraph of $G$ to $M'$.

Consider any $H=K_{t+1}$ of $G$. Since $\mc$ is covering, there exists an edge of $H$ which belongs to $\mc$. If more than one edge of $H$ belongs to $\mc$, we choose one of them. Suppose that we chose $(z_1,z_2)\in\mc$. Then we add edges $(z_1,u_H)$ and $(z_2,u_H)$ to $M'$. We add every other edge of $H$ which belongs to $\mc$ to $M'$.

Consider any $H=K_{t,t}$  of $G$.  Since $\mc$ is covering, there exists an edge $(a,b)$ of $H$ which belongs to $\mc$. Suppose that $a\in V_1(H)$ and $b \in V_2(H)$.  We add edges $(a,u_H^1)$ and $(b,u_H^2)$ to $M'$. We add every other edge of $H$ which belongs to $\mc$ to $M'$.

Since the weight of any replaced edge in $G$ is equal to the sum of the weights of its corresponding half-edges in $G'$, we get that $w'(M') = w(\mc)$.
\koniec

\begin{theorem}\label{thm:cliques-matching-to-opt}
Let $M'$ be any \lbmatching{} of $G'$. Then we can find a covering co-$t$-matching~$\mc$ of $G$ such that $w(\mc) \leq w'(M')$.
\end{theorem}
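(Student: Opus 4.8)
The plan is to run the construction from the proof of Theorem~\ref{thm:cliques-opt-to-matching} in reverse. First I would put into \mc{} all edges of $M'$ that are edges of $G$ (i.e.\ that are not half-edges). Then I would process the gadgets. For every $H=\kt$ of $G$, the subdivision vertex $u_H$ has capacity interval $[2,2]$ and every edge of $G'$ incident to $u_H$ is a half-edge of $H$, so $M'$ contains exactly two of them, say $(z_1,u_H)$ and $(z_2,u_H)$ with $z_1\neq z_2$; I then add the edge $(z_1,z_2)\in E(H)$ to \mc. Symmetrically, for every $H=\ktt$ of $G$, since $u_H^1$ and $u_H^2$ have capacity interval $[1,1]$, the matching $M'$ contains exactly one half-edge $(a,u_H^1)$ with $a\in V_1(H)$ and exactly one half-edge $(b,u_H^2)$ with $b\in V_2(H)$; I add the edge $(a,b)\in E(H)$ to \mc. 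Every half-edge of $M'$ is accounted for in this way and $\mc\subseteq E$.

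Next I would check that \mc{} is a covering co-$t$-matching of $G$. Since every vertex of $G$ has degree at most $t+1$, it suffices to show that every vertex $v$ of $G$ with $\deg_G(v)=t+1$ is incident to an edge of \mc. Such a $v$ has capacity interval $[1,t+1]$ in $G'$, so $M'$ contains some edge incident to $v$; if that edge lies in $G$ it is in \mc, and if it is a half-edge then $v$ is an endpoint of the edge of $E(H)$ that was added for the corresponding gadget $H$, hence $v$ is again incident to \mc. Moreover, in the setting of this subsection every forbidden subgraph of $G$ receives a gadget (the forbidden subgraphs are pairwise vertex-disjoint), and for each such $H$ we added an edge of $E(H)$ to \mc, so \mc{} covers every \kt{} and \ktt{} of $G$.

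Finally I would verify $w(\mc)\le w'(M')$ by an edge-by-edge accounting. An edge of $M'\cap E$ contributes $w(e)$ to $w'(M')$ and at most once to $w(\mc)$. For a \kt{} gadget $H$ the two half-edges $(z_1,u_H),(z_2,u_H)$ contribute $r_H(z_1)+r_H(z_2)$ to $w'(M')$, which equals $w(z_1,z_2)$ because $w$ is vertex-induced on $H$ with potential function $r_H$; this is exactly what $(z_1,z_2)$ contributes to $w(\mc)$, except that if $(z_1,z_2)$ already belonged to $M'\cap E$ then $w'(M')$ counts it twice while \mc, being a set, counts it once. The identical argument handles \ktt{} gadgets. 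Summing over all gadgets and all non-gadget edges, and using $w\ge 0$, gives $w(\mc)\le w'(M')$. The two places that need care are precisely this weight bookkeeping — the harmless double counting when a replacement edge happens to coincide with a genuine matching edge depends on the nonnegativity of $w$ — and the argument that the lower bounds $l$ in $G'$ force every degree-$(t+1)$ vertex of $G$ to be covered by \mc; the rest is routine.
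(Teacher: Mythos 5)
Your proof is correct and follows essentially the same construction as the paper's: add the $G$-edges of $M'$ to $\mc$ and, for each gadget $H$, the edge of $H$ joining the two vertices of $G$ matched to the subdivision vertex (resp.\ vertices) of $H$. The paper's own proof is terser and leaves the covering and weight verifications implicit; your explicit bookkeeping, in particular the remark that nonnegativity of $w$ absorbs the double count when the replacement edge already lies in $M'\cap E$, supplies exactly the right justification.
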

\dowod
We initialize $\mc$ as the empty set. We add every edge of $M'$ which belongs to $G$ to $\mc$. For every forbidden subgraph of $G$ we add some of its edges to $\mc$.

Let $H$ be any \kt{} of $G$. Notice that exactly two vertices of $H$, say $v_1$ and $v_2$, are matched to $u_H$ in $M'$. We add $(v_1,v_2)$ to $\mc$ if it does not belong to $\mc$ already.

For any $H=K_{t,t}$ of $G$ we proceed analogously, because in the gadget for $H$ the subdivision vertices $u_H^1$ and $u_H^2$ are matched to two vertices $a\in V_1(H)$ and $b\in V_2(H)$. \koniec

%\vspace{-0.5cm}

\subsection{Non-disjoint forbidden subgraphs}\label{sec:non-disjoint-forbidden-cliques}
In the following lemmas we examine the ways in which forbidden subgraphs of $G$ may  overlap. %We use the assumption that $t\geq 3$ \mwcom{remove this} here.

\begin{lemma}\label{lem:cliques-kt-description}
Let $H_1$ and $H_2$ be any two different \kt{}'s of $G$ with a common vertex. Then $H_1 \cap H_2 =K_t$.% Moreover, if some of them is incident to another \kt{} of $G$, then both $H_1$ and $H_2$ are color class of $K_{t+2}$. 
\end{lemma}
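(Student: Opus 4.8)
The statement is that two distinct $K_{t+1}$ subgraphs $H_1, H_2$ of $G$ sharing at least one vertex must intersect in exactly a $K_t$. The plan is to use the degree bound heavily: every vertex of $G$ has degree at most $t+1$, and every vertex of $H_i$ already has degree $t$ inside $H_i$ (since $K_{t+1}$ is $t$-regular). So each vertex of $H_i$ has at most one neighbour outside $H_i$. First I would count: let $S = V(H_1) \cap V(H_2)$, and suppose $|S| = s \geq 1$. I want to show $s = t$, which combined with the fact that $H_1, H_2$ are complete on their vertex sets forces $H_1 \cap H_2 = G[S] = K_t$ (the edge sets agree on $S$ because both are complete there).

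**The core degree argument.** Pick any vertex $v \in S$. In $H_1$, $v$ is adjacent to all $t$ other vertices of $H_1$; in $H_2$, to all $t$ other vertices of $H_2$. Its total degree in $G$ is at most $t+1$, so the neighbourhoods $N_{H_1}(v) = V(H_1)\setminus\{v\}$ and $N_{H_2}(v) = V(H_2)\setminus\{v\}$ together use at most $t+1$ vertices. Since each alone has size $t$, by inclusion–exclusion $|N_{H_1}(v) \cap N_{H_2}(v)| \geq t + t - (t+1) = t-1$. But $N_{H_1}(v)\cap N_{H_2}(v) \subseteq (V(H_1)\cap V(H_2))\setminus\{v\} = S\setminus\{v\}$, so $s - 1 \geq t - 1$, i.e. $s \geq t$. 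On the other hand $s \leq t$: if $s = t+1$ then $V(H_1) = V(H_2)$ (both have $t+1$ vertices) and, being complete on that set, $H_1 = H_2$, contradicting $H_1 \neq H_2$. Hence $s = t$.

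**Finishing.** With $|S| = t$, the induced subgraph $G[S]$ contains all $\binom{t}{2}$ edges among $S$ because $S \subseteq V(H_1)$ and $H_1$ is complete there; thus $G[S] = K_t$. It remains to check that $H_1 \cap H_2$, as defined in the Preliminaries (the graph $(V(H_1)\cap V(H_2), E(H_1)\cap E(H_2))$), equals $G[S]$: its vertex set is $S$, and an edge lies in $E(H_1)\cap E(H_2)$ iff both its endpoints are in $V(H_1)$ and in $V(H_2)$, i.e. both in $S$ — and every such pair is indeed an edge of both cliques. So $H_1 \cap H_2 = K_t$.

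**Expected obstacle.** There is essentially no hard step here; the only thing to be a little careful about is the bookkeeping with the $H_1 \cap H_2$ notation versus $G[S]$, and making sure the case $s = t+1$ is correctly excluded using simplicity of $G$ and the $t+1 = t+1$ vertex count (distinct cliques of the same size on the same vertex set are equal). The degree-bound inclusion–exclusion is the whole engine and is immediate once set up.
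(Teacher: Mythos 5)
Your proof is correct and follows essentially the same route as the paper's: fix a common vertex, use the degree bound $t+1$ together with the $t$-regularity of each clique to conclude that at least $t-1$ neighbours are shared (hence at least $t$ common vertices), and rule out $t+1$ common vertices because distinct cliques on the same vertex set would coincide. Your extra bookkeeping identifying $H_1\cap H_2$ with $G[S]=K_t$ is a harmless elaboration of what the paper leaves implicit.
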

\dowod
Let $v$ be a common vertex of  $H_1$ and $H_2$. It has at most $t+1$ neighbours and exactly $t$ of them belong to $H_1$ and $t$ to $H_2$. Therefore, at least $t-1$ neighbours of $v$ belong to both $H_1$ and $H_2$, so $H_1$ and $H_2$ have at least $t$ common vertices. Since $H_1$ and $H_2$ are different, they cannot have $t+1$ common vertices.
\koniec

%\vspace{-0.5cm}

\begin{lemma}\label{lem:cliques-ktt-description}
Let $H_1$ and $H_2$ be any two different \ktt{}'s of $G$ with a common vertex. Then $H_1 \cap H_2$ contains $K_{t-1, t-1}$.
\end{lemma}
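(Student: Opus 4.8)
The plan is to use the degree bound $\deg_G(\cdot)\le t+1$ twice: once at the common vertex and once at a second, carefully chosen vertex. Let $v$ be a common vertex of $H_1$ and $H_2$. Since each $H_i$ is a $t$-regular complete bipartite graph, $v$ belongs to exactly one of its two color classes; denote by $X_i$ the color class of $H_i$ containing $v$ and by $Y_i$ the other one, so $|X_i|=|Y_i|=t$ and every vertex of $Y_i$ is a neighbour of $v$ in $G$. Consequently all vertices of $Y_1\cup Y_2$ are neighbours of $v$, so
\[
t+1 \ge \deg_G(v) \ge |Y_1\cup Y_2| = 2t - |Y_1\cap Y_2|,
\]
which gives $|Y_1\cap Y_2|\ge t-1$.

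Since $t\ge 3$, the set $Y_1\cap Y_2$ is nonempty, so I would pick a vertex $z\in Y_1\cap Y_2$ and argue symmetrically. As $z\in Y_1$ it is adjacent in $H_1$ to every vertex of $X_1$, and as $z\in Y_2$ it is adjacent in $H_2$ to every vertex of $X_2$; hence all vertices of $X_1\cup X_2$ are neighbours of $z$ in $G$. The same computation as above then yields $|X_1\cap X_2|\ge t-1$ (and indeed $v\in X_1\cap X_2$, consistently with this).

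It remains to observe that the complete bipartite graph with color classes $X_1\cap X_2$ and $Y_1\cap Y_2$ is a subgraph of $H_1\cap H_2$. Both $X_1\cap X_2$ and $Y_1\cap Y_2$ are contained in $V(H_1)\cap V(H_2)=V(H_1\cap H_2)$. For any $x\in X_1\cap X_2$ and $y\in Y_1\cap Y_2$, the vertices $x$ and $y$ lie in different color classes of $H_1$ (namely $X_1$ and $Y_1$), so $(x,y)\in E(H_1)$; likewise $(x,y)\in E(H_2)$, hence $(x,y)\in E(H_1)\cap E(H_2)=E(H_1\cap H_2)$. Therefore $H_1\cap H_2$ contains a complete bipartite graph whose two parts have size at least $t-1$ each, and in particular it contains $K_{t-1,t-1}$.

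I do not anticipate a genuine obstacle here: the argument is just a double application of the degree bound. The two points that need a word of care are that $Y_1\cap Y_2\neq\emptyset$---which is what makes the second degree-bound step legitimate, and is guaranteed by $t\ge 3$---and that one should reason with the color classes purely as sets (not worrying about which of $X_i,Y_i$ is labelled "first"); the hypothesis $H_1\neq H_2$ is in fact not used beyond making the statement non-vacuous, since equality would only make $H_1\cap H_2$ contain the larger graph $K_{t,t}$.
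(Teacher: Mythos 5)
Your proof is correct and follows essentially the same route as the paper's: apply the degree bound $\deg_G(v)\le t+1$ at the common vertex to get $t-1$ common vertices on the opposite side, then repeat the argument at one of those vertices (which exists because $t-1\ge 1$) to get $t-1$ common vertices on the original side. Your set-theoretic treatment of the color classes $X_i,Y_i$ is just a cleaner phrasing of the paper's ``w.l.o.g.\ rename the color classes of $H_2$'' step, and the final observation that the complete bipartite graph between $X_1\cap X_2$ and $Y_1\cap Y_2$ lies in $H_1\cap H_2$ is implicit in the paper.
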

\dowod
Let $v \in V_1(H_1)$ be a common vertex of  $H_1$ and $H_2$. W.l.o.g. we can assume that $v \in V_1(H_2)$ (otherwise we can rename the color classes of $H_2$). The vertex $v$ is incident to exactly $t$ edges  of $H_1$ and $t$ edges of $H_2$, hence at least $t-1$ of these edges belong to $H_1 \cap H_2$. Let $U$ denote their endpoints. Then $U \setminus \{v\} \subseteq V_2(H_1) \cap V_2(H_2)$ and $ |U \setminus \{v\}| \geq t-1\geq 1$. This implies that there exists a vertex $v' \in V_2(H_1) \cap V_2(H_2)$ and thus, by repeating the same argument, there are at least $t-1$ vertices in $V_1(H_1) \cap V_1(H_2)$.
\koniec

%\vspace{-0.5cm}

\begin{lemma}\label{lem:cliques-kt-ktt-description}
Let $H_1$ and $H_2$ be any, respectively, \kt{} and \ktt{} of $G$ with a  common vertex. Then $t=3$ and $H_1 \cap H_2=K_{2,2}$.
\end{lemma}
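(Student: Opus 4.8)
The plan is to run the same ``degree-counting forces large overlap'' argument used in Lemmas~\ref{lem:cliques-kt-description} and~\ref{lem:cliques-ktt-description}, but now across the two different types of forbidden subgraph, and then to observe that the overlap is so large that the vertex set of $H_1$ is almost exhausted, which pins down $t$. Write $A = V(H_1)$, so $|A| = t+1$ and $G[A]$ contains all $\binom{t+1}{2}$ edges of the clique; consequently every vertex of $A$ has exactly $t$ neighbours inside $A$ and, by the degree bound, at most one neighbour outside $A$. Likewise write $B_1 = V_1(H_2)$ and $B_2 = V_2(H_2)$, so $|B_1| = |B_2| = t$, $B_1 \cap B_2 = \emptyset$, and every vertex of $H_2$ has $t$ neighbours in $H_2$ and hence at most one outside $V(H_2)$.

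Let $v$ be a common vertex of $H_1$ and $H_2$; w.l.o.g.\ $v \in B_1$. The neighbours of $v$ in $H_1$ are exactly $A \setminus \{v\}$ (size $t$) and its neighbours in $H_2$ are exactly $B_2$ (size $t$); since $\deg_G(v) \le t+1$, these two sets overlap in at least $t-1$ vertices, so $|A \cap B_2| \ge |(A\setminus\{v\}) \cap B_2| \ge t-1$ (note $v \notin B_2$). In particular $A \cap B_2 \ne \emptyset$ because $t \ge 3$; pick $u \in A \cap B_2$. Repeating the argument with $u$ in place of $v$: the neighbours of $u$ in $H_1$ are $A \setminus \{u\}$ and its neighbours in $H_2$ are $B_1$, both of size $t$, so by the degree bound $|A \cap B_1| \ge |(A\setminus\{u\}) \cap B_1| \ge t-1$ (note $u \notin B_1$).

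Now $A \cap B_1$ and $A \cap B_2$ are disjoint subsets of $A$, so
\[
t+1 = |A| \ \ge\ |A \cap B_1| + |A \cap B_2| \ \ge\ 2(t-1),
\]
which forces $t \le 3$; together with the standing assumption $t \ge 3$ this gives $t = 3$. With $t=3$ all inequalities above are tight: $|A \cap B_1| = |A \cap B_2| = 2$ and $A = (A\cap B_1) \cup (A\cap B_2)$, i.e.\ $A \subseteq V(H_2)$. Hence $V(H_1 \cap H_2) = A \cap V(H_2) = A$, and $E(H_1 \cap H_2) = E(H_1) \cap E(H_2)$ consists precisely of the pairs $\{x,y\}$ with $x \in A\cap B_1$, $y \in A\cap B_2$ (edges of $H_1$ because $G[A]$ is complete, edges of $H_2$ because they cross $B_1$--$B_2$); no pair inside $A\cap B_1$ or inside $A\cap B_2$ lies in $E(H_2)$. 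Therefore $H_1 \cap H_2$ is exactly the complete bipartite graph on color classes $A\cap B_1$ and $A\cap B_2$, i.e.\ $K_{2,2}$.

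The argument is essentially bookkeeping, so the only real care needed is in the last step: one must be explicit that every cross pair between $A\cap B_1$ and $A\cap B_2$ is simultaneously an edge of the clique $H_1$ and of the bipartite clique $H_2$, and that nothing else survives the intersection; this is where the claim ``$H_1 \cap H_2 = K_{2,2}$'' (rather than merely ``contains $K_{2,2}$'') is actually earned. I expect no genuine obstacle beyond keeping the two ``outside neighbour'' budgets and the disjointness $v \notin B_2$, $u \notin B_1$ straight.
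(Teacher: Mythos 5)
Your proof is correct and follows essentially the same route as the paper: a degree-count at a shared vertex forces at least $t-1$ vertices of each color class of $H_2$ into $V(H_1)$, and the disjointness of the two color classes then yields $2(t-1)\le t+1$, hence $t=3$ and the tight case pins down the intersection as $K_{2,2}$. You simply spell out the bookkeeping (disjointness of $A\cap B_1$ and $A\cap B_2$, and the explicit identification of the surviving edges) that the paper compresses into ``repeating the same argument.''
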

\dowod
Let $v_1$ be a common vertex of $H_1$ and $H_2$. W.l.o.g we can assume that $v_1\in V_1(H_2)$. Repeating the same argument as in the proofs of Lemma~\ref{lem:cliques-kt-description} and Lemma~\ref{lem:cliques-ktt-description}, we get that at least $t-1\geq 1$ vertices of $V_2(H_2)$ belong to $H_1$. Similarly, at least $t-1$ vertices of $V_1(H_2)$ belong to $H_1$, so $H_1$ has at least $2t-2$ vertices. Therefore, $2t-2\leq t+1$, so $t=3$, and $H_1$ has exactly two common vertices with each color class of $H_2$.
\koniec

%\vspace{-0.15cm}

%\subsection{Problematic subgraphs}\label{sec:cliques-problematic}
We define the following family of {\em \bf problematic}  subgraphs of $G$.

%\vspace{-0.15cm}

\begin{definition}\label{def:cliques-unproblematic-kt}
Let $H$ be any forbidden \kt{} of $G$. We say that $H$ is {\bf \em unproblematic} if $H$ has a common vertex  with another forbidden  subgraph $H'$ of $G$ such that:
%\vspace{-0.25cm}
\begin{enumerate}
\item\label{itm:cliques-unproblematic-kt-weight}  $H'=K_{t+1}$ and $w(H)\leq w(H')$, or
\item\label{itm:cliques-unproblematic-kt-3} $H'=K_{t,t}$.
\end{enumerate}
%\vspace{-0.25cm}
Otherwise, we say that $H$ is {\bf \em problematic}.
\end{definition}

\begin{definition}\label{def:cliques-unproblematic-ktt}
Let $H$ be any forbidden \ktt{} of $G$. We say that $H$ is {\bf \em unproblematic} if $H$ has a common vertex  with another forbidden subgraph $H'=K_{t,t}$ of $G$ such that $w(H)\leq w(H')$.

\begin{comment}
\begin{enumerate}
\item\label{itm:cliques-unproblematic-ktt-common} $H \cap H'=K_{t-1,t-1}$, or
\item\label{itm:cliques-unproblematic-ktt-weight}  $H \cap H'=K_{t,t-1}$ and $w(H)\leq w(H')$.
\end{enumerate}
\end{comment}
Otherwise, we say that $H$ is {\bf \em problematic}.
\end{definition}

%\vspace{-0.5cm}

%We can prove that problematic cliques are pairwise vertex-disjoint:
\begin{lemma}\label{lem:cliques-disjoint-problematic}
Any two different problematic subgraphs of $G$ are vertex-disjoint.
\end{lemma}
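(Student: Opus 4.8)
The plan is to show that if two problematic forbidden subgraphs $H_1$ and $H_2$ shared a common vertex, then at least one of them would be unproblematic, a contradiction. I would organize the argument by cases on the types (clique or bipartite clique) of $H_1$ and $H_2$, invoking the three structural lemmas (Lemma~\ref{lem:cliques-kt-description}, Lemma~\ref{lem:cliques-ktt-description}, Lemma~\ref{lem:cliques-kt-ktt-description}) to understand how the subgraphs overlap. The key point throughout is that the definitions of unproblematic for a \kt{} and for a \ktt{} only involve sharing a vertex with a specific kind of neighbouring forbidden subgraph, sometimes with a weight inequality; and weight inequalities can be broken in one direction or the other.

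First I would dispose of the mixed case. Suppose $H_1=\kt$ and $H_2=\ktt$ share a vertex. By Lemma~\ref{lem:cliques-kt-ktt-description} this is possible at all, so by Definition~\ref{def:cliques-unproblematic-kt}\eqref{itm:cliques-unproblematic-kt-3} the clique $H_1$ is automatically unproblematic, contradicting the assumption that $H_1$ is problematic. So no problematic \kt{} meets any \ktt{} at all; in particular a problematic \kt{} can only possibly meet other \kt{}'s, and a problematic \ktt{} can only possibly meet other \ktt{}'s. Next, the two-clique case: suppose $H_1=\kt$ and $H_2=\kt$ are distinct and share a vertex. Then by Definition~\ref{def:cliques-unproblematic-kt}\eqref{itm:cliques-unproblematic-kt-weight}, $H_i$ is unproblematic whenever $w(H_i)\le w(H_{3-i})$. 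Since either $w(H_1)\le w(H_2)$ or $w(H_2)\le w(H_1)$ holds, at least one of $H_1$, $H_2$ is unproblematic — contradiction. Finally the two-bipartite-clique case: suppose $H_1=\ktt$ and $H_2=\ktt$ are distinct and share a vertex. By Definition~\ref{def:cliques-unproblematic-ktt}, $H_i$ is unproblematic whenever $w(H_i)\le w(H_{3-i})$, and again one of the two inequalities must hold, so one of $H_1$, $H_2$ is unproblematic — contradiction.

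Putting the cases together: in every case in which two distinct problematic subgraphs share a vertex we reach a contradiction, so any two different problematic subgraphs of $G$ are vertex-disjoint. I expect the argument to be essentially routine once the case split is set up correctly; the only place needing care is the mixed case, where one must note that the mere existence of a common vertex between a \kt{} and a \ktt{} already makes the \kt{} unproblematic by clause~\eqref{itm:cliques-unproblematic-kt-3}, so this case cannot arise among problematic subgraphs at all. (One should also double-check the degenerate possibility that $H_1=H_2$ is excluded by hypothesis, and that "another forbidden subgraph $H'$" in the definitions is allowed to be of the same type, which it is.) The main obstacle, if any, is simply making sure the weight-comparison clauses in the two definitions are stated symmetrically enough that "one direction of the inequality always holds" genuinely forces unproblematicity of one of the two subgraphs; given the definitions as written, this is immediate.
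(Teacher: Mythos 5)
Your proposal is correct and matches the paper's own proof essentially line for line: the same three-way case split, with the weight comparison settling the clique--clique and biclique--biclique cases and clause~\ref{itm:cliques-unproblematic-kt-3} of Definition~\ref{def:cliques-unproblematic-kt} settling the mixed case. No gaps.
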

%\vspace{-0.25cm}
\dowod
Consider any two different forbidden cliques $H_1$ and $H_2$ with a common vertex. We show that at least one of $H_1, H_2$ is unproblematic.

If both $H_1$ and $H_2$ are \kt{}'s and  $w(H_1)\leq w(H_2)$, then $H_1$ is unproblematic.
Similarly, if both $H_1$ and $H_2$ are \ktt{}'s  and  $w(H_1)\leq w(H_2)$, then $H_1$ is unproblematic.

If $H_1=$\kt{} and $H_2=$ \ktt{}, then by  Point~\ref{itm:cliques-unproblematic-kt-3} of Definition~\ref{def:cliques-unproblematic-kt}, $H_1$ is unproblematic.
\koniec

To find a minimum weight covering co-$t$-matching of $G$, we use the graph $G'$ from Subsection \ref{sec:disjoint-forbidden-cliques}. However, we only replace 
every problematic subgraph by its gadget. In the such constructed $G'$ we compute an \lbmatching{}  of $G'$, which corresponds to a co-$t$-matching $\mc$ of $G$. The co-$t$-matching $\mc$ may not cover some unproblematic forbidden subgraphs. Below we show that it is possible to modify $\mc$ so that it covers every forbidden subgraph.

\begin{lemma}\label{lem:cliques-removing}
Let $\mc$ be any co-$t$-matching of $G$. If $\mc$ does not cover some unproblematic subgraph $H$, then we can find a co-$t$-matching $\nc$ of $G$ such that $w(\nc) \leq w(\mc)$ which covers $H$ and every forbidden subgraph covered by $\mc$.
\end{lemma}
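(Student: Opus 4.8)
The plan is to cover $H$ by a small local exchange in $\mc$, guided by the witness $H'$ that makes $H$ unproblematic. The key preliminary observation is that, since $\mc$ does not cover $H$, the $t$-matching $E\setminus\mc$ contains every edge of the $t$-regular graph $H$, so each vertex of $V(H)$ already has degree $t$ in $E\setminus\mc$ and hence, $E\setminus\mc$ being a $t$-matching, has no further incident edge in $E\setminus\mc$. Therefore \emph{every} edge of $G$ incident to $V(H)$ but not in $E(H)$ belongs to $\mc$. First I would combine this with Lemmas~\ref{lem:cliques-kt-description}--\ref{lem:cliques-kt-ktt-description}, which describe $V(H)\cap V(H')$ explicitly (a $K_t$ when $H,H'$ are both \kt{}'s; a $K_{t-1,t-1}$ or a $K_{t,t-1}$ when both are \ktt{}'s; and, when $H=\kt$ and $H'=\ktt$, $t=3$ and a $K_{2,2}$ on $V(H)$), to identify a ``forced'' set $F\subseteq\mc\cap E(H')$: all edges of $H'$ incident to $V(H)\cap V(H')$ but not to $E(H)$. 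One checks $|F|\ge t-1\ge2$, so in particular $\mc$ already covers $H'$, and every vertex of $V(H')\setminus V(H)$ is incident to at least $t-1$ edges of $F$.

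Next I would set $\nc=(\mc\cup\{e\})\setminus\{f_1,f_2\}$ for an edge $e$ of $H$ and one or two edges of $F$, each sharing exactly one endpoint with $e$, chosen according to the types of $H$ and $H'$:
\begin{enumerate}
\item $H=\kt$, witness $H'=\kt$ (sharing a $K_t$, with $x$ the extra vertex of $H$ and $y$ that of $H'$): take $e=(x,u)$ and $f_1=(y,u)$ for a common vertex $u$;
\item $H=\ktt$, witness $H'=\ktt$: if the common part is a $K_{t-1,t-1}$, take a common edge $e=(a,b)$ and $f_1=(a,b')$, $f_2=(a',b)$, where $a',b'$ are the two vertices of $H'$ outside $H$; if one color class is common, take $e=(a,b)$ with $a$ the extra vertex of $H$ and $b$ common, and $f_1=(a',b)$ with $a'$ the extra vertex of $H'$;
\item $H=\kt$, witness $H'=\ktt$ (so $t=3$): take a common edge $e=(a,b)$ and $f_1=(a,b')$, $f_2=(a',b)$, where $a',b'$ are the two vertices of $H'$ outside $H$.
\end{enumerate}
Then I would check that $\nc$ is a co-$t$-matching: $E\setminus\nc$ arises from the $t$-matching $E\setminus\mc$ by removing $e$ and inserting $f_1$ (and $f_2$); each endpoint of $e$ either merely loses $e$ (dropping to degree $t-1$) or loses $e$ and regains exactly one $f_i$, while each vertex of $V(H')\setminus V(H)$ had degree at most $(t+1)-(t-1)=2\le t$ in $E\setminus\mc$ by the count above, so no degree exceeds $t$. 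Also $e$ covers $H$, and $H'$ stays covered because at least $t-1\ge1$ edges of $F$ survive (and $e\in E(H')$ in the common-edge cases).

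For the weight the argument is as follows. In the common-edge cases, namely case (3) and the $K_{t-1,t-1}$ subcase of (2), all of $e$, $f_1$, $f_2$ and the edge $g$ joining the two vertices of $H'$ outside $H$ lie in $E(H')$, hence are priced by a single potential function $r_{H'}$ of $H'$, and the change telescopes: $w(\nc)-w(\mc)=w(e)-w(f_1)-w(f_2)=-w(g)\le0$. In the remaining cases, namely case (1) and the other subcase of (2), the hypothesis $t\ge3$ forces the potential functions of $H$ and of $H'$ to agree on $V(H)\cap V(H')$ up to a $\pm d$ shift on the two sides of a connected bipartite common part; expanding $w(H)=t\sum_{v\in V(H)}r_H(v)$ and $w(H')=t\sum_{v\in V(H')}r_{H'}(v)$ and cancelling, the assumption $w(H)\le w(H')$ becomes exactly the inequality $w(e)\le w(f_1)$, so again $w(\nc)\le w(\mc)$.

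It remains to check that $\nc$ still covers every forbidden subgraph covered by $\mc$; since $\nc\supseteq\mc\setminus\{f_1,f_2\}$, I only need to exclude the possibility that some such subgraph $K$ has $f_1$ or $f_2$ as its \emph{only} edge in $\mc$ while $e\notin E(K)$. I expect this to be the main obstacle: the dropped edges lie in $E(H')$ and are incident to $V(H)$ but ``stick out'' of $H$, and one must classify, again via Lemmas~\ref{lem:cliques-kt-description}--\ref{lem:cliques-kt-ktt-description}, all forbidden \kt{}'s and \ktt{}'s through such an edge and show each either contains $e$ or retains a second edge of $\mc$ --- once more using $t\ge3$, and foreshadowing the delicacy that the $q=2$ case of the \kpq{}-free problem will require later. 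With that in hand, $\nc$ has all the required properties.
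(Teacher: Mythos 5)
Your construction is essentially the paper's: the same preliminary observation (every edge leaving $V(H)$ is forced into $\mc$ because $E(H)\subseteq E\setminus\mc$ and $H$ is $t$-regular), the same case split driven by Lemmas~\ref{lem:cliques-kt-description}--\ref{lem:cliques-kt-ktt-description}, and the same local exchanges: your cases (1) and the $K_{t,t-1}$ subcase of (2) are exactly the paper's ``replace $(u',z)$ by $(u,z)$'' move, and your common-edge cases match the paper's $\nc=\mc\setminus\{(v_1,u_2),(v_2,u_1)\}\cup\{(v_1,v_2),(u_1,u_2)\}$ except that you drop the edge $g=(a',b')$; that is fine, since you then pay $-w(g)\le 0$ using $w\ge 0$ and keep $a',b'$ covered via the $t-2\ge 1$ surviving forced edges, whereas the paper keeps them covered by inserting $g$ and gets exact weight equality -- both are valid. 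Your derivation of $w(e)\le w(f_1)$ from $w(H)\le w(H')$ via uniqueness of the potential function on the overlap (exact agreement on a $K_t$, agreement up to a shift on a connected bipartite $K_{t,t-1}$) is a legitimate replacement for the paper's averaging argument (``there exists a common $z$ with $w(u,z)\le w(u',z)$''), and in fact gives the inequality for every choice of common vertex rather than just one. The only piece you leave unexecuted is the check that no \emph{other} forbidden subgraph loses its last $\mc$-edge; note that the paper's own proof does not carry this out either -- it only re-verifies coverage of $H'$ -- so you are at the same level of completeness, and the argument you sketch (any forbidden $K$ through a removed edge meets $V(H)$, hence its vertices outside $V(H)$ carry several forced $\mc$-edges of which at most two are deleted) is the right way to close it.
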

%\vspace{-0.5cm}
\dowod If $H$ is a \kt{} of $G$ sharing a vertex with some $H'=K_{t,t}$ of $G$, then by Lemma~\ref{lem:cliques-kt-ktt-description}, $t=3$, $H=K_4, \ H'=K_{3,3}$ of $G$ and $H\cap H'=K_{2,2}$.
Let $v_1 \in V(H)\cap V_1(H'), \ v_2 \in V(H)\cap V_2(H'), \ u_1\in V_1(H')\setminus V(H)$ and $u_2\in V_2(H')\setminus V(H)$.  We set $\nc=\mc\setminus\{(v_1,u_2),(v_2,u_1)\}\cup\{(v_1,v_2),(u_1,u_2)\}$.  Notice that the only edge of $H'$ incident to $u_1$ or $u_2$ which can possibly not belong to $\mc$ is $(u_1,u_2)$ since every vertex of $H$ is incident to $t=3$ edges of $H$ which do not belong to $\mc$. Thus, both $u_1$ and $u_2$ are incident to some edge of $\nc$. Moreover, $w(\nc)\leq w(\mc)$ because $w$ is vertex-induced on $H'$.

If $H$ is a \kt{} of $G$ which does not share a vertex with any \ktt{} of $G$, then $H$ shares a vertex with another $H'=K_{t+1}$ of $G$ such that $w(H)\leq w(H')$.
Let $u$ (resp. $u'$) be the only vertex of $H$ (resp. $H'$) which does not belong to $H'$ (resp. $H$). Notice that every edge of $H'$ incident to $u'$ belongs to $\mc$,  so $\deg_{\mc}(u')\geq t$. Since $w(H)\leq w(H')$, the sum of the weights of edges of $H$ incident to $u$ is not greater than the sum of weights of edges of $H'$ incident to $u'$. Therefore, there exists a common vertex of $H$ and $H'$, say $z$, such that $w(u,z)\leq w(u',z)$. We define $\nc$ as $\mc$ with the edge $(u',z)$ replaced by $(u,z)$. As a result, $\nc$ covers $H$. It still covers $H'$ since every vertex of $V(H)\cap V(H')$ different than $z$ is matched to $u'$ in $\nc$. Notice that $\nc$ is still a co-$t$-matching since $\deg_{\nc}(u')=\deg_{\mc}(u')-1 \geq t-1 \geq 1$ since $t \geq 3$.

Let $H$ be a \ktt{} of $G$ sharing a vertex with another $H'=K_{t,t}$ of $G$ such that $w(H)\leq w(H')$. If  $H\cap H'=K_{t-1,t-1}$, we proceed analogously as in the case when $H=\kt{}$ shares a vertex with $H'=\ktt{}$.
Otherwise, $H \cap H'=K_{t,t-1}$ and we proceed analogously as in the case when $H=\kt{}$ shares a vertex with another $H'=\kt{}$. \koniec

\tikzset{
	partite3/.pic=
	{
		\tikzmath{\a=4;\b=25;}
		\foreach \i in {0,...,2}
		{
			\node (v0\i)	at (\i*120-\b:\a)	[bigVertex]{};
			\node (v2\i)	at (\i*120+\b:\a)	[bigVertex]{};
			\node (v1\i)	at ($(v0\i)!0.5!(v2\i)$) [bigVertex]{};
			\foreach \j in {0,...,2}
				\draw[e0] (v\j\i) -- +(\i*120:\a/3);
		}
		\foreach \i in {0,...,2}
		{
			\tikzmath{int \ii; \ii=Mod(\i+1,3);}
			\foreach \j in {0,...,2}
				\foreach \jj in {0,...,2}
					\draw[e0] (v\j\i) -- (v\jj\ii);
		}
	},
	partite3Gadget/.pic=
	{
		\tikzmath{\a=7;\b=35;}
		\foreach \i in {0,...,2}
		{
			\node (v0\i)	at (\i*120-\b:\a)	[bigVertex]{};
			\node (v2\i)	at (\i*120+\b:\a)	[bigVertex]{};
			\node (v1\i)	at ($(v0\i)!0.5!(v2\i)$) [bigVertex]{};
		}
		\node (z) at (0,0) [vertex,label={[font=\Huge]left:$z_H$}]{};
		\foreach \i in {0,...,2}
		{
			\node (u\i) at ($(v1\i)!0.5!(z)$) [vertex]{};
			\draw[e0] (z) -- (u\i);
		}
		\node also [label={[font=\Huge,xshift=10]below left:$u_H^3$}] (u0);
		\node also [label={[font=\Huge,xshift=3]right:$u_H^1$}] (u1);
		\node also [label={[font=\Huge,xshift=3]right:$u_H^2$}] (u2);
		\foreach \i in {0,...,2}
			\foreach \j in {0,...,2}
				\draw[e0] (v\j\i) -- (u\i);
	},
	partite2/.pic=
	{
		\tikzmath{\a=5;\b=20;}
		\foreach \i in {0,...,3}
		{
			\node (v0\i)	at (45+\i*90-\b:\a)	[bigVertex]{};
			\node (v1\i)	at (45+\i*90+\b:\a)	[bigVertex]{};
			\ifnum \i < 2
				\foreach \j in {0,...,1}
					\draw[e0] (v\j\i) -- +(45+\i*90:\a/3);
			\else
				\draw[e0] (v0\i) -- (v1\i);
			\fi
		}
		\node also [label={[font=\Huge,yshift=-0.2]below left:$c_H$}] (v02);
		\foreach \i in {0,...,3}
			\foreach \ii in {0,...,3}
			{
				\ifnum \i < \ii
					\foreach \j in {0,1}
						\foreach \jj in {0,1}
							\draw[e0] (v\j\i) -- (v\jj\ii);
				\fi
			}
	},
	partite2Gadget/.pic=
	{
		\tikzmath{\a=7;\b=20;}
		\foreach \i in {0,...,3}
		{
			\node (v0\i)	at (45+\i*90-\b:\a)	[bigVertex]{};
			\node (v1\i)	at (45+\i*90+\b:\a)	[bigVertex]{};
		}
		\node (z) at (0,0) [vertex,label={[font=\Huge]below right:$z_H$}]{};
		\foreach \i in {0,1}
		{
			\node (u\i)	at (45+\i*90:4)	[vertex]{};
			\draw[e0] (v0\i) -- (u\i) -- (z);
			\draw[e0] (v1\i) -- (u\i);
		}
		\node also [label={[font=\Huge,xshift=10]below left:$u_H^1$}] (u1);
		\node also [label={[font=\Huge,xshift=10]below:$u_H^2$}] (u0);
		\node also [label={[font=\Huge,yshift=-0.2]below:$c_H$}] (v02);
		\node (uc) at ($(z)!0.5!(v02)$) [vertex,label={[font=\Huge]below right:$u_H^c$}]{};
		\draw[e0] (z) to[out=180,in=45] (uc) to[out=180,in=45] (v02);
		\draw[e0] (z) to[out=235,in=0] (uc) to[out=235,in=0] (v02);
	}
}

%\vspace{-0.75cm}

\section{The bounded \kpq{}-free $t$-matching  problem} \label{sec:partite}
%\vspace{-0.25cm}
In  this section we consider the weighted bounded \kpq{}-free $t$-matching problem. Recall  that $t=(p-1)q$ for this problem. Let us observe that for $q=1$, $K^p_q=K_{t+1}$ and for $p=2$, $K^p_q=K_{t,t}$. Since the bounded  \kpq{}-free $t$-matching problem for these cases can be solved by a slight modification of the algorithm for the bounded  restricted $t$-matching problem, we assume that $q\geq 2$ and $p\geq 3$. Recall that we assume  that the weight function $w$ is vertex-induced on every subgraph of $G$ isomorphic to \kpq{}.

We denote the empty graph on $n$ vertices by $I_n$. For graphs $G_1=(V_1,E_1)$ and $G_2=(V_2,E_2)$ such that $V_1$ and $V_2$ are disjoint, we say that a graph $G_1\times G_2=(V_1\cup V_2,E_1\cup E_2 \cup (V_1\times V_2))$ is a {\bf \em product} of $G_1$ and $G_2$.

Similarly  as in Section~\ref{sec:cliques}, we build a graph $G'=(V', E')$ together with a weight function \mbox{$w':E'\to\mathbb{R}$}, in which some \kpq{}'s of $G$ are augmented with gadgets.

%\vspace{-0.5cm}

\subsection{Case $q\geq 3$}
%\vspace{-0.25cm}
The case of $q\geq 3$ is a little simpler than that of $q=2$. We consider it first.

Let $H$ be any \kpq{} of $G$. We define a {\bf \em gadget for $H$}. We introduce $p$ subdivision vertices $u_H^1$, $u_H^2$, \ldots, $u_H^p$ and a {\bf \em global vertex} $z_H$. For every $i$, we connect $u_H^i$ with every vertex of $V_i(H)$ by half-edges and with $z_H$ (see Fig.~\ref{fig:partiteGadget}).
Let $r_H$ be any potential function of $H$.
A half-edge of $H$ incident to vertex $v_i$ gets weight $r_H(v_i)$. The remaining edges of the gadget get weight $0$.
We set the capacity interval of vertex $z_H$ to $[p-2,p-2]$ and of every subdivision vertex of the gadget for $H$ to $[1,1]$.

%\vspace{-0.75cm}

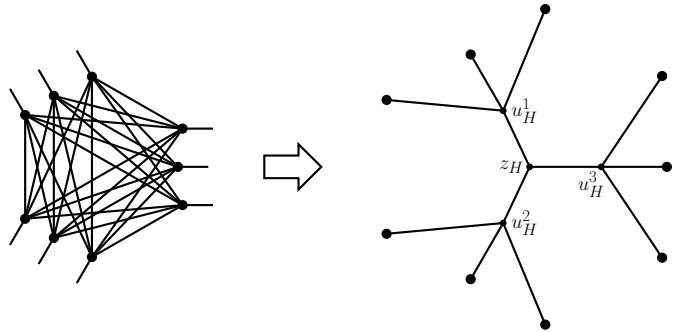
\begin{figure}[htpb]
\centering
\begin{tikzpicture}[scale=0.3,transform shape]
	\pic (arrow) [scale=0.5]{transformsTo};
	\pic (H) [left=7cm of arrow-leftEnd] {partite3};
	\pic (g) [right=9cm of arrow-rightEnd] {partite3Gadget};
\end{tikzpicture}
\caption{A gadget for  \kpq{} for $p=3$ and $q=3$.\label{fig:partiteGadget}}
\end{figure}

%\vspace{-1cm}

\begin{lemma}\label{lem:partite-description}
Let $H_1$ and $H_2$ be any two different \kpq{}'s of $G$ with a common vertex. Then $H_1 \cap H_2$  contains $I_{q-1} \times K^{p-1}_q$.
\end{lemma}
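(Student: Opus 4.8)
We have $K^p_q$ where $t = (p-1)q$, the graph $G$ has maximum degree $t+1$, and we need to show that two different $K^p_q$'s sharing a vertex intersect in a subgraph containing $I_{q-1} \times K^{p-1}_q$. Let me think about what this subgraph is: $I_{q-1}$ is $q-1$ isolated vertices, and $K^{p-1}_q$ is the complete $(p-1)$-partite graph with parts of size $q$. The product adds all edges between the $q-1$ vertices and the $(p-1)q$ vertices of $K^{p-1}_q$. So $I_{q-1} \times K^{p-1}_q$ has $q-1 + (p-1)q = pq - 1$ vertices total. Note $K^p_q$ has $pq$ vertices, so the intersection is missing just one vertex. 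Actually $I_{q-1} \times K^{p-1}_q$... the $q-1$ isolated vertices together with one part of $K^{p-1}_q$ — wait, those $q-1$ vertices form an independent set and connect to everything else, so they'd form a color class of size... hmm, but there's a color class of size $q$ in the "$p$-th part". Actually $I_{q-1} \times K^{p-1}_q = K^{p}_q$ minus one vertex (from one color class). Let me verify: removing one vertex from one color class of $K^p_q$ gives $p-1$ classes of size $q$ and one class of size $q-1$; that's exactly $I_{q-1} \times K^{p-1}_q$ where $I_{q-1}$ is that shrunken class. Yes.

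**Proof plan:**

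Let $v$ be a common vertex of $H_1$ and $H_2$, with $v \in V_{i_1}(H_1)$ and (after renaming) $v \in V_1(H_2)$. The key counting fact: $v$ has at most $t+1$ neighbors in $G$, and exactly $t = (p-1)q$ of them lie in $H_1$ (all vertices outside $v$'s own color class), and exactly $t$ lie in $H_2$. Hence at least $t-1$ neighbors of $v$ lie in both $H_1$ and $H_2$. These common neighbors, call the set $N$, satisfy $|N| \geq t-1 = (p-1)q - 1$. Now within $H_1$, the set $N \subseteq V(H_1) \setminus V_{i_1}(H_1)$, and within $H_2$, $N \subseteq V(H_2) \setminus V_1(H_2)$, i.e., $N$ is contained in the union of the $p-1$ color classes $V_2(H_2), \ldots, V_p(H_2)$.

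The first thing I would establish is that the color-class partitions of $H_1$ and $H_2$ agree on $N \cup \{v\}$ in the sense needed: I would argue that for each color class $V_j(H_2)$ with $j \geq 2$, the vertices of $N \cap V_j(H_2)$ all lie in the same color class of $H_1$. This follows because two vertices in the same $V_j(H_2)$ are non-adjacent, and in a complete partite graph non-adjacent vertices must share a color class — but I need both to be in $H_1$ as well, which they are since they're in $N \subseteq V(H_1)$. So $N$ decomposes according to $H_2$-classes, and each piece sits inside one $H_1$-class. Then a pigeonhole/parity argument using $|N| \geq (p-1)q - 1$ distributed among the $p-1$ classes $V_2(H_2), \ldots, V_p(H_2)$ (each of size $q$) forces that at least $p-2$ of these classes are entirely contained in $N$, and the last one misses at most one vertex. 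Symmetrically this structure holds from the $H_1$ side, so the common color classes match up: there are $p-1$ "large" common color classes, $p-2$ of full size $q$ and one of size $\geq q-1$, plus the shared vertex $v$.

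**The main obstacle:**

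The delicate part is ruling out "twisting" — ensuring that the correspondence between the color classes of $H_1$ restricted to $N$ and the color classes of $H_2$ restricted to $N$ is a genuine bijection on the large classes, so that the induced subgraph on $\{v\} \cup N$ is complete partite with the right class sizes and is simultaneously a subgraph of both $H_1$ and $H_2$. One must check that a vertex in $N$ that is "in class $a$ of $H_1$ and class $b$ of $H_2$" can't collide badly; since both $H_1$ and $H_2$ are complete partite on their respective vertex sets and $N \cup \{v\}$ lies in both, the induced subgraph $G[N \cup \{v\}]$ is an induced subgraph of each, hence complete partite, and its color classes refine to a common coarsening. Counting vertices: $|N \cup \{v\}| \geq t = (p-1)q$, distributed among at most $p$ classes (including $v$'s class, which among these vertices is just $\{v\}$), forcing the class sizes to be: one class $\{v\}$, and $p-1$ classes whose sizes sum to $\geq (p-1)q - 1$ with each $\leq q$, hence $p-2$ of them equal $q$ and one $\geq q-1$. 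Discarding $v$ and shrinking the one deficient class, $G[N]$ contains $I_{q-1} \times K^{p-1}_q$ as a (not necessarily induced — but the product graph only requires the edges to be present, which they are) subgraph, and this subgraph lies in $E(H_1) \cap E(H_2)$ because every edge in it is an edge of a complete partite graph on a vertex set shared by both. I would present the counting step explicitly and wave at the complete-partite-induced-subgraph fact as the engine, flagging the class-matching as the only real content.
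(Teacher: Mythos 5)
There is a genuine gap, and it is quantitative: the set you build is too small to contain the target subgraph. Your construction only ever looks at $v$ and its common neighbourhood $N$, and $N\subseteq V(H_1)\setminus V_{i_1}(H_1)$ gives $|N\cup\{v\}|\leq (p-1)q+1$, while $I_{q-1}\times K^{p-1}_q$ has $pq-1=(p-1)q+(q-1)$ vertices. For $q\geq 3$ these numbers do not match, so the conclusion ``$G[N]$ contains $I_{q-1}\times K^{p-1}_q$'' cannot hold: you have $p-1$ colour classes where you need $p$, and you never account for the $q-1$ vertices of $v$'s own colour class in $H_1$ other than $v$ (none of which is a neighbour of $v$). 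The paper closes exactly this hole by a second application of the degree count: having shown that all of $V_2(H_1),\dots,V_p(H_1)$ except at most one vertex lie in $H_2$, it picks a vertex $u\in V_2(H_1)\cap V(H_2)$ and repeats the argument at $u$, whose $H_1$-neighbourhood \emph{does} include $V_1(H_1)$; this forces all of $V_1(H_1)$ into $H_2$ and yields $|V(H_1)\cap V(H_2)|\geq pq-1$. Without this propagation step your intersection is short by $q-1$ vertices.

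The second problem is in your class-matching step. You argue that vertices in the same colour class of $H_2$ are non-adjacent and hence lie in a common colour class of $H_1$. But the forbidden \kpq{}'s are not required to be induced subgraphs of $G$: two vertices in one colour class of $H_2$ are non-adjacent \emph{in $H_2$}, yet each has one spare unit of degree ($t$ inside $H_2$ versus at most $t+1$ in $G$), so they may well be joined by an edge of $G$, and that edge may belong to $H_1$ and separate them into different classes of $H_1$. (This is not a hypothetical: for $q=2$ the paper's ``dense subgraphs'' are built precisely from colour classes that induce an edge.) Ruling this out is the content of the Claim inside the paper's proof, which takes a third vertex $v_3$ of the class (this is where $q\geq 3$ is used) and derives a degree-$(t+2)$ contradiction. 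Likewise your assertion that $G[N\cup\{v\}]$ is an induced subgraph of $H_1$ and $H_2$ and ``hence complete partite'' fails for the same non-induced reason. Both repairs are available, but they are precisely the two steps your sketch waves at rather than carries out.
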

\dowod
Let $V_1$, $V_2$, \ldots, $V_p$ be the color classes of $H_1$ and $v\in V_1$ a  common vertex of $H_1$ and $H_2$.
We first show that at most one vertex of $H_1$ does not belong to $H_2$.
 Since $v$ is adjacent to every vertex of $V_2$, $V_3$, \ldots, $V_p$ via $(p-1)q=t$ edges and  $\deg_G(v)\leq t+1$, there exists at most one other edge incident to $v$. Exactly $t$ of all edges incident to $v$ belong to $H_2$, therefore all vertices of $V_2$, $V_3$, \ldots, $V_p$ except at most one belong to $H_2$ as well. Suppose that if there exists a vertex  of $H_1$ which does not belong to $H_2$, then it belongs to $V_p$. Let $u$ be any vertex of $V_2$.  Since $p\geq 3$, $u \notin V_p$ and hence $u \in H_2$. By  repeating the same argument as for $v$, we get that all vertices of $V_1$, $V_3$, $V_4$, \ldots, $V_p$ except at most one belong to $H_2$, which means that all vertices of $V_1$ belong to $H_2$ and hence at most one vertex of $H_1$ does not belong to $H_2$.

\begin{claim}\label{claim:partite-classes}
If all vertices of one color class $V_i$ belong to $H_2$, then they belong to the same color class of $H_2$.
\end{claim}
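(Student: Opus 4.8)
The plan is to fix the color class $V_i$ of $H_1$ that lies inside $V(H_2)$, pick two arbitrary vertices $u,v\in V_i$, and show they belong to a common color class of $H_2$. The three ingredients I would use repeatedly are: (a) in any complete partite graph two vertices are non-adjacent \emph{iff} they lie in the same color class; (b) $H_2$ is a subgraph of $G$ (not necessarily induced), so non-adjacency in $G$ forces non-adjacency in $H_2$; and (c) the degree bound $\deg_G(x)\le t+1=(p-1)q+1$, which for $x\in V_i$ is almost entirely used up by the $t$ edges of $H_1$ joining $x$ to $V(H_1)\setminus V_i$, leaving $x$ at most one further neighbour in $G$.

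First I would dispose of the easy case: if $uv\notin E(H_2)$, then since $u,v\in V(H_2)$, ingredient (a) applied to $H_2$ puts $u$ and $v$ in the same color class of $H_2$, and we are done. So the remaining case is $uv\in E(H_2)\subseteq E(G)$. Here $u$ is adjacent in $G$ to all $t$ vertices of $V(H_1)\setminus V_i$ and, in addition, to $v\in V_i$; counting and using (c) forces $N_G(u)=(V(H_1)\setminus V_i)\cup\{v\}$, and symmetrically $N_G(v)=(V(H_1)\setminus V_i)\cup\{u\}$.

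The decisive step then uses $q\ge 3$: choose a third vertex $w\in V_i\setminus\{u,v\}$. Since $V_i\subseteq V(H_2)$, $w$ is a vertex of $H_2$; and from the neighbourhoods just computed, $w\notin N_G(u)$ and $w\notin N_G(v)$ (as $w\in V_i$ and $w\ne u,v$). Hence $uw,vw\notin E(G)$, so by (b) they are not edges of $H_2$, so by (a) $w$ shares the color class of $u$ in $H_2$ and also the color class of $v$ in $H_2$. Therefore $u$ and $v$ lie in one color class of $H_2$, contradicting $uv\in E(H_2)$. So the case $uv\in E(H_2)$ cannot occur, and the claim follows.

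The one place to be careful — and the only real obstacle — is to never confuse adjacency in $G$ with adjacency in the subgraph $H_2$: the argument is valid only because it uses the implication ``non-adjacent in $G$ $\Rightarrow$ non-adjacent in $H_2$'' and never its (false) converse. The hypothesis $q\ge 3$ enters exactly once, to supply the third vertex $w$; this is presumably why the $q=2$ case has to be treated separately, as the paper notes.
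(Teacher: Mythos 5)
Your proof is correct and uses essentially the same ingredients as the paper's: the third vertex supplied by $q\ge 3$, the degree bound $\deg_G\le t+1$, and the complete-partite structure of $H_2$. The paper reaches the contradiction slightly more directly (the third vertex must be adjacent in $H_2$ to one of the two, forcing that vertex to have degree $t+2$ in $G$), whereas you reach it by pinning down the neighbourhoods and placing the third vertex in both color classes, but this is the same argument read in the contrapositive direction.
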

\dowod
Suppose to the contrary  that two vertices $v_1,v_2\in V_i$ belong to different color classes of $H_2$. This means that $H_2$ contains an edge $(v_1,v_2)$.  Let $v_3$ be any other vertex of $V_i$ (it exists since $q\geq 3$). It cannot belong to the same color class of $H_2$ as both $v_1$ and $v_2$.  Thus it is adjacent to at least one of $v_1, v_2$. Suppose that $v_3$ is a neighbour of $v_1$.  This means that $v_1$ is incident to $t+2$ edges of $G$: $t$ edges of $H_1$, $(v_1,v_2)$ and $(v_1,v_3)$ - a  contradiction. 
\koniec
By the above claim  $H_1$ and $H_2$ have different vertex sets because otherwise they  would be equal.  We have already assumed that if $H_1$ contains a vertex not contained in $H_2$, then it belongs to $V_p$. Thus by the above claim, we conclude that $V_1, V_2, \ldots, V_{p-1}$ are color classes of $H_2$  and $q-1$ vertices of $V_p$ belong to $H_2$, which  
proves  that  $H_1 \cap H_2 = I_{q-1} \times K^{p-1}_q$.
\koniec

\begin{definition}\label{def:partite-problematic}
Let $H$ be any \kpq{} of $G$. We say that $H$ is {\bf \em unproblematic} if there exists another $H'=K^p_q$ of $G$ that shares a vertex with $H$ and such that  $w(H)\leq w(H')$.
\end{definition}

%\vspace{-0.5cm}

\begin{lemma}\label{lem:partite-disjoint-problematic}
Any two different problematic \kpq{}'s of $G$ are vertex-disjoint.
\end{lemma}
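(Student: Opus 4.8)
The plan is to argue by contraposition, exactly as in the proof of Lemma~\ref{lem:cliques-disjoint-problematic}. I would take two different \kpq{}'s $H_1$ and $H_2$ of $G$ that share a common vertex and show that at least one of them is unproblematic in the sense of Definition~\ref{def:partite-problematic}. Since the weights are real numbers, one of the inequalities $w(H_1)\le w(H_2)$ or $w(H_2)\le w(H_1)$ must hold; by symmetry assume $w(H_1)\le w(H_2)$. Then $H_2$ is a \kpq{} of $G$ different from $H_1$, it shares a vertex with $H_1$, and $w(H_1)\le w(H_2)$, so $H_1$ is unproblematic by Definition~\ref{def:partite-problematic}. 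Taking the contrapositive, any two distinct problematic \kpq{}'s of $G$ cannot share a vertex, i.e.\ they are vertex-disjoint.

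I do not expect any real obstacle: the statement follows immediately from the symmetric shape of Definition~\ref{def:partite-problematic}, using that the relation ``having a common vertex'' is symmetric and that the weight comparison totally orders any pair of forbidden subgraphs. Unlike Lemma~\ref{lem:cliques-disjoint-problematic}, there is only one type of forbidden subgraph here --- all of them are isomorphic to \kpq{} --- so no case analysis on the types of $H_1$ and $H_2$ is needed, and in particular the structural description of the overlap provided by Lemma~\ref{lem:partite-description} is not required for this lemma.
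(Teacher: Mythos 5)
Your proof is correct and is essentially the paper's own argument: the paper likewise takes two intersecting $K^p_q$'s, assumes w.l.o.g.\ $w(H_1)\le w(H_2)$, and concludes that $H_1$ is unproblematic by Definition~\ref{def:partite-problematic}. The only cosmetic difference is that the paper also cites Lemma~\ref{lem:partite-description} for the structure of $H_1\cap H_2$, which, as you observe, is not actually needed since the definition only requires a shared vertex and the weight inequality.
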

\dowod
Consider any two different \kpq{}'s $H_1$ and $H_2$ of $G$ with a common vertex. By Lemma~\ref{lem:partite-description}, $H_1 \cap H_2$  contains $I_{q-1} \times K^{p-1}_q$. Assume that $w(H_1)\leq w(H_2)$. Then $H_1$ fulfills Definition~\ref{def:cliques-unproblematic-kt}.
\koniec

%\vspace{-0.5cm}

\begin{theorem}\label{thm:partite}
For any \kpq{}-covering co-$t$-matching $\mc$ of $G$ there exists an \lbmatching{} $M'$ of $G'$ such that $w'(M') = w(\mc)$. \\
For any \lbmatching{} $M'$ of $G'$ there exists a \kpq{}-covering co-$t$-matching~$\mc$ of $G$ such that $w(\mc) \leq w'(M')$.
\end{theorem}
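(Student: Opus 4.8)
The plan is to follow the same scheme as in Section~\ref{sec:cliques}. Recall that $G'$ is built from $G$ as in Subsection~\ref{sec:non-disjoint-forbidden-cliques}: start with $G'=G$, replace every \emph{problematic} \kpq{} (by Lemma~\ref{lem:partite-disjoint-problematic} these are pairwise vertex-disjoint, so this is well defined) by its gadget, set the capacity interval of every vertex of $G$ of degree $t+1$ to $[1,t+1]$ and of every other vertex $v$ to $[0,\deg_G(v)]$, keep the original weight of every edge of $G$, let each half-edge of the gadget for $H$ incident to $v_i$ have weight $r_H(v_i)$, and let every other gadget edge have weight $0$. The statement then splits into two constructions paralleling Theorems~\ref{thm:cliques-opt-to-matching} and \ref{thm:cliques-matching-to-opt}, plus a cleanup step playing the role of Lemma~\ref{lem:cliques-removing} that handles the unproblematic \kpq{}'s.

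For the first half I would start $M'$ with all edges of \mc{} that lie in no problematic \kpq{}. For each problematic $H$, since \mc{} covers $H$ it contains an edge $(a,b)\in E(H)$ with $a\in V_i(H)$, $b\in V_j(H)$ and necessarily $i\neq j$; I would put the half-edges $(a,u_H^i)$ and $(b,u_H^j)$ into $M'$, match the remaining $p-2$ subdivision vertices $u_H^k$ to $z_H$, and add to $M'$ every edge of $\mc\cap E(H)$ other than $(a,b)$. As problematic \kpq{}'s are vertex-disjoint, each vertex of $G$ lies in at most one gadget, so $\deg_{M'}(v)=\deg_{\mc}(v)$ for every $v\in V$, $\deg_{M'}(u_H^k)=1$ and $\deg_{M'}(z_H)=p-2$; hence $M'$ is an \lbmatching{}. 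Since $w$ is vertex-induced on $H$, $w(a,b)=r_H(a)+r_H(b)$ equals the total weight of the two half-edges, and the rest of the gadget has weight $0$, so $w'(M')=w(\mc)$.

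For the second half, given an \lbmatching{} $M'$ I would let \mc{} contain every edge of $M'$ lying in $G$, and for each problematic $H$ read off from the gadget the two subdivision vertices $u_H^i,u_H^j$ ($i\neq j$ since each $u_H^k$ sees only $V_k(H)$ and $z_H$) that are matched by half-edges to $a\in V_i(H)$, $b\in V_j(H)$, and add $(a,b)\in E(H)$ to \mc{}. A degree count as before shows \mc{} is a co-$t$-matching covering every problematic \kpq{}, and since $w\ge 0$ one gets $w(\mc)\le w'(M')$. It remains to cover the unproblematic \kpq{}'s. As long as some unproblematic $H=\kpq{}$ is uncovered by the current \mc{}, I would take, using Definition~\ref{def:partite-problematic}, another $H'=\kpq{}$ sharing a vertex with $H$ with $w(H)\le w(H')$; by Lemma~\ref{lem:partite-description} and its proof one may name color classes so that $V_1(H),\dots,V_{p-1}(H)$ are color classes of $H'$ too, with $u$ the unique vertex of $H$ outside $H'$ and $u'$ the unique vertex of $H'$ outside $H$. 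Each $z\in V_1(H)\cup\dots\cup V_{p-1}(H)$ is incident to its $t$ edges inside $H$ (none in \mc{}, as $H$ is uncovered) plus the edge $(z,u')\in E(H')$, so $\deg_G(z)=t+1$ and therefore $(z,u')\in\mc$, giving $\deg_{\mc}(u')\ge t$. From $w(H)\le w(H')$ and vertex-inducedness on both cliques, $\sum_z\bigl(w(u',z)-w(u,z)\bigr)\ge 0$ over these $z$, so some such $z$ has $w(u,z)\le w(u',z)$; I would replace $(u',z)$ by $(u,z)$ in \mc{}. The result is still a co-$t$-matching (the degree of $u'$ drops only to at least $t-1\ge 1$ because $t\ge 3$, and degrees at $z$ are unchanged), it covers $H$ because $(u,z)\in E(H)$, it still covers $H'$ because every vertex of $V_1(H)\cup\dots\cup V_{p-1}(H)$ other than $z$ remains matched to $u'$, and its weight did not increase.

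The main obstacle is the cleanup step: I must show that replacing $(u',z)$ by $(u,z)$ never uncovers any \emph{other} forbidden \kpq{}, so that the set of covered forbidden subgraphs strictly grows and the loop terminates after finitely many steps, with a final \mc{} of weight at most $w'(M')$. The key point to exploit is that, $H$ being uncovered, every vertex of $V(H)$ has $\mc$-degree at most $1$, so $(u',z)$ is the only edge of \mc{} at $z$ and after the swap $(u,z)\in E(H)$ is the only edge at $z$; combining this with the rigidity of the overlap of two \kpq{}'s given by Lemma~\ref{lem:partite-description} (where $q\ge3$ is used), any \kpq{} passing through $(u',z)$ either also passes through $(u,z)$ or coincides with $H'$, and in both cases stays covered. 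That case analysis, plus the routine degree bookkeeping, is the only nontrivial ingredient; the two gadget constructions are direct transcriptions of the corresponding arguments in Section~\ref{sec:cliques}.
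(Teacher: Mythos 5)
Your proposal is correct and follows essentially the same route as the paper: the same gadget bookkeeping in both directions (choose one covering edge of each problematic $H$, split it into two half-edges, match the remaining $p-2$ subdivision vertices to $z_H$, and read the construction backwards for the converse), followed by the same $(u',z)\to(u,z)$ swap from Lemma~\ref{lem:cliques-removing} to cover the unproblematic \kpq{}'s. The only difference is that you make explicit the verification (via Lemma~\ref{lem:partite-description}) that the swap does not uncover any other forbidden subgraph, a point the paper delegates to the statement of Lemma~\ref{lem:cliques-removing} without elaboration.
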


\dowod
We prove the first sentence of Theorem~\ref{thm:partite} in the beginning. %We initialize $M'$ as the empty set. We add every edge of $M$ which does not belong to any problematic \kpq{} of $G$ to $M'$.
Let $H$ be any \kpq{} of $G$. Since $\mc$ covers $H$, some edge of $H$ belongs to $\mc$. If more than one edge of $H$ belongs to $\mc$, we choose one of them. Suppose that we chose $(a,b)\in\mc$. Assume that $a\in V_1(H)$ and $b\in V_2(H)$. Then we add edges $(a,u_H^1)$ and $(b,u_H^2)$ to $M'$. Additionally, we add $p-2$ edges, $(z_H,u_H^i)$ for every $i>2$, to $M'$. We add every other edge of $H$ to $M'$.

We prove the second sentence of Theorem~\ref{thm:partite} now. %We initialize $M$ as the empty set. We add every edge of $M'$ which belongs to $G$ to $M$. For every problematic \kpq{} of $G$ we will add some of its edges to $M$.
Let $H$ be any \kpq{} of $G$. Notice that exactly two subdivision vertices of the gadget for $H$ are matched to vertices of $G$ in $M'$, because $z_H$ is matched to exactly $p-2$ other subdivision vertices. Assume that $u_H^1$ and $u_H^2$ are not matched to $z_H$ in $M'$. Therefore, they are matched to some vertices $a$ and $b$, respectively, of $G$. We add an edge $(a,b)$ and every other edge of $H$ which belongs to $M'$ to $\mc$. Hence, $\mc$ covers every problematic \kpq{} of $G$.

However, $\mc$ may not cover some unproblematic \kpq{}'s of $G$. Let $H$ by any such \kpq{}. From Definition~\ref{def:partite-problematic}, there exists a $H'=K^p_q$ of $G$ such that $H\cap H'=I_{q-1}\times K^{p-1}_q$ and $w(H)\leq w(H')$. Let $u$ (resp. $u'$) be only vertex of $H$ (resp. $H'$) which does not belong to $H'$ (resp. $H$). Rest of the proof is identical to covering unproblematic \kt{} which shares some vertex with another \kt{} of $G$ in the proof of Lemma~\ref{lem:cliques-removing}.
\koniec

%\vspace{-0.25cm}

%The proof of Theorem~\ref{thm:partite} is similar to the ones of its analogues from Section~\ref{sec:cliques}.

%\vspace{-0.35cm}

\subsection{Case $q=2$}
%\vspace{-0.25cm}
Next  we consider the case of $q=2$. Notice that $t=2p-2$ in this case. It turns out that  two different \kpt{}'s of $G$ can have the same set of vertices.

%\vspace{-0.25cm}

\begin{lemma}\label{lem:partite-2-description}
Let $H_1$ and $H_2$ be  two  \kpt{}'s of $G$ with a common vertex. Then either:
%\vspace{-0.25cm}
\begin{enumerate}
\item\label{itm:partite-2-description-1} $H_1\cap H_2=I_1\times K^{p-1}_2$, or
\item\label{itm:partite-2-description-2}  $H_1$ and $H_2$ have the same set of vertices and at least two color classes of $H_1$ induce an edge in $G$.
\end{enumerate}
\end{lemma}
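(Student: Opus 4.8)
I read the statement (as with Lemma~\ref{lem:partite-description}) for two \emph{distinct} subgraphs $H_1,H_2$ isomorphic to $\kpt$, and recall that here $t=2(p-1)$ and $p\ge 3$. The only tool is the following \emph{degree remark}: if $w$ is a common vertex of $H_1$ and $H_2$, then $\deg_{H_1}(w)=\deg_{H_2}(w)=t$ while $\deg_G(w)\le t+1$, so the at most $t+1$ edges of $G$ at $w$ include $t$ from $H_1$ and $t$ from $H_2$; consequently at most one edge of $H_1$ at $w$ is not an edge of $H_2$, and symmetrically at most one edge of $H_2$ at $w$ is not an edge of $H_1$. First I would show $|V(H_1)\setminus V(H_2)|\le 1$. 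Put $A=V(H_1)\setminus V(H_2)$; applying the remark at a common vertex $v$ with $H_1$-colour class $C$ gives $|A\setminus C|\le 1$, since an $H_1$-neighbour of $v$ outside $V(H_2)$ would be an $H_1$-edge at $v$ missing from $H_2$; as $v\in C\setminus A$ this already forces $|A|\le 2$. If $|A|=2$, then $A$ meets at most two of the $\ge 3$ colour classes of $H_1$, so some colour class $C_0$ is disjoint from $A$; its two vertices are common vertices, and the bound applied at one of them yields $|A\setminus C_0|=|A|=2$, a contradiction. Hence $|A|\le 1$, symmetrically $|V(H_2)\setminus V(H_1)|\le 1$, and since $|V(H_1)|=|V(H_2)|=2p$ these two differences have a common size, which is $0$ or $1$.

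\textbf{Case $V(H_1)=V(H_2)=:S$.} Here I aim at alternative~\ref{itm:partite-2-description-2}. The colour classes of $H_i$ are exactly the pairs of a perfect matching $M_i$ of $S$ (the non-edges of $H_i$ inside $S$), and $M_1\ne M_2$ because $H_1\ne H_2$. For each pair $\{a,b\}\in M_1\setminus M_2$ the vertices $a,b$ lie in distinct colour classes of $H_2$, so $\{a,b\}\in E(H_2)\subseteq E(G)$, i.e.\ this colour class of $H_1$ induces an edge in $G$. Since $M_1\triangle M_2$ is a nonempty disjoint union of alternating cycles of even length at least $4$, we have $|M_1\setminus M_2|=\tfrac{1}{2}|M_1\triangle M_2|\ge 2$, so at least two colour classes of $H_1$ induce an edge in $G$.

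\textbf{Case $|V(H_1)\triangle V(H_2)|=2$.} Let $x$ be the unique vertex of $V(H_1)\setminus V(H_2)$, $x'$ the unique vertex of $V(H_2)\setminus V(H_1)$, $S=V(H_1)\cap V(H_2)$ (so $|S|=2p-1=t+1$ and $V(H_1)=S\cup\{x\}$), and let $\{x,y\}$ be the $H_1$-colour class of $x$. For every $w\in S\setminus\{y\}$ the edge $xw$ lies in $H_1$ but not in $H_2$, so by the degree remark it is the \emph{only} $H_1$-edge at $w$ missing from $H_2$; hence every $H_1$-neighbour of $w$ other than $x$ is an $H_2$-neighbour of $w$. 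Applying this with $w$ ranging over $N_{H_1}(y)=S\setminus\{y\}$ shows $y$ is an $H_2$-neighbour of every vertex of $S\setminus\{y\}$; since $|S\setminus\{y\}|=t=\deg_{H_2}(y)$ this forces $N_{H_2}(y)=S\setminus\{y\}$, so $x'$ is not an $H_2$-neighbour of $y$ and the $H_2$-colour class of $x'$ must be $\{x',y\}$. Consequently $x'w\in E(H_2)\setminus E(H_1)$ for every $w\in S\setminus\{y\}$, so by the remark again every $H_2$-neighbour of $w$ other than $x'$ is an $H_1$-neighbour of $w$. Combining the two directions gives $N_{H_1}(w)\setminus\{x\}=N_{H_2}(w)\setminus\{x'\}$ for all $w\in S\setminus\{y\}$, and together with $N_{H_1}(y)\cap S=N_{H_2}(y)\cap S=S\setminus\{y\}$ this yields $N_{H_1}(w)\cap S=N_{H_2}(w)\cap S$ for every $w\in S$, i.e.\ $H_1[S]=H_2[S]$. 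Since an edge lies in both $H_1$ and $H_2$ exactly when it lies in $H_1[S]$, we conclude $H_1\cap H_2=H_1[S]$; and $H_1[S]=H_1-x$ is the complete multipartite graph whose colour classes are the $p-1$ classes of $H_1$ other than $\{x,y\}$ together with the singleton $\{y\}$, that is, $I_1\times K^{p-1}_2$ — alternative~\ref{itm:partite-2-description-1}.

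\textbf{Expected main obstacle.} The delicate step is the last case. In Lemma~\ref{lem:partite-description} the hypothesis $q\ge 3$ let Claim~\ref{claim:partite-classes} place each colour class of $H_1$ inside a single colour class of $H_2$, which immediately transfers the partition; with $q=2$ this collapses, since a two-element colour class of $H_1$ may be split between two colour classes of $H_2$. So instead of transferring the partition wholesale I must track, vertex by vertex, the single edge by which the $H_1$- and $H_2$-neighbourhoods of a common vertex may differ, and show that every such discrepancy is realised by the one pair $x,x'$. The crux is forcing $y$, the $H_1$-partner of $x$, to be the $H_2$-partner of $x'$; once that is pinned down the colour classes on $S$ coincide and the conclusion follows. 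Getting this neighbourhood bookkeeping exactly right is essentially the only real work in the proof.
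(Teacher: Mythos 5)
Your proof is correct and follows the same overall decomposition as the paper's: first bound $|V(H_1)\setminus V(H_2)|$ by the degree count at a common vertex, then split on whether the vertex sets coincide. Where you differ is in rigor and in two sub-arguments. The paper disposes of the unequal-vertex-set case by invoking the first paragraph of the proof of Lemma~\ref{lem:partite-description} to get that $H_1\cap H_2$ \emph{contains} $I_1\times K^{p-1}_2$ and then saying ``if $H_1\cap H_2=I_1\times K^{p-1}_2$, we are done''; it never explicitly shows that when the vertex sets differ the intersection \emph{equals} $I_1\times K^{p-1}_2$ (this does follow, since $H_1\cap H_2$ is a spanning subgraph of $H_1[S]\cong I_1\times K^{p-1}_2$, but the paper leaves it implicit). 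Your neighbourhood bookkeeping — forcing $y$ to be the $H_2$-partner of $x'$ and deducing $H_1[S]=H_2[S]$ — supplies exactly this missing equality and is the genuinely new content of your write-up. In the equal-vertex-set case the paper argues directly: it picks an $H_1$-class $\{v_1,v_2\}$ split by $H_2$, and takes as the second class the $H_1$-class of the $H_2$-partner of $v_1$; your argument via the symmetric difference of the two perfect matchings (a union of alternating cycles of length at least four, hence $|M_1\setminus M_2|\ge 2$) reaches the same conclusion slightly less economically but just as validly. Both approaches cost the same asymptotically; yours buys a self-contained and airtight proof, the paper's buys brevity by leaning on Lemma~\ref{lem:partite-description}.
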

\dowod
Observe that the first paragraph of the proof of Lemma~\ref{lem:partite-description} holds  in this case because we did not  assume that $q\geq 3$ there. Therefore, $H_1 \cap H_2$   contains $I_{q-1} \times K^{p-1}_q$. If $H_1 \cap H_2 = I_{q-1} \times K^{p-1}_q$, we are done.
%we can prove that every two vertices of the same color class of $H_1$ belong to the same color class of $H_2$ as well, using the same argument as in third paragraph of the proof of Lemma~\ref{lem:cliques-ktt-description}. In this case $H_1$ and $H_2$ satisfy Point~\ref{itm:partite-2-description-1} of Lemma~\ref{lem:partite-2-description}.

If  $H_1$ and $H_2$ have the same set of vertices, we need to show that for some two different color classes of $H_1$, its vertices are connected via an edge of $G$. Since $H_1$ and $H_2$ are different, there exists a color class of $H_1$ consisting of vertices $v_1$ and $v_2$ which belong to different color classes of $H_2$. Since an edge $(v_1,v_2)$ belongs to $H_2$,  the color class of $H_1$ consisting of $v_1$ and $v_2$ is one of the color classes we are looking for. Let $u$ be the other vertex which belongs to the same color class of $H_2$ as $v_1$.  The color class of $H_1$  to which $u$ belongs is the second class from the claim.
\koniec

\begin{definition}\label{def:partite-dense}
Let $A$ be any subset of $2p$ vertices of $G$. We say that $G[A]$ is {\bf \em dense} if it contains at least two different forbidden \kpt{}'s.
The set $C_{G[A]}=\{ a \in A: N(a) \subset A\ \wedge |N(A)|=t+1\}$ is called the {\bf \em core} of $G[A]$.
\end{definition}

%\vspace{-0.5cm}

\begin{lemma}\label{lem:partite-core}
The size of the core of any dense subgraph of $G$ is even and at least  four.
\end{lemma}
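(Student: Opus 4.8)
The plan is to analyze a dense subgraph $G[A]$ with $|A|=2p$ that contains two different forbidden $K^p_2$'s, say $H_1$ and $H_2$. By Definition~\ref{def:partite-dense} and Lemma~\ref{lem:partite-2-description}, since $H_1\cap H_2$ would be too small if case~\ref{itm:partite-2-description-1} held (giving only $2p-1$ vertices, not $2p$), we are in case~\ref{itm:partite-2-description-2}: $H_1$ and $H_2$ have the same vertex set, namely $A$, and moreover at least two color classes of $H_1$ induce an edge of $G$. First I would observe that every vertex $a$ in the core satisfies $N(a)\subseteq A$ and has degree exactly $t+1=2p-1$ in $G$; since $a$ already has $t=2p-2$ neighbors inside its complete-partite neighborhood in $H_1$ (all of $A$ except $a$ and its $H_1$-color partner), the one extra edge at $a$ must go to its $H_1$-color partner. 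Hence a vertex lies in the core precisely when its $H_1$-color class is a pair joined by an edge of $G$ and both endpoints have all their neighbors inside $A$.

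Next I would establish evenness. The key point is that the core is a union of color classes of $H_1$: if one endpoint $a$ of a color pair $\{a,a'\}$ of $H_1$ is in the core, then $(a,a')\in E(G)$ and $N(a)\subseteq A$; I need to argue $N(a')\subseteq A$ too. Here I would count edges at $a'$: $a'$ has the $2p-2$ edges to $A\setminus\{a,a'\}$ from $H_1$, plus the edge $(a,a')$, which already gives degree $2p-1=t+1$, the maximum allowed; therefore $a'$ has no neighbor outside $A$, so $a'\in C_{G[A]}$ as well. Thus the core is a disjoint union of $2$-element color classes of $H_1$, so $|C_{G[A]}|$ is even.

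Finally, for the lower bound of four I would use the ``at least two color classes of $H_1$ induce an edge of $G$'' part of Lemma~\ref{lem:partite-2-description}(\ref{itm:partite-2-description-2}). Let $\{a,a'\}$ and $\{c,c'\}$ be two distinct color classes of $H_1$ that are edges of $G$. It remains to see that all four of $a,a',c,c'$ have all their neighbors in $A$; but this follows exactly as above — each of these vertices already attains degree $t+1$ using its $2p-2$ partite edges plus the intra-class edge — so $\{a,a',c,c'\}\subseteq C_{G[A]}$ and $|C_{G[A]}|\geq 4$. I expect the main obstacle to be the bookkeeping that forces us into case~\ref{itm:partite-2-description-2} of Lemma~\ref{lem:partite-2-description} and the verification that the core is saturated (degree exactly $t+1$) at every relevant vertex; once the degree bookkeeping is pinned down, both evenness and the bound of four fall out from the same observation that an intra-color-class edge immediately maxes out the degree and therefore confines all neighbors to $A$.
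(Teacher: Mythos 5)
Your proposal is correct and follows essentially the same route as the paper: identify the core with the union of those $2$-element color classes of $H_1$ that induce an edge of $G$ (which gives evenness), and then invoke Lemma~\ref{lem:partite-2-description}(\ref{itm:partite-2-description-2}) to get at least two such classes, hence size at least four. The paper states the identification of the core with these color classes without the degree bookkeeping you supply, so your write-up is simply a more detailed version of the same argument.
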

\begin{proof}
Consider two different \kpt{}'s $H_1$ and $H_2$ with the same vertex set $A$. Notice that the core of $A$ is equal to the sum of color classes of $H_1$ whose induce an edge in $G$, so its size is even. Moreover, its size is at least four, because by Lemma~\ref{lem:partite-2-description}, there are at least two such color classes.
\end{proof}

%\vspace{-0.25cm}

To guarantee that every \kpt{} contained in a dense subgraph of $G$ is covered by the computed co-$t$-matching, we introduce a separate gadget for a dense subgraph. These gadgets are used alongside  gadgets for problematic \kpt{}'s. (We generalize Definition~\ref{def:partite-problematic} and the gadget for \kpq{} to the case $q=2$ in a straightforward manner.) In the following lemma we prove that any dense subgraph of $G$ is vertex-disjoint both with any other dense subgraph and with any  other problematic \kpt{}'s of $G$. 
\begin{lemma}\label{lem:dense-disjoint}
Let $H$ be any dense graph of $G$. Then any \kpt{} of $G$ which shares a vertex with $H$ has  the same set of vertices as $H$.
\end{lemma}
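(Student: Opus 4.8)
The plan is to argue by contradiction. Fix a dense subgraph $H=G[A]$, so $|A|=2p$ and, by Definition~\ref{def:partite-dense}, $G[A]$ contains two distinct forbidden \kpt{}'s $H_1$ and $H_2$; each of them has $2p$ vertices and hence vertex set exactly $A$. Suppose some \kpt{} $H'$ of $G$ shares a vertex with $H$ but $V(H')\neq A$. Since $H'$ and $H_1$ share a vertex and have different vertex sets, Lemma~\ref{lem:partite-2-description} forces $H_1\cap H'=I_1\times K^{p-1}_2$. Following the relabelling used in the proof of Lemma~\ref{lem:partite-description}, I would name the color classes of $H_1$ so that $V_1(H_1),\dots,V_{p-1}(H_1)$ are also color classes of $H'$, while $V_p(H_1)=\{u,w\}$ with $u\notin V(H')$ and $w\in V(H')$. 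Writing $u'$ for the unique vertex of $V(H')\setminus A$, we then have $V(H')=(A\setminus\{u\})\cup\{u'\}$, and the color class of $H'$ containing $u'$ must be $\{u',w\}$ (the only two vertices of $H'$ left over).

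The first ingredient is that $u'$ is $G$-adjacent to every vertex of $V_1(H_1)\cup\dots\cup V_{p-1}(H_1)$: these are exactly the vertices of $H'$ lying outside $u'$'s own color class $\{u',w\}$, and since $H'$ is complete partite and a subgraph of $G$, $u'$ is joined to all of them by edges of $G$.

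The second ingredient comes from density. By Lemma~\ref{lem:partite-core} the core $C_{G[A]}$ has at least four vertices, and, as shown in the proof of that lemma, $C_{G[A]}$ is the union of those color classes of $H_1$ that induce an edge of $G$; in particular it is a union of color classes of $H_1$, and every $x\in C_{G[A]}$ has $N_G(x)\subseteq A$ (it has $t$ neighbours inside its \kpt{} plus its color-partner, which forces $\deg_G(x)=t+1$ and $N_G(x)\subseteq A$). Since $|V_p(H_1)|=2<4\le|C_{G[A]}|$, the core must contain some color class $V_j(H_1)$ with $j\le p-1$. Pick $x\in V_j(H_1)$. On one hand $N_G(x)\subseteq A$ because $x$ is in the core; on the other hand $u'\in N_G(x)$ by the first ingredient, yet $u'\notin A$ — a contradiction. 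Hence $V(H')\subseteq A$, and since $|V(H')|=2p=|A|$ we get $V(H')=A$.

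The delicate point, and the one I would spend most care on, is extracting the exact shape of $H_1\cap H'$ from Lemma~\ref{lem:partite-2-description}: one has to be sure that the lone vertex of $V(H_1)$ missing from $V(H')$ sits together with a surviving vertex in a single color class of $H_1$, and that the first $p-1$ classes of $H_1$ persist verbatim in $H'$. Once that rigidity is in hand, density supplies a ``deep'' color class — one all of whose vertices have every neighbour inside $A$ — that necessarily avoids the deleted vertex $u$ and therefore still lies in $H'$, and the adjacency of the outsider $u'$ to that class is immediately contradictory. The degenerate possibilities (for instance $w$ itself lying in the core, or several color classes of $H_1$ being edge-inducing) are harmless, since we only need one edge-inducing color class disjoint from $\{u\}$.
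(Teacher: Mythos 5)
Your proof is correct and follows essentially the same route as the paper: identify the unique vertex $u'$ of $H'$ outside $V(H)$, observe it is adjacent to $2p-2$ vertices of $V(H)$, and contradict the definition of the core (which has at least four vertices, all of whose neighbours lie inside $V(H)$). The only difference is that the paper skips the color-class rigidity entirely — since at most two vertices of $V(H)$ are non-neighbours of $u'$, at least two core vertices must be adjacent to $u'$, which already gives the contradiction without locating a full edge-inducing color class.
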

\dowod
Let $H'$ be any \kpt{} of $G$ which has a common vertex with $H$. Suppose that the set of vertices of $H'$ is different than $V(H)$. Since there exists some \kpt{} of $H$,  we can use Lemma~\ref{lem:partite-2-description} and conclude that all but one vertex of $H'$ belongs to $H$. Let $v\in  V(H') \setminus V(H)$. Notice that vertex $v$ is adjacent to at least $2p-2$ vertices of $H$. By Lemma~\ref{lem:partite-core}, the core of $H$ consists of at least four vertices, which means that at least two of these vertices are adjacent to $v$ - a contradiction with the definition of the core.
\koniec

The following observation describes all \kpt{}'s contained in a dense subgraph of $G$.

\begin{observation}\label{obs:all-kpt-in-dense-subgraph}
Consider any dense subgraph $H$ of $G$ with the core $C_H$. Let $H'$ be any \kpt{} of $H$. Then the color classes of any \kpt{} of $H$ consist of the color classes of $H'$ which are not contained in $C_H$ and some partition of $C_H$ into two element subsets. Moreover, any such partition of $C_H$ forms the color classes of exactly one \kpt{} of $H$. 
\end{observation}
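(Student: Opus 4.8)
The plan is to analyze which $2$-element subsets of $V(H)$ can serve as color classes of a $K^p_2$ contained in $H$, using the structural information provided by Lemma~\ref{lem:partite-2-description} and the definition of the core. First I would fix a reference $K^p_2$, call it $H'$, with color classes $W_1, W_2, \ldots, W_p$ (these exist because $H$ is dense). I would observe that a $2$-element subset $\{x,y\} \subseteq V(H)$ can be a color class of some $K^p_2$ of $H$ only if $(x,y) \notin E(G)$, i.e., $x$ and $y$ are non-adjacent; conversely, for $\{x,y\}$ to be ``joinable'' with the rest it must be that $x$ and $y$ have the same neighbourhood among $V(H) \setminus \{x,y\}$, namely all of it. The key point is to show that the only non-adjacent pairs inside $V(H)$ are: (i) the color classes $W_i$ of $H'$ that lie outside the core $C_H$, and (ii) arbitrary pairs within $C_H$ formed from distinct vertices. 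Indeed, by the definition of the core $C_H = \{a \in V(H) : N(a) \subseteq V(H) \wedge |N(V(H))| = t+1\}$, every vertex of $C_H$ has all $t$ of its $G$-neighbours inside $V(H)$, so within $V(H)$ a core vertex has exactly one non-neighbour; and from Lemma~\ref{lem:partite-2-description} (comparing $H'$ with the alternative $K^p_2$ guaranteed by density) the color classes of $H'$ that meet $C_H$ are precisely those inducing an edge in $G$ somewhere, forcing the core to be exactly the union of those color classes of $H'$.

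Next I would argue the two directions of the ``Moreover'' part. For one direction, take any $K^p_2 \subseteq H$, call it $H''$; by Lemma~\ref{lem:partite-2-description} it has the same vertex set as $H'$ (it shares a vertex with $H'$ and cannot be of type~\ref{itm:partite-2-description-1}, since that would make $|V(H'')| = |V(H')| - 1$ but $H$ is on exactly $2p$ vertices and $H''$ uses $2p$ of them). Each color class of $H''$ is a non-adjacent pair in $V(H)$, hence by the classification above it is either one of the non-core classes $W_i$ of $H'$ or a pair inside $C_H$; counting vertices, the classes lying outside $C_H$ must be exactly the non-core $W_i$'s, and the remaining $|C_H|/2$ classes partition $C_H$ into pairs. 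For the converse, given any partition of $C_H$ into $2$-element subsets, I would check that adjoining it to the non-core color classes of $H'$ yields a genuine $K^p_2$: one must verify every cross-class pair is an edge of $G$. A vertex of $C_H$ is adjacent in $G$ to every vertex of $V(H)$ except its unique non-neighbour; that unique non-neighbour is its partner in the original reference class when the vertex is paired with it, but for a different partition we need the new partner to still be a non-neighbour and all other vertices of $V(H)$ to be neighbours --- this is where I would use that within $C_H$ the non-adjacency relation is a perfect matching whose classes are exactly the core parts of $H'$'s color classes, so any two vertices of $C_H$ lying in different parts of $H'$'s coloring are adjacent, while any valid new partition must pair a vertex only with a vertex it is non-adjacent to.

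The main obstacle I expect is precisely pinning down the non-adjacency structure inside the core: I need to show that the ``non-neighbour'' relation restricted to $C_H$ is a perfect matching (each core vertex has exactly one non-neighbour in $V(H)$, and that non-neighbour also lies in $C_H$), and that its matched pairs are exactly the intersections of $C_H$ with the color classes of $H'$. The degree bound $\deg_G(v) \le t+1 = 2p-1$ gives that a core vertex, adjacent to all of $V(H) \setminus \{v\}$ except for a set of size at most $2p-1-(2p-1) = 0$ forced neighbours outside... more carefully, $v \in C_H$ has $N(v) \subseteq V(H)$ and $|V(H) \setminus \{v\}| = 2p-1$, while $\deg_G(v) \le 2p-1$, so $v$ has at most one non-neighbour in $V(H)$; being a color class vertex of $H'$ it has exactly one, namely its $H'$-classmate. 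The subtlety is showing this classmate is itself in $C_H$, which follows because the whole color class of $H'$ containing a core vertex induces an edge (by the density/Lemma~\ref{lem:partite-2-description} argument) and hence lies in the core. Once this matching structure is established, both the description of color classes and the bijection with partitions of $C_H$ follow by straightforward combinatorial bookkeeping, and the ``exactly one'' uniqueness comes from the fact that the non-core color classes and the chosen core-partition determine the $K^p_2$ completely.
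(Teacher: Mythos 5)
Your proof is built on a reading of the core that is exactly backwards, and the error propagates through the whole argument. By Definition~\ref{def:partite-dense}, a vertex $a\in C_H$ satisfies $N(a)\subseteq V(H)$ and $|N(a)|=t+1=2p-1=|V(H)\setminus\{a\}|$, so $N(a)=V(H)\setminus\{a\}$: a core vertex has \emph{zero} non-neighbours inside $V(H)$, not one, and $G[C_H]$ is a clique. (This is consistent with the proof of Lemma~\ref{lem:partite-core}: the core is the union of the color classes of $H'$ that \emph{induce an edge}, i.e.\ whose two vertices are adjacent in $G$.) Your claim that a core vertex's unique non-neighbour is its $H'$-classmate, and hence that non-adjacency restricted to $C_H$ is a perfect matching, is therefore false. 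The second error feeding this is your opening observation that a pair $\{x,y\}$ can be a color class of a \kpt{} of $H$ only if $(x,y)\notin E(G)$. Forbidden subgraphs in this paper are not induced subgraphs (see case~2 of Lemma~\ref{lem:partite-2-description}, which explicitly allows color classes to induce edges), so a color class may be an adjacent pair --- indeed the classes inside the core always are. Note also that if your structural picture were right, the statement you are proving would be false: pairing each core vertex with its alleged unique non-neighbour would leave exactly one admissible partition of $C_H$, contradicting the ``any such partition'' clause.

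The correct dichotomy is the mirror image of yours. A vertex $x\notin C_H$ that lies in a \kpt{} on $V(H)$ has exactly one non-neighbour $y$ in $V(H)$ (it has $t$ neighbours among the $2p-1$ other vertices of $V(H)$); since in any \kpt{} on $V(H)$ the vertex $x$ must be adjacent to all $2p-2$ vertices outside its class, its class is forced to be $\{x,y\}$, which is precisely its color class in $H'$ and is disjoint from $C_H$ (note $y\notin C_H$ as well, being a non-neighbour of $x$). A vertex of $C_H$ therefore cannot be paired with a non-core vertex, so the classes meeting $C_H$ form a partition of $C_H$ into pairs; conversely \emph{every} such partition, adjoined to the non-core classes of $H'$, yields a \kpt{} of $H$ because core vertices are adjacent to everything in $V(H)\setminus\{a\}$ and the cross-pairs between distinct non-core classes are edges of $H'$, all of which lie in the induced subgraph $H$. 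Uniqueness is immediate since the edge set of a \kpt{} is determined by its class partition. With the adjacency facts corrected in this way your classification-plus-counting scheme does go through; as written, it does not.
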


%\vspace{-0.25cm}

\begin{lemma}\label{lem:partite-dense-potential}
The  weight function $w$ is vertex-induced on any dense subgraph $H$ of $G$. Moreover, the potential function of any \kpt{} of $H$ is also a potential function $r_H$ of $H$.
\end{lemma}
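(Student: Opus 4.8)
The plan is to pin down the structure of a dense subgraph $H$ on vertex set $A$, $|A|=2p$, with core $C_H$, and then build an explicit potential function. Fix one \kpt{} $H'$ contained in $H$, and let $r_{H'}$ be a potential function of $H'$ (which exists since $w$ is vertex-induced on every \kpq{}). Write $D = A \setminus C_H$; by Observation~\ref{obs:all-kpt-in-dense-subgraph}, $D$ is a union of color classes of $H'$, each of size $2$, and every vertex of $D$ is adjacent in $G$ to every other vertex of $A$ except its one partner in its color class of $H'$. Note that $|C_H|\ge 4$ by Lemma~\ref{lem:partite-core}, and that every vertex of $C_H$ has all its neighbours inside $A$, so $C_H$ induces in $G$ a complete graph (each $c\in C_H$ has degree $t+1=2p-1$, is adjacent to the $2p-2$ vertices of $A\setminus\{c\}$, hence to all of $A\setminus\{c\}$). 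First I would record these adjacency facts precisely: $G[A]$ is obtained from the complete graph $K_{2p}$ by deleting exactly the $|D|/2$ "partner" edges inside $D$; equivalently, the only non-edges of $G[A]$ are the matching inside $D$ given by the color classes of $H'$ that lie in $D$. (This uses Lemma~\ref{lem:partite-2-description}: any second \kpt{} with vertex set $A$ splits every core color class of $H'$ into a cross edge, so all core-core pairs are edges; and a short degree count rules out any edge across a $D$-color-class.)

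Next I would define the candidate potential $r_H$ on all of $A$. The natural guess is to set $r_H(v) = r_{H'}(v)$ for $v \in D$, and to choose $r_H$ on $C_H$ so that $w$ is realized on all core-core edges and all core-$D$ edges. For a core vertex $c$ and a vertex $a\in D$, the edge $(c,a)$ lies in $H'$ (since $a$'s color class in $H'$ is inside $D$ and $c$ is in a different color class of $H'$), so we must have $r_H(c) + r_{H'}(a) = w(c,a)$, forcing $r_H(c) := w(c,a) - r_{H'}(a)$. I would show this is well-defined, i.e.\ independent of the choice of $a\in D$: for two vertices $a_1,a_2 \in D$ in different color classes of $H'$, the triangle $c a_1 a_2$ lies entirely in $H'$ (all three edges present), so $w(c,a_1) - r_{H'}(a_1) = w(c,a_2) - r_{H'}(a_2)$ both equal $r_{H'}(c)$ — in fact this just gives $r_H(c) = r_{H'}(c)$, so the candidate is simply $r_H := r_{H'}$. (If $D = \emptyset$, i.e.\ $C_H = A$, I would instead argue directly that $K_{2p}$ with vertex-induced weights on two edge-disjoint perfect matchings extends to a global potential, by the triangle argument below.)

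So the real content is: $r_{H'}$, which a priori only realizes $w$ on the edges of $H'$, in fact realizes $w$ on every edge of $G[A]$. The only edges of $G[A]$ not in $H'$ are the core-core edges that lie within a single color class of $H'$. Let $(c_1,c_2)$ be such an edge, $c_1,c_2$ in the same color class $\{c_1,c_2\}$ of $H'$, both in $C_H$. By Lemma~\ref{lem:partite-2-description} there is a second \kpt{} $H_2$ with vertex set $A$, and by Observation~\ref{obs:all-kpt-in-dense-subgraph} we may choose the partition of $C_H$ realizing $H_2$ so that $c_1, c_2$ lie in different color classes of $H_2$; hence $(c_1,c_2) \in E(H_2)$. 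Now $w$ is vertex-induced on $H_2$ with some potential $r_{H_2}$. I claim $r_{H_2} = r_{H'}$ on all of $A$: pick any vertex $d$ outside the color classes of both $c_1$ in $H'$ and $H_2$ (possible since $|A| = 2p \ge 6$ and each color class has size $2$); then the edges $(c_1,d)$ and $(c_2,d)$, and enough further edges, lie in $H' \cap H_2$, and a spanning-connectivity / triangle-chaining argument on the graph of edges common to $H'$ and $H_2$ shows the two potentials agree up to an additive constant on each connected piece, and the constant is forced to be $0$ once they agree at one vertex. (Here I would use that $H' \cap H_2$, having vertex set $A$ and all but at most a perfect matching's worth of edges, is connected and non-bipartite, so a potential function on it is unique.) Then $w(c_1,c_2) = r_{H_2}(c_1) + r_{H_2}(c_2) = r_{H'}(c_1) + r_{H'}(c_2)$, as desired. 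This proves both assertions of the lemma at once: $r_H := r_{H'}$ is a potential function of $H$, and it is simultaneously a potential function of every \kpt{} of $H$. The main obstacle is this last uniqueness/consistency step — ensuring that the various potential functions coming from different \kpt{}'s of $H$ all coincide on $A$ — which is where the non-bipartiteness of the common subgraph (guaranteed by $|C_H| \ge 4$ and triangles among core vertices) is essential.
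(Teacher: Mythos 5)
Your proposal is correct and rests on the same key idea as the paper's proof: since all \kpt{}'s of a dense subgraph $H$ share the vertex set $V(H)$ and every edge of $H$ lies in one of them (Observation~\ref{obs:all-kpt-in-dense-subgraph}), it suffices to show that the potential functions of any two such \kpt{}'s coincide, which both arguments derive from the presence of triangles in their common edge set --- the paper by explicitly constructing, for each vertex $v$, a triangle through $v$ contained in both \kpt{}'s and invoking Observation~\ref{obs:triangle-potentials}, and you by observing that the intersection is connected and non-bipartite so its potential function is unique. The only blemishes are cosmetic: two potential functions on a connected graph differ by an alternating sign on a bipartition (not an additive constant), and $H'\cap H_2$ is $K_{2p}$ minus the union of \emph{two} perfect matchings rather than one; neither affects the validity of your argument.
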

\dowod
We show that the potential functions of any two different \kpt{}'s of $H$ are the same. It ends proof since, by Observation~\ref{obs:all-kpt-in-dense-subgraph}, every edge of $H$ belongs to some \kpt{} of $H$.

Let $H_1$ and $H_2$ be any two different \kpt{}'s of $G$. Let $r_1$ and $r_2$ be any potential funcions of $H_1$ and $H_2$, respectively. Consider any vertex $v$ of $H$. Assume that $v\in V(H_1)$. We show that there exists a triangle $(v,v_1,v_2)$ which belongs to both $H_1$ and $H_2$. In the light of Observation~\ref{obs:triangle-potentials} it implies that $r_1(v)=r_2(v)$.

Set $v_1$ to any vertex of $V_2(H_1)$ which belongs to other color class of $H_2$ than $v$. If there exists some vertex of $V_3(H_1)$ (this color class exists since $p\geq 3$) which belongs to the same color class of $H_2$ as neither $v$ nor $v_1$, we choose it as $v_2$.

Otherwise, both vertices of $V_3(H_1)$ belong to the same color class of $H_2$ as $v$ and $v_1$, respectively. In this case we choose $v_2$ in such a way that it belongs to the same color class of $H_2$ as $v_1$ and we change $v_1$ to another vertex of $V_2(H_1)$. Since $q=2$, $v_1$ has to be in different color class of $H_2$ than $v$ and $v_2$.
\koniec

%\vspace{-0.25cm}

To construct a gadget for a dense subgraph of $G$ we make use of the following lemma. In our applications, $M$ will typically be a set of edges of a co-$t$-matching of $G$ which belong to $H$.

\begin{lemma}\label{lem:partite2-edges}
Let $H$ be any dense subgraph of $G$ with the core $C_H$ of size $2k$. Assume that every vertex of $C_H$ is incident to at least one edge of a subset $M \subseteq E(H)$. Then the following statements are equivalent.
\begin{enumerate}
\item $M$ covers every \kpt{} of $H$.
\item $M$ is not a perfect matching of $G[C_H]$.
\item some vertex of $C_H$ is incident to at least two edges of $M$, or some edge of $M$ is incident to some vertex of $V(H)\setminus C_H$.
\end{enumerate}
\end{lemma}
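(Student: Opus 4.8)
The plan is to close the cycle of implications (the first condition) $\Rightarrow$ (the second) $\Rightarrow$ (the third) $\Rightarrow$ (the first), after recording two structural facts about the dense subgraph $H$. The first fact is Observation~\ref{obs:all-kpt-in-dense-subgraph}: fixing one \kpt{} $H'$ of $H$, the \kpt{}'s of $H$ are exactly the graphs obtained from $H'$ by keeping the color classes of $H'$ disjoint from $C_H$ and replacing the remaining ones by an arbitrary partition of $C_H$ into pairs; in particular $C_H$ is a union of color classes of \emph{every} \kpt{} of $H$, and every \kpt{} of $H$ shares the same color classes outside $C_H$. The second fact, which I would read off from Definition~\ref{def:partite-dense}, Lemma~\ref{lem:partite-2-description} and the proof of Lemma~\ref{lem:partite-core}, is that a color class $\{a,b\}$ of any \kpt{} of $H$ induces an edge of $G$ if and only if $\{a,b\}\subseteq C_H$; hence no color class disjoint from $C_H$ carries an edge of $G$, and so none carries an edge of $H$.

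Given the standing hypothesis that every vertex of $C_H$ meets an edge of $M$ and that $M\subseteq E(H)$, the equivalence of the second and third conditions is immediate: the negation of the third condition says that no vertex of $C_H$ meets two edges of $M$ and that $M$ has no edge leaving $C_H$, which together with the hypothesis says precisely that $M$ is a perfect matching of $G[C_H]$, i.e.\ the negation of the second condition. For (the first) $\Rightarrow$ (the second) I would argue contrapositively: if $M$ is a perfect matching of $G[C_H]$, its $|C_H|/2$ edges form a partition of $C_H$ into pairs, which by the first structural fact is the color-class structure within $C_H$ of some \kpt{} $H_M$ of $H$; every edge of $M$ then joins the two vertices of one color class of $H_M$, so $M\cap E(H_M)=\emptyset$ and $M$ fails to cover $H_M$.

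The crux is (the third) $\Rightarrow$ (the first), which I would prove by contradiction. Suppose the third condition holds and yet some \kpt{} $H_1$ of $H$ is not covered by $M$, so every edge of $M$ joins the two vertices of a common color class of $H_1$. For $v\in C_H$, its color class in $H_1$ is a pair $\{v,x\}$ with $x\in C_H$ by the first fact, and since $G$ is simple the unique edge inside $\{v,x\}$ is $(v,x)$; as $v$ meets at least one edge of $M$, the edges of $M$ at $v$ are exactly $\{(v,x)\}$. Hence no vertex of $C_H$ meets two edges of $M$, and every edge of $M$ incident to $C_H$ lies inside $C_H$. An edge of $M$ with both endpoints outside $C_H$ would be a color class of $H_1$ inducing an edge of $G$, forcing it inside $C_H$ by the second fact, a contradiction; so every edge of $M$ meets $C_H$, thus lies inside $C_H$, and $M$ is a perfect matching of $G[C_H]$, contradicting the third condition via the equivalence already shown. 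The main obstacle is exactly this step: excluding the possibility that $M$ hides entirely inside the color classes of some \kpt{} while still touching every core vertex — this is where both structural facts are indispensable, the first forcing the core vertices to be matched within $C_H$ in every \kpt{}, and the second preventing $M$ from escaping through a color class disjoint from the core. The remaining bookkeeping is routine given the hypothesis and the simplicity of $G$.
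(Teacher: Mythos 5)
Your proof is correct and follows essentially the same route as the paper's: the same cycle of implications, the same use of Observation~\ref{obs:all-kpt-in-dense-subgraph} to repartition $C_H$ into a perfect matching's color classes for $(1)\Rightarrow(2)$, and the same two case distinctions (a core vertex of degree two in $M$, or an edge of $M$ leaving $C_H$) underlying $(3)\Rightarrow(1)$. The only cosmetic difference is that you run $(3)\Rightarrow(1)$ by contraposition while the paper argues it directly, exhibiting for each \kpt{} an edge of $M$ it must contain.
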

\dowod
We prove the three following implications.

\emph{(1) $\Rightarrow$ (2).}
We prove it by contraposition. Assume that $M$ is a perfect matching of $G[C_H]$. Then $M$ forms a partition of some $H'=\kpt{}$ of $H$ together with color classes of any \kpt{} of $H$ whose are not contained in $C_H$. Notice that $H'$ is not covered by $M$.

\emph{(2) $\Rightarrow$ (3).}
We prove it by contraposition. Assume that every vertex of $C_H$ is incident to at most one edge of $M$ and every edge of $M$ has both endpoints in $C_H$. From our assumption, every vertex of $C_H$ is incident to exactly one edge of $M$. Therefore, $M$ is a perfect matching of $G[C_H]$.

\emph{(3) $\Rightarrow$ (1).}
Let $H'$ be any \kpt{} of $H$. We will show that some edge of $M$ belongs to $E(H')$. We consider two cases. If some vertex $v$ of $C_H$ is incident to two different edges of $M$, say $(v,u_1)$ and $(v,u_2)$, then at least one of these edges belongs to $E(H')$ since $u_1$ or $u_2$ belongs to a different color class of $H'$ than $v$. If some edge $e$ of $M$ is incident to some vertex of $V(H)\setminus C_H$, then $e$ belongs to $E(H')$ since, by Observation~\ref{obs:all-kpt-in-dense-subgraph}, it belongs to every \kpt{} of $H$. 
\end{proof}

Let $H$ be any dense subgraph of $G$ with the core~$C_H$ of size $2k$. Let $V_1, V_2,\ldots, V_{p-k}$ be the color classes of any \kpt{} of $H$ whose are not contained in $C_H$. %To slightly simplify the presentation, we assume that the weight of every edge of $H$ is positive (as opposed to nonnegative).
We define $G_H$ -- {\bf \em a gadget for $H$}. By Lemma~\ref{lem:partite-dense-potential}, $w$ is vertex-induced on $H$. Let $r_H$ be the potential function of $H$. We define $c_H$ -- {\bf \em a center of $H$} -- as a vertex of $C_H$ of the minimum potential. If there many such vertices, we choose any of them. If the potential of $c_H$ is negative, then the gadget for $H$ is empty. We show later in the proof of Theorem~\ref{thm:partite-2-matching-to-opt} that every \kpt{} of such dense subgraph $H$ is covered by the constructed co-$t$-matching $\mc$.

We assume now that the potential of $c_H$ is nonnegative. In such a case, we introduce $p-k+1$ subdivision vertices $u_H^c,u_H^1,u_H^2,\ldots,u_H^{p-k}$ and a global vertex $z_H$. For every $i$, we connect $u_H^i$ with every vertex of $V_i$ by half-edges and with $z_H$. We connect $u_H^c$ with $c_H$ by two half-edges and with $z_H$ by two edges (see Fig.~\ref{fig:partite2Gadget}). A half-edge of $H$ incident to vertex $v$ gets weight $r_H(v)$. Notice that both half-edges incident to $c_H$ get weight $r_H(c_H) \geq 0$. The remaining edges of the gadget get weight $0$. We set the capacity interval of vertex $z_H$ to $[p-k,p-k]$, of vertex $u_H^c$ to $[2,2]$ and of every other subdivision vertex of the gadget for $H$ to $[1,1]$.

The main idea behind this gadget is the following. In the vein of Lemma~\ref{lem:partite2-edges}, we want to ensure that some vertex of $C_H$ is incident to some two edges of $\mc$ or there exists an edge of $\mc \cap E(H)$ incident to some vertex of $V(H)\setminus C_H$. If some vertex of $C_H$ is incident to at least two edges of $\mc$, then we can assume that it is a center of $H$ since $\mc$ is not minimum otherwise. If some edge of $\mc$ has exactly one endpoint in $C_H$, then we can assume that it is $c_H$ by replacing some edges of $\mc$ if necessary. Therefore, it is sufficient to guarantee that exactly two half-edges of $H$ are present in $M'$. If both of these half-edges are adjacent to $c_H$, it corresponds to including some two edges incident to $c_H$ in $\mc$ by "splitting" some edge of $M'$ whose both endpoints belong to $C_H$. Otherwise, these half-edges correspond to some edge of $H$ with some endpoint in $V(H)\setminus C_H$.

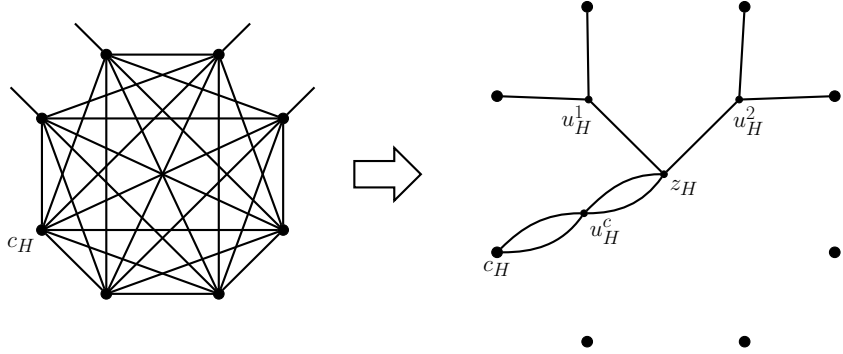
\begin{figure}[htpb]
\centering
\begin{tikzpicture}[scale=0.35,transform shape]
	\pic (arrow) [scale=0.5]{transformsTo};
	\pic (H) [left=7cm of arrow-leftEnd] {partite2};
	\pic (g) [right=9cm of arrow-rightEnd] {partite2Gadget};
\end{tikzpicture}
\caption{A gadget for a dense subgraph consisting of eight vertices with the core of size four whose center $c_H$ has nonnegative potential.\label{fig:partite2Gadget}}
\end{figure}

%\vspace{-0.25cm}

\begin{theorem}\label{thm:partite-2-opt-to-matching}
Let $\mc$ be a minimum weight \kpq{}-covering co-$t$-matching of $G$. Then we can find an \lbmatching{} $M'$ of $G'$ such that $w'(M') = w(\mc)$.
\end{theorem}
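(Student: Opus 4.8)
The plan is to convert the given co-$t$-matching $\mc$ into an \lbmatching{} of $G'$ by processing each gadget separately, exactly in the spirit of Theorem~\ref{thm:cliques-opt-to-matching} and the first part of Theorem~\ref{thm:partite}, with the dense-subgraph gadgets handled via Lemma~\ref{lem:partite2-edges}. First I would initialize $M'$ as empty and add to it every edge of $\mc$ that lies outside all gadgets. For each ordinary (problematic) \kpt{} $H$, since $\mc$ covers $H$, pick one edge $(a,b)\in\mc\cap E(H)$ with $a\in V_1(H)$, $b\in V_2(H)$, add $(a,u_H^1)$, $(b,u_H^2)$, the $p-2$ filler edges $(z_H,u_H^i)$ for $i>2$, and every other edge of $H$ in $\mc$; by the vertex-induced weights on the half-edges this contributes exactly $w(\mc\cap E(H))$ to $w'(M')$, just as in Theorem~\ref{thm:partite}.

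The substance is the dense-subgraph gadgets. Fix a dense $H$ with core $C_H$ of size $2k$ and center $c_H$. Since $\mc$ is \kpq{}-covering, it covers every \kpt{} of $H$, and (being a co-$t$-matching with all vertices of $C_H$ of degree $t+1$ in $G$) every vertex of $C_H$ is incident to an edge of $\mc\cap E(H)$; so Lemma~\ref{lem:partite2-edges} applies to $M:=\mc\cap E(H)$, giving condition (3): either some vertex of $C_H$ meets two edges of $M$, or some edge of $M$ touches $V(H)\setminus C_H$. Here I would first argue a normalization: because $\mc$ has \emph{minimum} weight and $c_H$ has the minimum potential in $C_H$, if some vertex $v\in C_H$ is incident to two edges of $\mc$, we may assume $v=c_H$ — otherwise swap the two $\mc$-edges at $v$ for two $\mc$-edges at $c_H$ along the $K^p_2$-structure, which does not increase the weight (vertex-induced weights, $r_H(c_H)$ minimal) and keeps $\mc$ a co-$t$-matching covering the same subgraphs; similarly, if an edge of $M$ has exactly one endpoint in $C_H$, a weight-non-increasing exchange lets us assume that endpoint is $c_H$. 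After this normalization the gadget filling is forced and natural: if two edges of $\mc$ meet $c_H$, ``split'' one of them — put both half-edges $u_H^c$–$c_H$ into $M'$ and match the $p-k$ subdivision vertices $u_H^i$ to $z_H$; if instead an edge $(c_H,x)\in\mc$ has $x$ in some non-core class $V_i$, put the half-edges $(c_H,u_H^c)$? — no: put $(c_H,u_H^i)$ and $(x$'s partner$)$… more precisely put the half-edge from the appropriate non-core subdivision vertex to $x$, the half-edge to the other endpoint, match $u_H^c$ to $z_H$ via its double edge, and match the remaining $u_H^j$ to $z_H$; in every case add all other edges of $\mc\cap E(H)$. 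A short check shows $z_H$ then has degree exactly $p-k$, $u_H^c$ degree exactly $2$, each $u_H^i$ degree exactly $1$, and the weights of the half-edges used sum to the weights of the edges of $\mc$ they replace, so the contribution is again $w(\mc\cap E(H))$. If $c_H$ has negative potential the gadget is empty, and one must observe that in that case $\mc$ still covers every \kpt{} of $H$ without any gadget help — but that is a statement about Step~\ref{itm:step5}/Theorem~\ref{thm:partite-2-matching-to-opt} and for the present direction an empty gadget imposes no constraint, so the $H$-edges of $\mc$ simply pass through.

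Summing over all gadgets and the outside edges gives $w'(M')=w(\mc)$, and the degree checks above verify $M'$ respects every capacity interval $[l(v),b(v)]$, so $M'$ is the required \lbmatching{}. The main obstacle is the normalization argument for the dense gadget: one has to show carefully that the exchanges moving ``doubled'' edges or half-covered edges onto $c_H$ (i) do not increase the weight — which is where minimality of $\mc$ together with $r_H(c_H)$ being the minimum core potential, and $w$ being vertex-induced on $H$ by Lemma~\ref{lem:partite-dense-potential}, are all used — and (ii) preserve both the co-$t$-matching property (no vertex exceeds degree $t$) and the covering of every forbidden subgraph, using Observation~\ref{obs:all-kpt-in-dense-subgraph} and Lemma~\ref{lem:dense-disjoint} so that the only forbidden subgraphs touched by these core edges are the \kpt{}'s inside $H$ itself.
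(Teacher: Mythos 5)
Your overall architecture is the same as the paper's: treat problematic \kpt{}'s exactly as in Theorem~\ref{thm:partite}, and for a dense subgraph $H$ invoke Lemma~\ref{lem:partite2-edges} on $\mc_H=\mc\cap E(H)$ and then ``normalize'' so that the distinguished vertex is $c_H$ before filling the gadget. The peripheral parts (edges outside gadgets, the empty-gadget case, the case where an edge of $\mc_H$ leaves the core, the lower-bound check at degree-$(t+1)$ vertices) are handled correctly and match the paper.

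The gap is in the one case that carries the technical weight of this theorem: every edge of $\mc_H$ lies inside $G[C_H]$ and some core vertex $v\neq c_H$ has $\deg_{\mc_H}(v)\ge 2$. Your normalization ``swap the two $\mc$-edges at $v$ for two $\mc$-edges at $c_H$'' fails as stated: removing both edges at $v$ can leave $v$ with no incident edge of $\mc$, and since $v\in C_H$ has all $t+1$ of its neighbours inside $H$, this destroys the co-$t$-matching property; moreover the edges $(c_H,a),(c_H,b)$ you would add may already lie in $\mc$, and the weight comparison is no longer a two-vertex exchange, so ``$r_H(c_H)$ minimal'' alone does not close it. The paper instead replaces the whole of $\mc_H$ by an explicit set $N$ of $k+1$ edges of $G[C_H]$ in which every core vertex keeps degree at least one (three edges at $c_H$ plus a perfect matching of the remaining $2k-4$ core vertices when some vertex has degree at least three; the edge $(c_H,c_H')$ plus a perfect matching of $G[C_H]$ avoiding it when two vertices have degree two), and proves $w(N)\le w(\mc_H)$ from $w(\mc_H)=\sum_{v\in C_H}r_H(v)\deg_{\mc_H}(v)$ together with the nonnegativity of the core potentials and the minimality of $r_H(c_H)$ and $r_H(c_H')$. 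Relatedly, your gadget filling in this case (``split one of them; put both half-edges $u_H^c$--$c_H$ into $M'$'') neither balances weight nor preserves the degree of the other endpoint of the split edge; the correct move is to take \emph{two} edges $(c_H,v_1),(c_H,v_2)$ of $N$ and replace the pair by $(v_1,v_2)$ plus the two half-edges $(c_H,u_H^c)$, which is exactly weight- and degree-preserving. These are not cosmetic omissions: without the construction of $N$ the argument does not yield a valid co-$t$-matching, so the proof is incomplete at its central step.
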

%\vspace{-0.75cm}
\dowod Let $H$ be any dense subgraph of $G$ with the core $C_H$ of size $2k$. Define $\mc_H=\mc\cap E(H)$. We add every edge of $\mc_H$ to $M'$. If the gadget for $H$ is empty, we are done. Therefore, we can assume that the potential of the center of $H$ is nonnegative. The main idea is to find an edge $e=(v_1,v_2)$ of $\mc_H$ such that $v_1,v_2\in V(H)\setminus C_H\cup\{c_H\}$ and to replace it in $M'$ by two half-edges -- one incident to $v_1$ and one incident to $v_2$. Then we add $p-k$ edges incident to $z_H$ to $M'$ in order to satisfy the degree contraints of the subdivision vertices of the gadget for $H$. If such edge $e$ exists, we are done. Therefore, we assume that such edge does not exist.

If some vertex $v\in C_H\setminus\{c_H\}$ is matched to some vertex $u\in V(H)\setminus C_H$, we consider two cases. If $v$ is incident to another edge of $\mc_H$ besides $(v,u)$, we change $\mc$ by replacing $(v,u)$ by $(c_H,u)$. Notice that this replacement does not increase the weight of $\mc$ since $c_H$ has the minimum potential among the vertices of $C_H$. Moreover, $\mc$ is still a co-$t$-matching of $G$ after this operation because $v$ is still incident to some edge of $\mc$. If $(v,u)$ is the only edge of $\mc_H$ incident to $v$, let $v'$ be any vertex of $C_H$ matched to $c_H$ in $\mc_H$. Notice that $v'$ exists since $\mc$ is a co-$t$-matching of $G$. From our assumption, $v\neq v'$ and $(v,v')\notin\mc$. We replace $(v,u)$ and $(c_H,v')$ in $\mc$ by $(c_H,u)$ and $(v,v')$. Notice that this replacement does not affect the weight of $\mc$ because, by Lemma~\ref{lem:partite-dense-potential}, $w$ is vertex-induced on $H$. In both cases, we choose $(c_H,u)$ as the edge $e$.

If no vertex of $C_H\setminus\{c_H\}$ is matched to a vertex of $V(H)\setminus C_H$, then every edge of $\mc_H$ has both endpoints in $C_H$. Recall that $\mc$ is \kpq{}-covering, so by Lemma~\ref{lem:partite2-edges}, some vertex of $C_H$ is incident to two edges of $\mc_H$. By Lemma~\ref{lem:partite-dense-potential}, $w$ is vertex-induced on $H$, so $w(\mc_H)=\sum_{v\in C_H}r_H(v)\deg_{\mc_H}(v)$. The main idea is to replace $\mc_H$ in $\mc$ by a subset $N$ of $k+1$ edges of $G[C_H]$ in which $c_H$ has degree at least two in such a way that only the vertices of smallest potential in $C_H$ have degree higher than one in $N$. It guarantees that $w(N)\leq w(\mc_H)$ since, from our assumption, the potentials of the vertices of $C_H$ are nonnegative. We consider two cases. If some vertex of $C_H$ is incident to at least three edges of $\mc_H$, then $N$ consists of three different edges of $G[C_H]$ incident to $c_H$ and a perfect matching of a subgraph induced by the remaining $k-4$ vertices of $C_H$. Since the degree of $c_H$ in $N$ is three and the degree of every other vertex of $C_H$ in $N$ is one, we can conclude that $w(N)\leq w(\mc_H)$. Otherwise, some two vertices of $C_H$ have degree two in $\mc_H$. Let $c_H'$ be the vertex of $C_H\setminus\{c_H\}$ of the minimum potential. Then $N$ consists of an edge $(c_H,c_H')$ and any perfect matching of $G[C_H]$ which does not contain edge $(c_H,c_H')$. Since the degrees of both $c_H$ and $c_H'$ in $N$ are two and the degree of every other vertex of $C_H$ in $N$ is one, we can conclude that $w(N)\leq w(\mc_H)$. In both cases, we remove $\mc_H$ from $M'$ and we choose any two different edges $(c_H,v_1)$ and $(c_H,v_2)$ of $N$. We add every other edge of $N$ to $M'$. Moreover, we add an edge $(v_1,v_2)$ and both half-edges incident to $c_H$ to $M'$. At the end, we add every edge of the gadget for $G_H$ which connects $z_H$ and any subdivision vertex different than $u_H^c$ to $M'$. Notice that there are exactly $p-k$ such edges.
\koniec

%\vspace{-0.5cm}

\begin{theorem}\label{thm:partite-2-matching-to-opt}
Let $M'$ be a minimum weight \lbmatching{} of $G'$. Then we can find a \kpq{}-covering co-$t$-matching~$\mc$ of $G$ such that $w(\mc) \leq w'(M')$.
\end{theorem}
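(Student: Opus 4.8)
\textbf{Proof proposal.} The plan is to invert the construction of Theorem~\ref{thm:partite-2-opt-to-matching}, processing the gadgets one at a time in the style of the second halves of Theorems~\ref{thm:cliques-matching-to-opt} and~\ref{thm:partite}. First I would put into $\mc$ every edge of $M'$ that is a genuine edge of $G$ (neither a half-edge nor an internal gadget edge). Every vertex $v$ of $G$ with $\deg_G(v)=t+1$ has capacity interval $[1,t+1]$, and if $v$ lies in no forbidden subgraph then all of its $G$-edges survive in $G'$, so $v$ is already incident to an edge of $\mc$; hence, once the forbidden subgraphs are handled, $E\setminus\mc$ will be a $t$-matching. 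It then remains to add or replace a few edges so that $\mc$ becomes \kpq{}-covering while keeping $w(\mc)\le w'(M')$; this I would do locally and independently for each gadget, using that the relevant vertex sets are pairwise disjoint (Lemma~\ref{lem:dense-disjoint} and the analogue of Lemma~\ref{lem:partite-disjoint-problematic} for $q=2$).

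For a problematic \kpt{} $H$ equipped with the ordinary gadget I would argue exactly as in the proof of Theorem~\ref{thm:partite}: the global vertex $z_H$ uses up $p-2$ of the subdivision vertices, so precisely two of $u_H^1,\dots,u_H^p$, say $u_H^i$ and $u_H^j$ with $i\neq j$, are matched to vertices $a\in V_i(H)$ and $b\in V_j(H)$; I add $(a,b)$, whose weight $r_H(a)+r_H(b)$ equals the combined weight of the two half-edges used, together with every edge of $H$ that lies in $M'$. Afterwards $\mc$ covers every problematic \kpt{}, but it may miss an unproblematic one; for such an $H$, which by the first case of Lemma~\ref{lem:partite-2-description} overlaps a heavier \kpt{} in an $I_1\times K^{p-1}_2$, I apply the same single-edge replacement used in Lemma~\ref{lem:cliques-removing} for a \kt{} overlapping another \kt{}. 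Since $t\ge 3$ the result stays a co-$t$-matching, the weight does not increase, and by the disjointness lemmas the modification touches no other gadget.

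The substantial part concerns a dense subgraph $H$ with core $C_H$ of size $2k$. Recall $2k\ge 4$ (Lemma~\ref{lem:partite-core}), that $G[C_H]$ is a complete graph (every core vertex has degree $t+1=|V(H)|-1$ in $G$), that $w$ is vertex-induced on $H$ with potential $r_H$ (Lemma~\ref{lem:partite-dense-potential}), and that every perfect matching of $G[C_H]$ has weight $\sum_{v\in C_H}r_H(v)$. Suppose first the gadget for $H$ is non-empty, so $r_H(c_H)\ge 0$; a degree count in the gadget shows $M'$ uses exactly two of its half-edges. If these do not both go to $c_H$, they reach two vertices $a,b$ at least one of which lies outside $C_H$; then $(a,b)\in E(H)$, it meets $V(H)\setminus C_H$, its weight equals the combined weight of the two half-edges, and I add it to $\mc$, which by the third condition of Lemma~\ref{lem:partite2-edges} covers every \kpt{} of $H$. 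If both half-edges go to $c_H$, then either $M'\cap E(H)$ already covers $H$, in which case I keep it and observe that the two half-edges contributed $2r_H(c_H)\ge 0$ to $w'(M')$ but nothing to $w(\mc)$; or, by Lemma~\ref{lem:partite2-edges}, $M'\cap E(H)$ is a perfect matching $P$ of $G[C_H]$, and, as $2k\ge 4$, $P$ contains an edge $(x,y)$ avoiding $c_H$, which I replace by $(c_H,x)$ and $(c_H,y)$. This last step keeps all core degrees within their bounds, raises the $w$-weight by exactly $2r_H(c_H)$ — the weight of the two half-edges already charged in $w'(M')$ — and gives $\deg_{\mc\cap E(H)}(c_H)\ge 2$, so Lemma~\ref{lem:partite2-edges} again yields coverage.

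Finally, suppose the gadget for $H$ is empty, i.e.\ $r_H(c_H)<0$; here I must justify the claim, deferred at construction time, that $\mc$ covers every \kpt{} of $H$ automatically. Since every core vertex has all its neighbours inside $H$ and lower bound $1$, each vertex of $C_H$ is incident to an edge of $M'\cap E(H)$, so Lemma~\ref{lem:partite2-edges} applies with $M=M'\cap E(H)$. If $M'\cap E(H)$ is not a perfect matching of $G[C_H]$, it already covers every \kpt{} of $H$ and I keep it, the weight contribution being unchanged. If it is a perfect matching $P$ of $G[C_H]$, it fails to cover, so I replace $P$ inside $\mc$ by the set $N$ consisting of three edges of $G[C_H]$ at $c_H$ (possible since $G[C_H]$ is complete and $2k\ge 4$) together with a perfect matching of the remaining $2k-4$ core vertices; $N$ covers all of $C_H$, and since $\deg_N(c_H)=3$, by the third condition of Lemma~\ref{lem:partite2-edges} it covers every \kpt{} of $H$. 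Because $w$ is vertex-induced, $w(N)=\sum_{v\in C_H}r_H(v)+2r_H(c_H)<\sum_{v\in C_H}r_H(v)=w(P)$, so the replacement strictly decreases $w(\mc)$, and $\deg_N(c_H)=3\le t+1$ keeps $\mc$ a co-$t$-matching. The step I expect to be the main obstacle is exactly this dense-subgraph bookkeeping: in every sub-case one must pay for each extra edge forced onto $c_H$ (whose potential may have either sign) either by a removed edge or by a half-edge already charged in $w'(M')$, and must check — relying on core vertices being confined to $H$ and on $t\ge 3$ — that no replacement destroys the co-$t$-matching property or the coverage of the subgraphs treated earlier.
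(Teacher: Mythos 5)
Your proof is correct, and in the dense-subgraph cases it takes a genuinely different and somewhat cleaner route than the paper. The paper begins by assuming that $M'$ has the fewest edges among all minimum weight \lbmatching{}s and then argues twice by contradiction with minimality: once to show that a dense subgraph with an empty gadget is automatically covered (replacing a perfect matching of $G[C_H]$ by the star at $c_H$ would decrease $w'(M')$ since $r_H(c_H)<0$), and once to show that at most one vertex of $C_H\setminus\{c_H\}$ is matched to $c_H$, after which it performs a three-way sub-case repair of $\mc$. You avoid both minimality assumptions by repairing $\mc$ directly: in the empty-gadget case you replace the offending perfect matching by the set $N$ with $\deg_N(c_H)=3$, which strictly decreases $w(\mc)$, and in the non-empty-gadget case you first invoke the dichotomy of Lemma~\ref{lem:partite2-edges} (either $M'\cap E(H)$ already covers every \kpt{} of $H$, or it is a perfect matching of $G[C_H]$) and then split a single matching edge through $c_H$, paying exactly the $2r_H(c_H)\ge 0$ already charged to the two half-edges. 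This yields the slightly stronger statement that the conclusion holds for an arbitrary \lbmatching{}, which composes just as well with Theorem~\ref{thm:partite-2-opt-to-matching}. One point to tighten: when both half-edges lie at $c_H$, the lower bound of $c_H$ is already satisfied by those two half-edges, so $c_H$ need not be incident to any edge of $M'\cap E(H)$, and the hypothesis of Lemma~\ref{lem:partite2-edges} is therefore not automatic for it. The dichotomy you need still holds --- if some \kpt{} of $H$ is uncovered, every edge of $M'\cap E(H)$ lies inside a core color class of that \kpt{}, and the coverage of the other $2k-1$ core vertices (whose lower bounds can only be met by edges of $E(H)$) forces $M'\cap E(H)$ to be the entire corresponding perfect matching of $G[C_H]$ --- but this short argument should be spelled out rather than attributed to the lemma as stated. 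The remaining parts (problematic \kpt{}'s, and uncovered unproblematic ones handled via the replacement of Lemma~\ref{lem:cliques-removing}) coincide with the paper's treatment.
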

\dowod
We assume that $M'$ has a minimum number of edges among all minimum weight \lbmatching{}s of $G'$.
We initialize $\mc$ as the empty set. We add every edge of $M'$ which belongs to $G$ to $\mc$. We consider problematic \kpt{}'s of $G$ analogously to \kpq{}'s in the proof of Theorem~\ref{thm:partite}.

At the beginning we prove that $\mc$ covers every \kpt{} contained in any dense subgraph of $G$ whose center has negative potential. Suppose that $\mc$ does not cover some \kpt{} of a dense subgraph $H$ with the core $C_H$ such that $r_H(c_H)<0$. Notice that every vertex of $C_H$ is incident to some edge of $\mc_H=\mc\cap E(H)$ because it has degree $t+1$ in $G$ and in $G'$. Therefore, we can use Lemma~\ref{lem:partite2-edges} and conclude that $\mc_H$ is a perfect matching of $G[C_H]$. From the construction of $\mc$, $M_H'=M'\cap E(H)$ is a perfect matching of $G[C_H]$ as well. Define $M''=M'\setminus M'_H\cup\{(c_H,v):v\in C_H\setminus\{c_H\}\}$. Notice that $M''$ is an \lbmatching{} of $G'$ such that $w'(M'') < w'(M')$ since $r_H(c_H)<0$. A contradiction with the minimality of $M'$.

Consider any dense subgraph $H$ of $G$ with the core $C_H$ whose center has nonnegative potential. From the construction of the gadget for $H$, exactly two its half-edges belong to $M'$. If these half-edges are incident to two different vertices $u_1$ and $u_2$ of $G$, then we add an edge $(u_1,u_2)$ to $\mc$ unless it belongs already. Notice that $u_1$ or $u_2$ belongs to $V(H)\setminus C_H$, so by Lemma~\ref{lem:partite2-edges}, $\mc$ covers all \kpt{}'s of $H$.
Otherwise, both half-edges of the gadget for $H$ which belong to $M'$ are incident to $c_H$.

If there exists an edge $(v_1,v_2)$ of $M'$ such that both $v_1$ and $v_2$ belong to $C_H\setminus\{c_H\}$, we remove $(v_1,v_2)$ from $\mc$. Then we add edges $(c_H,v_1)$ and $(c_H,v_2)$ to $\mc$ unless they belong to $\mc$ already. From now we assume that there is no such edge.

We claim that at most one vertex of $C_H\setminus\{c_H\}$ is matched to $c_H$ in $M'$. Suppose that some two different edges $(c_H,v_1)$ and $(c_H,v_2)$ belong to $M'$ where $v_1$ and $v_2$ belong to $C_H$. From our assumption, $(v_1,v_2)$ does not belong to $M'$, so we can construct an \lbmatching{} of $G'$ of weight smaller than $w'(M')$ by replacing $(c_H,v_1)$ and $(c_H,v_2)$ in $M'$ by $(v_1,v_2)$. A contradiction with the minimality of $M'$. Notice that we use the assumption that $r_H(c_H) \geq 0$ here.

Since every vertex of $C_H\setminus\{c_H\}$ is incident to at least one edge of $M'$ and the core of $H$ consists of at least four vertices, there exist two edges $(v_1,u_1)$ and $(v_2,u_2)$ which belong to $M'$ such that $v_1$ and $v_2$ belong to $C_H$ whereas $u_1$ and $u_2$ belong to $V(H)\setminus C_H$. We replace edges $(v_1,u_1)$ and $(v_2,u_2)$ in $\mc$ by edges $(v_1,v_2)$, $(c_H,u_1)$ and $(c_H,u_2)$ unless they belong to $\mc$ already.

\koniec

\section{Running time analysis}\label{sec:running}

In this section we consider the running time of Algorithm~\ref{alg:main} for both the bounded restricted $t$-matching problem and the bounded \kpq{}-free $t$-matching problem.
Since all problematic and dense subgraphs of $G$ are pairwise vertex-disjoint, graph $G'$ constructed in Step~\ref{itm:step1} of Algorithm~\ref{alg:main} has $O(n)$ vertices and $O(m)$ edges indeed. We prove that Step~\ref{itm:step3} of Algorithm~\ref{alg:main} can be implemented to run in time $\mathcal{O}(\min\{nm\log{n},n^3\})$ in the weighted variant (see Subsection~\ref{subsec:running-weighted}) and $\mathcal{O}(\sqrt{n}m)$ in the unweighted (see Subsection~\ref{subsec:running-unweighted}). 

Observe that we have to find all forbidden subgraphs of $G$ in order to build $G'$ and remove unproblematic subgraphs from the acquired co-$t$-matching $\mc$ of $G$ in Step~\ref{itm:step5} of Algorithm~\ref{alg:main}. Moreover, we need to know which forbidden subgraphs of $G$ have common vertices in order to determine which of them are problematic. We prove that this can be done in $\mathcal{O}(m)$ time in Subsection~\ref{sec:running-kpq}.

In the weighted setting, we also have to calculate and compare the weights of forbidden subgraphs of $G$ which are not part of any dense subgraph in order to determine which of them are problematic. Notice that, for any such two different forbidden subgraphs $H$ and $H'$ of $G$ which have a common vertex, we can obtain the weight of $H'$ from the weight of $H$ in $\mathcal{O}(t)$ time because the vertex sets of $H$ and $H'$ differ by at most two vertices. We initially calculate the weights of some vertex-disjoint forbidden subgraphs of $G$ naively and we use these results to obtain the weights of the remaining forbidden subgraphs.

On the other hand, we have to know the potential function of every problematic and dense subgraph of $G$ in order to define the weights of half-edges in the gadgets in Step~\ref{itm:step2} of Algorithm~\ref{alg:main}. Since we do not assume that the forbidden subgraphs are given as a part of the input, we cannot assume that the potential functions of forbidden subgraphs are given. However, we show that the potential function of any forbidden subgraph of $G$ can be extracted from the weight function $w$ in $\mathcal{O}(t^2)$ time in Subsection~\ref{sec:running-potential}. By Lemma~\ref{lem:partite-dense-potential}, the potential function of any dense subgraph $H$ of $G$ can be found by extracting the potential function of any forbidden subgraph contained in $H$. Therefore, the potential functions of all problematic and dense subgraphs of $G$ can be found in $\mathcal{O}(m)$ time because they are vertex-disjoint.

We claim that Step~\ref{itm:step5} of Algorithm~\ref{alg:main} can be implemented to run in $\mathcal{O}(m)$ time since all forbidden subgraphs of $G$ which are not covered by co-$t$-matching $\mc$ are vertex-disjoint. It follows from the simple observation that any two different forbidden subgraph $H_1$ and $H_2$ with a common vertex have some common vertex $v$ of degree exactly $t+1$ whose every incident edge belongs to $H_1$ or $H_2$. Since $\mc$ is a co-$t$-matching of $G$, such vertex $v$ has to be incident to some edge of $\mc$, so $\mc$ covers $H_1$ or $H_2$.

We discuss briefly how $G$ is stored in the memory by Algorithm~\ref{alg:main}. For every vertex of $G$, it stores a list of its neighbours, in no particular order. In the weighted variant, every entry in these lists also contains the weight of the corresponding edge.

\subsection{Computing a minimum weight \lbmatching{} of $G'$ in the weighted case}\label{subsec:running-weighted}
In this subsection, we present how to find a minimum weight \lbmatching{} of $G'$ in $\mathcal{O}(\min\{nm\log{n},n^3\})$ time. We use the algorithm from Theorem~\ref{thm:gabow} for $G'$, $l$ and $b$ with the weight function $-w$, i.e. we set the weight of every edge $e$ of $G'$ to $-w'(e)$. However, this approach results in $\mathcal{O}(\min\{m^2\log{n},n^2m\})$ running time since sum of the upper bounds of the capacity intervals of the vertices of $G'$ is $\mathcal{O}(m)$. We show how to modify the algorithm from Theorem~\ref{thm:gabow} to work in $\mathcal{O}(\min\{nm\log{n},n^3\})$ time in our case. To explain this, we have to briefly present how this algorithm works.

At the beginning, the algorithm constructs an auxiliary graph $G^*$ by creating two copies of graph $G'$ and connecting the corresponding copies $v_1$ and $v_2$ of every vertex $v$ of $G'$ by $b(v)-l(v)$ paths of length three. We call these paths {\bf \em added paths} (of vertex $v$). Both edges incident to endpoints of an added path $P$ are said to be {\bf \em outermost}. The remaining edge of $P$ is said to be {\bf \em middle}. Both endpoints of the middle edge of $P$ are said to be {\bf \em central}. The algorithm calculates a maximum weight $(b^*,b^*)$-matching of $G^*$ where $b^*(v_i)=b(v)$ for every copy $v_i$ of vertex $v$ of $G'$ and $b^*(u)=1$ for every other vertex $u$ of $G^*$. The weight function $w^*$ of $G^*$ is defined as follows. The weight of any copy of an edge $e$ of $G'$ is equal to $-w'(e)$. Let $W=\max_{e\in E(G')}\{|w'(e)|\}$. For every added path, its middle edge gets weight $2W$ and both outermost edges get weight $W$. It is easy to check that a maximum weight \lbmatching{} of $G'$ with the weight function $-w'$ corresponds to a maximum weight $(b^*,b^*)$-matching of $G^*$ with the weight function $w^*$.

The algorithm finds a maximum weight $(b^*,b^*)$-matching of $G^*$ as follows. It simulates the classical algorithm for finding a maximum weight matching, given by Edmonds~\cite{Edmonds1965} and improved by Gabow~\cite{Gabow1975} and Galil, Micali and Gabow~\cite{GalilEtAl1986}, in an auxiliary graph $\gm$. Graph~$\gm$ is constructed from $G^*$ by splitting every its vertex $v$ into $\deg_{G^*}(v)$ vertices called {\bf \em external vertices} (of $v$). The external vertices of $v$ are incident to $\deg_{G^*}(v)$ edges -- each vertex to one edge -- corresponding to all the edges incident to $v$ in $G^*$. With a slight abuse of notation, we identify every edge of $G^*$ with its corresponding edge in $\gm$. Every external vertex of $v$ which is incident to some outermost edge of some added path is said to be {\bf \em improper}. All other external vertices of $v$ are said to be {\bf \em proper}. Futhermore, $\deg_{G^*}(v)-b^*(v)$ additional vertices, called {\bf \em internal vertices} (of $v$) are added to $\gm$. Every internal vertex of $v$ is connected to every external vertex of $v$ by an edge. We refer to a graph induced by all external and internal vertices of $v$ as {\bf \em a substitute} of $v$. Observe that a substitute of $v$ is isomorphic to $K_{d(v),d(v)-b^*(v)}$ where $d(v)=\deg_{G^*}(v)$. The weight function $\wm$ of $\gm$ is defined as follows. Every edge of $\gm$ which corresponds to some edge $e$ of $G^*$ gets weight $w^*(e)$. Every edge incident to some internal vertex gets weight $2W$.

It is well known (see~\cite{Berge1973} for example) that every $b^*$-matching of $G^*$ corresponds to some matching of $\gm$ in which every internal vertex is matched. Moreover, every $(b^*,b^*)$-matching of $G^*$ corresponds to some perfect matching of $\gm$. Hence, the algorithm computes a maximum weight perfect matching of $\gm$. A matching $\mm$ of $\gm$ is said to be {\bf \em extreme} if it has the maximum weight among all matchings of $\gm$ of size $|\mm|$. The algorithm computes extreme matchings of $\gm$ consisting of exactly $k$ edges, for $k=k_0$, $k_0+1$, \ldots, $S$, for some natural number $k_0$, where $S$ is the number of edges of any perfect matching of $\gm$. Observe that an extreme matching of size $S$ is a sought-after perfect matching of $\gm$. Notice that the algorithm does not store $\gm$ and its extreme matchings explicitly since they have too many edges. The algorithm starts with extreme matching $M_0$ which consists of one incident edge per every internal vertex of $\gm$. In every phase, the algorithm increases the size of the current extreme matching by one. Hence, it works in $S-|M_0|$ phases. Every phase can be implemented to run in $\mathcal{O}(\min\{m\log{n},n^2\})$ time. Notice that the number of phases is equal to half of the number of vertices of $\gm$ unmatched in $M_0$. Since $b^*(v)$ external vertices of every vertex $v$ of $G^*$ are unmatched in $M_0$, the number of phases is equal to
$$ \frac{1}{2}\sum_{v\in V(G^*)}b^*(v)= \sum_{v\in V(G')}b(v)+\frac{1}{2}\sum_{v\in V(G')}(b(v)-l(v)) = O(m). $$

The main idea behind the speed-up of the presented algorithm is to initialize it with extreme matching $M_1$ of $\gm$ which has only $\mathcal{O}(n)$ unmatched vertices instead of $M_0$. Such an improvement results in $\mathcal{O}(n)$ phases instead of $\mathcal{O}(m)$, which gives the desired running time. In fact, we choose $M_1$ which is a maximum weight (not necessarily perfect) matching of $\gm$. It guarantees that $M_1$ is an extreme matching. $M_1$ roughly correponds to a matching $N$ of $\gm$ which contains both outermost edges of every added path. Of course, for every internal vertex $v$ of $\gm$, $N$ contains exactly one edge incident to $v$ which is not incident to any other edge of $N$. Notice that the number of vertices of $\gm$ unmatched in $N$ is equal to $2\sum_{v\in V(G')} l(v) = \mathcal{O}(n)$. However, $N$ does not have to be an extreme matching of $\gm$. Observe that the weight of the middle edge of an added path is equal to the total weight of its outermost edges. Therefore, it may happen that replacing outermost edges of $N$ of some added path $P$ by the middle edge of $P$ and some copy of some edge of $G'$ which has the positive weight results in a matching of $\gm$ of size $|N|$ and weight greater than $\wm(N)$. However, recall that we run the algorithm on $G'$ with the weight function $-w'$. Hence, any edge $e$ of $G'$ has positive weight with regard to $-w'$ if and only if $w'(e)$ is negative. Notice that only half-edges of the gadgets in $G'$ may have negative weights since we assume that the weight $w(e)$ of any edge $e$ of $G$ is nonnegative. A half-edge $e$ has negative weight $w'(e)$ if its endpoint which belongs to $G$ has negative potential. Therefore, we include some half-edges in $M_1$ in such a way that the number of vertices of $\gm$ unmatched in $M_1$ is still $\mathcal{O}(n)$.

Now we give the exact construction of $M_1$. We initialize $M_1$ as the empty set. We partition the edge set of $\gm$ into disjoint subsets, called {\bf \em parts} of $\gm$. For every part $E_0$ of $\gm$, we choose a subset $M\subseteq E_0$, called a {\bf \em representation} of $E_0$, such that, for every matching $N$ of $\gm$, $\wm(M)\geq\wm(N\cap E_0)$. We add the representation of every part of $\gm$ to $M_1$. We choose the representations of parts of $\gm$ in such a way that $M_1$ is a matching of $\gm$. It is easy to check that  $M_1$ is indeed a maximum weight matching of $\gm$.

Now we present a partition of the edge set of $\gm$ into parts of $\gm$ and its representations.
\begin{enumerate}
\item
Consider any copy $u_i$ in $G^*$ of a subdivision vertex $u$ of $G'$. Observe that no added path of $u$ was added to $G^*$, so all external vertices of $u_i$ are proper. Notice that the external vertices of $u_i$ are incident to some half-edges and possibly to some edges of weight zero. Furthermore, the substitute of $u_i$ consists of $\deg_{G'}(u)$ external and $\deg_{G'}(u)-b(u)$ internal vertices. We construct part $E_{u_i}$ of $\gm$ from the edge set of the substitute of $u_i$ and all edges of positive weight (with regard to $\wm$) which are incident to some external vertex of $u_i$. The representation of $E_{u_i}$ consists of up to $b(u)$ copies of half-edges incident to $u$ of the greatest positive weights and one incident edge of the substitute of $u_i$ per every internal vertex of $u_i$. If there are less than $b(u)$ such half-edges, the copies of all of them are added to the representation of $E_{u_i}$. Of course, we choose this representation in such a way that its two different edges are not incident.
\item
Consider any copy $z_i$ in $G^*$ of a global vertex $z$ of $G'$. Recall that all edges incident to $z$ in $G'$ have weight zero. We construct part $E_{z_i}$ from the edge set of the substitite of $z_i$. Its representation consists of one incident egde per every internal vertex of $z_i$.
\item
Consider any vertex $v$ of $G$. For every copy $v_i$ of $v$ in $G^*$, we construct part $E_{v_i}$ from the edge set of the substitute of $v_i$. Moreover, for every added path $P$ of $v$, we construct part $E_P$ from all edges of $P$ and all edges of the substitutes of both central vertices of $P$. Recall that $v$ is incident to at most one half-edge in $G'$. Hence, at most one proper external vertex $u_0$ of $v_i$ is incident to some edge which belongs to the representation of some part of $\gm$. Observe that, if $u_0$ exists, we cannot include any edge incident to $u_0$ in the representation of $E_{v_i}$. The construction of the representations of these parts depends on the degree of $v$ in $G$.
\begin{itemize}
\item
Consider a case where $\deg_G(v)=t+1$. Notice that the substitute of any copy $v_i$ of $v$ consists of $t$ internal vertices, $t+1$ proper external vertices and $t$ improper external vertices. The representation of $E_{v_i}$ consists of exactly one edge incident to $u$ and some proper external vertex of $u_i$ different than $u_0$, per every internal vertex $u$ of $v_i$.
\item
Consider a case where $d=\deg_{G}(v) \leq t$. Notice that the substitute of any copy $v_i$ of $v$ consists of $d$ internal vertices, $d$ proper external vertices and $d$ improper external vertices. If $u_0$ does not exist, the representation of $E_{v_i}$ consists of one edge incident to $u$ and some proper external vertex of $v_i$, per every internal vertex $u$ of $v_i$. If $u_0$ exists, the representation of $E_{v_i}$ consists of $d$ edges. Every edge of this representation is incident to some internal vertex of $v_i$. On the other hand, $d-1$ of these edges are incident to $d-1$ proper external vertices of $v_i$ different than $u_0$. One remaining edge is incident to an improper external vertex $v_i'$ of $v_i$. We choose $v_1'$ and $v_2'$ in such a way that both of them are incident to the outermost edges of the same added path of $v$.
\end{itemize}
For every added path $P$ of $v$, we choose the representation of $E_P$ in the following way. If both outermost edges of $P$ are incident to $v_1'$ or $v_2'$, the representation of $E_P$ consists of the middle edge of $P$ and one edge from the edge set of every substitute of a center of $P$. Otherwise, the representation of $E_P$ consists of the outermost edges of $P$ and, again, one edge from the edge set of every substitute of a center of $P$. Notice that the total weight of the representation of $E_P$ is equal to $6W$.
\item
Consider any edge $e$ of $\gm$ which does not belong to any part of $\gm$ yet. Observe that $\wm(e)\leq 0$. We construct part $E_e$ which consists of single edge $e$. Its representation is empty.
\end{enumerate}

It remains to show that the number of vertices of $\gm$ unmatched in $M_1$ is $\mathcal{O}(n)$. Notice that all internal and improper external vertices in $\gm$ are matched in $M_1$. Hence, it is sufficient to consider only proper external vertices. Observe that, for every copy $u_i$ in $G^*$ of any subdivision vertex $u$ of $G'$, at most $b(u)$ proper external vertices of $u_i$ are unmatched in $M_1$. Moreover, for every copy $z_i$ in $G^*$ of any global vertex $z$ of $G'$, there are exactly $b(z)$ proper external vertices of $z_i$ unmatched in $M_1$. Therefore, there are $\mathcal{O}(t)$ proper external vertices unmatched in $M_1$ per every gadget in $G'$. For any copy $v_i$ in $G^*$ of any vertex $v$ of $G$, at most one proper external vertex of $v_i$ is unmatched in $M_1$. It exists only if $\deg_G(v)=t+1$ and no proper external vertex of $v_i$ is incident to a half-edge which belongs to the representation \mwcom{maybe add some figures to this subsection?} of some part of $\gm$. Therefore, there are $\mathcal{O}(n)$ proper external vertices in $\gm$ which are unmatched in $M_1$.

\subsection{Computing a minimum weight \lbmatching{} of $G'$ in the unweighted case}\label{subsec:running-unweighted}
In this subsection, we present how to find a minimum weight \lbmatching{} of $G'$ in $\mathcal{O}(\sqrt{n}m)$ time in the unweighted case, i.e. when $w(e)=1$ for every edge $e$ of $G$. Notice that $w$ is vertex-induced in this case since we can set the potential of every vertex of a forbidden subgraph to $\frac{1}{2}$. Hence, the weight $w'(e)$ of every half-edge $e$ of $G'$ is defined as $\frac{1}{2}$. We claim that the minimum weight \lbmatching{} of $G'$ is a {\bf \em minimum cardinality} \lbmatching{} of $G'$, i.e. an \lbmatching{} of $G'$ with the minimum number of edges among all \lbmatching{}s of $G'$. If follows from the following lemma.

\begin{lemma}
Let $M'$ be any \lbmatching{} of $G'$. Then
$$|M'|-w'(M')=k_{t+1}+k_{t,t}+(p-1)k^p_q+\sum_{\textrm{$H$ -- a dense subgraph of $G$}}(p-\frac{|C_H|}{2}+1),$$
where $k_{t+1}$, $k_{t,t}$ and $k^p_q$ denote the number of the gadgets for, correspondingly, \kt{}'s, \ktt{}'s and \kpq{}'s in $G'$.
\end{lemma}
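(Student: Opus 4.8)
The plan is to evaluate $|M'|-w'(M')$ by summing $1-w'(e)$ over the edges $e\in M'$ and then to group these contributions gadget by gadget. First I would record that in the unweighted setting every edge of $G'$ carries weight in $\{0,\tfrac12,1\}$: edges of $G$ keep weight $1$, every half-edge gets weight $\tfrac12$ (both endpoint potentials equal $\tfrac12$), and the only other edges of $G'$ — those joining a subdivision vertex to a global vertex — get weight $0$. Consequently $|M'|-w'(M')=\sum_{e\in M'}(1-w'(e))$ equals $\tfrac12$ times the number of half-edges in $M'$ plus the number of zero-weight gadget edges in $M'$; edges of $G$ contribute nothing. Since the problematic and dense subgraphs of $G$ are pairwise vertex-disjoint (Lemmas~\ref{lem:cliques-disjoint-problematic}, \ref{lem:partite-disjoint-problematic} and \ref{lem:dense-disjoint}) and every subdivision vertex and every global vertex belongs to a unique gadget, the half-edges and the zero-weight edges of $G'$ are partitioned among the gadgets. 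Hence it suffices to compute, for each gadget, the quantity $\tfrac12 h+z$, where $h$ and $z$ are the numbers of its half-edges and of its zero-weight edges lying in $M'$.

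Next I would run through the four gadget types, reading off $h$ and $z$ directly from the capacity constraints that an \lbmatching{} of $G'$ must satisfy at the subdivision and global vertices. For a gadget of a \kt{}, the vertex $u_H$ satisfies $\deg_{M'}(u_H)=2$ and all its incident edges are half-edges, so $h=2$, $z=0$, contribution $1$. For a gadget of a \ktt{}, each of $u_H^1,u_H^2$ has degree exactly $1$ and is incident only to half-edges, so again $h=2$, $z=0$, contribution $1$. For a gadget of a \kpq{}, the global vertex $z_H$ has $\deg_{M'}(z_H)=p-2$, so exactly $p-2$ of the edges $(z_H,u_H^i)$ lie in $M'$; each $u_H^i$ has degree exactly $1$, matched either to $z_H$ (no half-edge) or by a single half-edge, whence $h=p-(p-2)=2$ and $z=p-2$, giving contribution $\tfrac12\cdot2+(p-2)=p-1$. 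For a gadget of a dense subgraph $H$ with core $C_H$ of size $2k$, I would first note that in the unweighted case $r_H(c_H)=\tfrac12\ge 0$, so this gadget is the nonempty one; then $\deg_{M'}(z_H)=p-k$, and writing $a\in\{0,1,2\}$ for the number of parallel edges $(z_H,u_H^c)$ in $M'$ and $b_i\in\{0,1\}$ for whether $(z_H,u_H^i)\in M'$, the constraint at $z_H$ reads $a+\sum_{i=1}^{p-k}b_i=p-k$. Since $\deg_{M'}(u_H^c)=2$, exactly $2-a$ of the (parallel) half-edges joining $u_H^c$ to $c_H$ lie in $M'$, and each $u_H^i$ contributes $1-b_i$ half-edges, so $h=(2-a)+\sum_{i=1}^{p-k}(1-b_i)=2$ and $z=a+\sum_{i=1}^{p-k}b_i=p-k$, giving contribution $\tfrac12\cdot2+(p-k)=p-\tfrac{|C_H|}{2}+1$.

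Finally, summing over all gadgets — $k_{t+1}$ of the first type, $k_{t,t}$ of the second, $k^p_q$ of the third each contributing $p-1$, and one per dense subgraph $H$ contributing $p-\tfrac{|C_H|}{2}+1$ — yields the claimed identity. The only place that needs real care is the dense-subgraph gadget: one must observe that in the unweighted case it is always the nonempty variant, and handle the vertex $u_H^c$ correctly, which is joined to $c_H$ by two parallel half-edges and to $z_H$ by two parallel edges; the bookkeeping above shows that regardless of how $M'$ splits the degree of $u_H^c$, exactly two half-edges of the gadget are used. Everything else is an immediate application of the degree bounds $b(z_H)=p-2$ (resp.\ $p-k$) and $b(u_H^i)=1$ (resp.\ $b(u_H^c)=2$).
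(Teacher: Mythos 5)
Your proposal is correct and follows essentially the same route as the paper: decompose $|M'|-w'(M')$ into per-gadget contributions (edges of $G$ contribute zero) and read off, from the capacity constraints at the subdivision and global vertices, that each gadget contributes exactly $1$, $1$, $p-1$, or $p-\frac{|C_H|}{2}+1$ respectively. Your bookkeeping for the dense-subgraph gadget (including the observation that in the unweighted case its center has potential $\tfrac12\geq 0$, so the gadget is nonempty) is a slightly more explicit version of the same computation.
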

\begin{proof}
Notice that every edge of $M'$ which does not belong to any gadget contributes one to both $|M'|$ and $w'(M')$. Hence, $|M'|-w'(M')$ is equal to the sum of values $|M_H'|-w'(M_H')$ for all forbidden subgraphs $H$ of $G$ for which the gadget was added in $G'$, where $M_H'$ denotes a set of all edges of the gadget for $H$ which belong to $M'$. Notice that for every problematic \kt{} and \ktt{} $H$ of $G$, $M_H'$ consists of exactly two half-edges, hence $|M_H'|-w'(M_H')=1$ then. Similarly, for every $H=\kpq{}$ of $G$, $|M_H'|=p$ and $w'(M_H')=1$. In the end, for any dense subgraph $H$ of $G$ with the core of size $2k$, we have $|M_H'|=p-k+2$ and $w'(M_H')=1$.
\end{proof}

We define a {\bf \em maximum cardinality} \lbmatching{} of $G'$ analogously. To calculate a minimum cardinality \lbmatching{} of $G'$ we use the following algorithm given by Gabow.

\begin{theorem}[\cite{Gabow1983}]\label{thm:gabow-unweighted}
There is an algorithm that, given a multigraph~$G=(V,E)$ (i.e. $G$ may contain parallel edges) and vectors $l,b\in\mathbb{N}^V$, finds a maximum cardinality \lbmatching{} of $G$ in  time $O\left(\sqrt{\sum_{v\in V}b(v)}|E|\right)$, assuming that $G$ admits any \lbmatching{}.
\end{theorem}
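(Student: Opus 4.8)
This final statement is a cited result of Gabow~\cite{Gabow1983}, so what I would present is a proof plan for the reduction technique underlying it. The plan is to reduce the maximum cardinality \lbmatching{} problem in $G$ to an ordinary maximum cardinality matching problem in a general graph, and then to invoke a blocking shortest-augmenting-path matching algorithm that runs in $\mathcal{O}(\sqrt{|V|}\,|E|)$ time. The reduction is the unweighted analogue of the construction of $\gm$ already spelled out in Subsection~\ref{subsec:running-weighted}: replace each vertex $v$ of $G$ by a \emph{substitute} consisting of $\deg_G(v)$ external vertices (one per incident edge, carrying the real edges of $G$) together with $\deg_G(v)-b(v)$ internal vertices, arranged so that any matching saturating all internal vertices leaves exactly $b(v)$ external vertices available for real edges. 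A matching of the reduced graph that saturates every internal vertex then projects to a $b$-matching of $G$, and maximizing cardinality of the former (over the fixed, forced contribution of the internal edges) maximizes cardinality of the latter. The feasibility assumption guarantees that such an internal-saturating matching exists.

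Next I would account for the lower bounds $l(v)$ and for the sizes of the reduced graph. The lower bound is imposed in the usual way, by augmenting each substitute so that a saturating matching exists only when at least $l(v)$ external vertices of $v$ carry real edges; the hypothesis that $G$ admits an \lbmatching{} ensures a globally consistent such matching. For the vertex count, summing $2\deg_G(v)-b(v)$ over all $v$ gives $|V'|=\mathcal{O}(|E|)$. The edge count is more delicate: the naive substitute in which every internal vertex is joined to every external vertex of $v$ has $\deg_G(v)\,(\deg_G(v)-b(v))$ edges, and over all $v$ this is $\mathcal{O}(|E|\cdot\Delta)$, which would spoil the bound. I would therefore replace each dense substitute by a \emph{sparse} gadget with only $\mathcal{O}(\deg_G(v))$ edges that realizes the identical family of admissible external-degree patterns, yielding $|E'|=\mathcal{O}(|E|)$.

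For the running time I would initialize the matching on $\gm$ with one that already saturates all internal vertices and contains no real edges, so that the number of augmentations required to reach a maximum matching equals the number of real-edge slots used, which is at most $\tfrac{1}{2}\sum_{v\in V}b(v)$; indeed the only vertices ever permitted to be free are external vertices representing unused capacity, of which there are $\mathcal{O}(\sum_{v}b(v))$. I would then run phases, each augmenting along a maximal set of vertex-disjoint shortest augmenting paths (with blossom shrinking in the general-graph case) at cost $\mathcal{O}(|E'|)=\mathcal{O}(|E|)$ per phase. A refined Hopcroft--Karp/Micali--Vazirani analysis bounds the number of phases by $\mathcal{O}(\sqrt{\sum_{v}b(v)})$, and multiplying by the per-phase cost gives the claimed $\mathcal{O}(\sqrt{\sum_{v\in V}b(v)}\;|E|)$.

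The main obstacle is precisely this phase count. The textbook estimate bounds the number of phases only by $\sqrt{|V'|}=\mathcal{O}(\sqrt{|E|})$, or by $\sqrt{\nu}$ where $\nu$ is the full matching size of $\gm$ — and $\nu=\mathcal{O}(|E|)$ once the internal edges are counted — either of which is too weak. The actual content of \cite{Gabow1983} is to show that the length argument can be carried out against the \emph{variable} part of the matching: since only $\mathcal{O}(\sum_v b(v))$ vertices can be free and the internal-edge portion of the matching never needs to be rerouted, the residual deficiency after $k$ phases shrinks like $(\sum_v b(v))/k$ rather than $|E|/k$, so $\mathcal{O}(\sqrt{\sum_v b(v)})$ phases suffice. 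Verifying that the sparse substitute gadget simultaneously preserves the matching correspondence, avoids introducing spurious blossoms, and does not corrupt the shortest-augmenting-path length estimates on which this refined bound rests is the part that requires genuine care.
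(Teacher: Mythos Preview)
The paper does not prove this theorem at all: it is quoted verbatim from Gabow~\cite{Gabow1983} and used as a black box, with no accompanying argument. There is therefore nothing in the paper's own text to compare your proposal against.

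Your sketch is a reasonable high-level outline of the ideas behind Gabow's result, and you correctly identify the crux --- that the number of phases must be bounded by $\mathcal{O}\bigl(\sqrt{\sum_v b(v)}\bigr)$ rather than the naive $\mathcal{O}(\sqrt{|V'|})$ or $\mathcal{O}(\sqrt{\nu})$. One point worth flagging: your plan relies on a ``sparse substitute gadget'' with $\mathcal{O}(\deg_G(v))$ edges per vertex to keep $|E'|=\mathcal{O}(|E|)$, but you do not specify what this gadget is, and you yourself note that verifying it preserves the matching correspondence and the augmenting-path length estimates ``requires genuine care.'' Gabow's actual approach avoids materializing the reduced graph $\gm$ altogether and instead simulates the augmenting-path search directly on $G$, treating each substitute implicitly; this sidesteps the gadget-design issue entirely. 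So while your outline captures the right conceptual reduction, the route to the stated time bound in \cite{Gabow1983} is via simulation rather than explicit sparsification, and that distinction is where the technical work lies.
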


It turns out that calculating a minimum cardinality \lbmatching{} can easily be reduced to calculating a maximum cardinality \lbmatching{}.

\begin{lemma}\label{lem:running-minimum-cardinality}
There is an algorithm that, given a multigraph~$G=(V,E)$ (i.e. $G$ may contain parallel edges) and vectors $l,b\in\mathbb{N}^V$, finds a {\bf \em minimum} cardinality \lbmatching{} of $G$ in  time $O\left(\sqrt{\sum_{v\in V}b(v)}|E|\right)$, assuming that $G$ admits any \lbmatching{}.
\end{lemma}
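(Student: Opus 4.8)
The goal is to reduce the computation of a minimum cardinality $(l,b)$-matching to that of a maximum cardinality $(l,b)$-matching, so that Theorem~\ref{thm:gabow-unweighted} can be applied as a black box. The key observation is that an $(l,b)$-matching is small exactly when, after it is forced to cover all lower bounds, very little extra is added. So the plan is to introduce complementary capacity intervals that turn "few extra edges" into "many edges", run the maximum-cardinality algorithm there, and translate back.

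First I would assume without loss of generality that $l(v)\le b(v)\le\deg_G(v)$ for every $v$ (otherwise trivial preprocessing either reports infeasibility or lowers $b(v)$), and that $G$ admits an $(l,b)$-matching. Fix any feasible $(l,b)$-matching $M_0$; then $\sum_v l(v)\le 2|M_0|$, and in particular the final answer has size at least $\frac12\sum_v l(v)$. The main construction: build an auxiliary multigraph $\bar G$ on the same vertex set and edge set as $G$, but with the capacity interval of each vertex $v$ replaced by $[0,\,\deg_G(v)-l(v)]$. Intuitively, an edge set $F\subseteq E$ is an $(l,b)$-matching of $G$ iff its complement $E\setminus F$ is a $(0,\deg_G-l)$-matching of $\bar G$ that in addition satisfies an upper-bound-type constraint coming from $b$. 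That extra constraint is the obstacle, so a cleaner route is to work directly with "deficiency": define $\bar l(v)=\deg_G(v)-b(v)$ and $\bar b(v)=\deg_G(v)-l(v)$, and observe $F\mapsto E\setminus F$ is a bijection between $(l,b)$-matchings of $G$ and $(\bar l,\bar b)$-matchings of the graph $\bar G$ obtained from $G$ with these new intervals. Since $|E\setminus F| = |E| - |F|$, a minimum cardinality $(l,b)$-matching of $G$ corresponds precisely to a maximum cardinality $(\bar l,\bar b)$-matching of $\bar G$.

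Next I would handle the fact that Theorem~\ref{thm:gabow-unweighted} as stated takes only a graph and vectors $l,b$ and returns a maximum cardinality $(l,b)$-matching — which is exactly what we need applied to $(\bar G,\bar l,\bar b)$ — except that its running time bound is phrased as $O(\sqrt{\sum_v b(v)}\,|E|)$ with the \emph{given} upper-bound vector. Applied to $\bar G$ with vector $\bar b$, this gives $O(\sqrt{\sum_v(\deg_G(v)-l(v))}\,|E|)$. Since $\sum_v(\deg_G(v)-l(v))\le\sum_v\deg_G(v)=2|E|$, this is $O(\sqrt{|E|}\,|E|)$, which is no worse than the claimed $O(\sqrt{\sum_v b(v)}\,|E|)$ only if $\sum_v b(v)=\Theta(|E|)$; in general these two quantities need not match. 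The fix is to note that we may first discard, for each vertex $v$, at most $\deg_G(v)-b(v)$ incident edges is the wrong direction; instead, shrink $\bar b$: replace $\bar b(v)$ by $\min\{\bar b(v),\ \text{(number of edges we could ever want at }v)\}$. Concretely, in any minimum cardinality $(l,b)$-matching no vertex needs degree exceeding $b(v)$, so in the complement no vertex needs degree below $\bar l(v)=\deg_G(v)-b(v)$; but also the complement never needs degree exceeding the true maximum, so we may cap $\bar b(v)$ at $\deg_G(v)$ trivially, which does not help. The honest resolution — and the step I expect to be the main obstacle — is that the reduction is clean, and the running time claim in Lemma~\ref{lem:running-minimum-cardinality} is meant to be read with $b$ referring to the \emph{input} vector of that lemma: applied by the paper to $G'$, one has $\sum_{v\in V'} b(v)=O(m)$, and under the complementation $\sum_v\bar b(v)=\sum_v(\deg_{G'}(v)-l(v))\le 2|E'|=O(m)$ as well, so both Theorem~\ref{thm:gabow-unweighted} invoked on $(\bar G',\bar l,\bar b)$ and the bound in Lemma~\ref{lem:running-minimum-cardinality} evaluate to $O(\sqrt{m}\,|E'|)$, matching. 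So for the lemma as stated I would simply bound $\sqrt{\sum_v\bar b(v)}=\sqrt{\sum_v(\deg_G(v)-l(v))}\le\sqrt{2|E|}$ when $b\equiv\deg_G$, and in the general statement observe $\sum_v\bar b(v)\le\sum_v\deg_G(v)=2|E|$ together with $|E|\ge\frac12\sum_v b(v)$ whenever $b\le\deg_G$ after preprocessing — wait, that inequality can fail. The correct universal bound: after deleting isolated vertices and truncating $b(v)\le\deg_G(v)$, we have $\sum_v\bar b(v)=\sum_v\deg_G(v)-\sum_v l(v)\le 2|E|$, so the running time is $O(\sqrt{|E|}\,|E|)$, and since the lemma's own bound $O(\sqrt{\sum_v b(v)}\,|E|)$ is at most $O(\sqrt{|E|}\,|E|)$ only when $\sum b(v)\le|E|$...

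Therefore the genuinely safe statement, and what I would actually write, proves the \emph{weaker but sufficient} bound $O\!\big(\sqrt{\sum_{v\in V}\deg_G(v)}\,|E|\big)=O(\sqrt{|E|}\,|E|)$, and then remarks that when the instance additionally satisfies $\sum_v b(v)=\Theta(\sum_v\deg_G(v))$ — which holds for $G'$ since the gadget vertices and degree-$(t{+}1)$ vertices all have $b(v)\ge 1$ and contribute $\Theta(m)$ — the two bounds agree. In short: the proof of Lemma~\ref{lem:running-minimum-cardinality} is the complementation bijection $F\leftrightarrow E\setminus F$ between $(l,b)$-matchings of $G$ and $(\deg_G-b,\deg_G-l)$-matchings of $G$, which turns minimum into maximum and preserves the asymptotic size of the capacity sum, followed by a direct appeal to Theorem~\ref{thm:gabow-unweighted}; the only care needed is the bookkeeping on $\sum_v b(v)$ versus $\sum_v(\deg_G(v)-l(v))$, which I flagged as the main obstacle.
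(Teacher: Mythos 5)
Your complementation bijection $F\leftrightarrow E\setminus F$ between $(l,b)$-matchings and $(\deg_G-b,\deg_G-l)$-matchings is correct as a reduction from minimum to maximum cardinality, but — as you yourself discovered midway through — it does not yield the running time the lemma claims, and the retreat to the weaker bound $O(\sqrt{|E|}\,|E|)$ is \emph{not} sufficient for the paper's purposes. The upper-bound vector after complementation is $\bar b(v)=\deg_G(v)-l(v)$, whose sum is $\Theta(|E|)$ regardless of how small $\sum_v b(v)$ is, so Theorem~\ref{thm:gabow-unweighted} applied to the complemented instance can only give $O(|E|^{3/2})$. The whole reason the lemma is stated with $\sqrt{\sum_v b(v)}$ rather than $\sqrt{|E|}$ is that it is later invoked (via Lemma~\ref{lem:running-bounded-degree}) with a vector $b'(v)=\deg_{M'}(v)$ for a sparse feasible solution $M'$, so that $\sum_v b'(v)=2|M'|=O(n)$ while $|E|=\Theta(m)$; this is precisely the regime where the two bounds diverge, and it is what delivers the headline $O(\sqrt{n}\,m)$ running time for the unweighted case. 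Your closing remark that the bounds ``agree for $G'$'' addresses only the first, throwaway invocation (which the paper explicitly dismisses as too slow at $O(m^{3/2})$), not the one that matters.

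The paper's reduction is different and avoids this loss: for each vertex $v$ it adds an auxiliary vertex $v'$ joined to $v$ by $b(v)-l(v)$ parallel slack edges, and sets $l^*(v)=b^*(v)=b(v)$, $l^*(v')=0$, $b^*(v')=b(v)-l(v)$. Padding any $(l,b)$-matching $M$ with $b(v)-\deg_M(v)$ slack edges at each $v$ gives an $(l^*,b^*)$-matching of size exactly $\sum_v b(v)-|M|$, so minimum cardinality again becomes maximum cardinality — but now the new capacity sum is at most $2\sum_v b(v)$ and the number of added edges is at most $\sum_v b(v)\le 2|E|$ (after truncating $b(v)\le\deg_G(v)$), so Theorem~\ref{thm:gabow-unweighted} runs in $O\bigl(\sqrt{\sum_v b(v)}\,|E|\bigr)$ as claimed. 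To repair your proof you would need to replace the complementation by such a slack-vertex construction (or otherwise keep the capacity sum tied to $\sum_v b(v)$ rather than to $\sum_v\deg_G(v)$).
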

\begin{proof}
We construct an auxiliary graph $G^*$ together with vectors $l^*,b^*\in\mathbb{N}^{V(G^*)}$. We start off with $G^*=G$. For every vertex $v$ of $G$, we add an auxiliary vertex $v'$ to $G^*$. We connect $v$ with $v'$ by $b(v)-l(v)$ parallel edges in $G^*$. We set $l^*(v)=b^*(v)=b(v)$, $l^*(v')=0$ and $b^*(v')=b(v)-l(v)$. We claim that every \lbmatching{} of $G$ corresponds to an $(l^*,b^*)$-matching of $G^*$. Indeed, for any \lbmatching{} $M$ of $G$, we can construct an $(l^*,b^*)$-matching $M^*$ of $G^*$ by adding, for every vertex $v$ of $G$, $b(v)-\deg_M(v)$ copies of an edge $(v,v')$ to $M$. Notice that
$$ |M^*| = |M| + \sum_{v\in V}(b(v)-\deg_M(v)) = \left(\sum_{v\in V}b(v)\right) - |M|, $$
so a minimum cardinality \lbmatching{} of $G$ corresponds to a maximum cardinality $(l^*,b^*)$-matching of $G^*$. Hence, we can calculate the solution using the algorithm from Theorem~\ref{thm:gabow-unweighted} on $G^*$, $l^*$ and $b^*$. Notice that $\sum_{v\in V(G*)}b^*(v) \leq 2\sum_{v\in V(G)}b(v)$. Moreover, we can assume that $b(v)\leq\deg_{G}(v)$ for every vertex $v$ of $G$. Both of these facts guarantee the desired running time.
\end{proof}

We can calculate a minimum cardinality \lbmatching{} of $G'$ by applying Lemma~\ref{lem:running-minimum-cardinality} directly. However, it works in $\mathcal{O}(m^{3/2})$ time since
the sum of the upper bounds of the capacity intervals in $G'$ is $O(m)$. We can calculate it faster using the following lemma.

\begin{lemma}\label{lem:running-bounded-degree}
Let $M'$ be any \lbmatching{} of $G'$. Then there exists a minimum cardinality \lbmatching{} $N'$ of $G'$ such that $\deg_{N'}(v) \leq \deg_{M'}(v)$ for every vertex $v$ of $G'$.
\end{lemma}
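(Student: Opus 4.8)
The plan is to start from \emph{any} minimum cardinality \lbmatching{} of $G'$ that is ``as close to $M'$ as possible'' and show that if it still violates the degree bound somewhere, we can locally exchange it for another minimum cardinality \lbmatching{} that is strictly closer, contradicting extremality. Concretely, among all minimum cardinality \lbmatching{}s of $G'$ I would pick $N'$ minimizing the potential $\Phi(N')=\sum_{v\in V'}\max\{0,\deg_{N'}(v)-\deg_{M'}(v)\}$, and argue $\Phi(N')=0$. Suppose not, and fix a vertex $v$ with $\deg_{N'}(v)>\deg_{M'}(v)$.

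The key object is the edge-colored multigraph $D$ obtained from the ``difference'' of $N'$ and $M'$: for every class of parallel edges $\{x,y\}$, $D$ keeps the number of copies equal to the absolute difference between the number of copies in $N'$ and in $M'$, colored \emph{red} if $N'$ has more copies and \emph{blue} if $M'$ has more. Then at every vertex $x$ the red-degree minus the blue-degree of $D$ equals $\deg_{N'}(x)-\deg_{M'}(x)$. Pairing red edge-ends with blue edge-ends greedily at each vertex decomposes $D$ into color-alternating closed trails and open trails, where the endpoints of open trails are exactly the vertices carrying a (nonzero) surplus of one color; in particular $v$, having surplus red ends, is an endpoint of some open alternating trail $P$, attached to $P$ by a red edge. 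Let $N''$ be the \lbmatching{}-candidate obtained from $N'$ by deleting the red edges of $P$ and adding the blue edges of $P$.

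First I would check that $N''$ is always an \lbmatching{}, for \emph{any} alternating trail $P\subseteq D$. At each vertex $x$ the change $\deg_{N''}(x)-\deg_{N'}(x)$ is the number of blue edges of $P$ at $x$ minus the number of red edges of $P$ at $x$; since the interior passes of $P$ through $x$ are color-balanced, this difference is bounded in absolute value by the number of surplus ends of $D$ at $x$, hence by $|\deg_{N'}(x)-\deg_{M'}(x)|$, which forces $\deg_{N''}(x)$ to stay between $\min\{\deg_{N'}(x),\deg_{M'}(x)\}$ and $\max\{\deg_{N'}(x),\deg_{M'}(x)\}$ and therefore inside $[l(x),b(x)]$. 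Next, because $N'$ has minimum cardinality and $|N''|=|N'|-(\text{red edges of }P)+(\text{blue edges of }P)$, the trail $P$ cannot have both its endpoint edges red (that would give $|N''|=|N'|-1$); since $P$ starts with a red edge at $v$, it must end with a blue edge at some vertex $w$ with $\deg_{N'}(w)<\deg_{M'}(w)$, and then $P$ has equally many red and blue edges, so $N''$ is again a minimum cardinality \lbmatching{}. Finally, $\deg_{N''}(v)$ drops by at least one while remaining $\ge\deg_{M'}(v)$, $\deg_{N''}(w)$ stays $\le\deg_{M'}(w)$, and every other vertex is interior to $P$ and keeps its degree, so $\Phi(N'')<\Phi(N')$ — contradicting the choice of $N'$. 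Hence $\Phi(N')=0$, which is precisely the assertion.

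The step I expect to require the most care is the multigraph bookkeeping: a trail may revisit a vertex several times and may use several parallel edges, so the claims that interior visits are color-balanced and that the net degree change at a vertex is controlled by its surplus must be read off from the alternating structure of $P$ rather than assumed; once that is pinned down, the validity check and the potential drop are routine. One minor degeneracy to dispatch is that both endpoint edges of $P$ might be incident to the same vertex, but this cannot happen at $v$ (whose surplus is red, so $v$ cannot simultaneously be a blue endpoint), giving $w\neq v$ as needed. Everything else is purely a fact about \lbmatching{}s in arbitrary multigraphs and uses nothing about the gadget construction of $G'$.
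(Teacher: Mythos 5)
Your proof is correct, but it takes a genuinely different route from the paper's. The paper fixes one arbitrary minimum cardinality \lbmatching{} $N$, decomposes the symmetric difference $M'\oplus N$ into a \emph{minimal} collection of paths and cycles that are alternating for both matchings, and defines $N'$ by flipping onto $M'$ exactly the $|M'|-|N|$ paths that are $N$-augmenting; the upper degree bound is then immediate, and the lower bound $\deg_{N'}(v)\geq l(v)$ comes from a counting argument at the endpoints of those paths that exploits the non-mergeability of the decomposition. You instead run an extremal exchange argument: among all minimum cardinality \lbmatching{}s you pick one minimizing the total degree excess over $M'$, and show that swapping along a single colour-alternating trail of the red/blue difference multigraph would strictly decrease the potential while preserving feasibility and cardinality. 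Both arguments live in the same alternating-trail toolbox, but yours only ever reasons about one trail at a time and never about the interaction of several flipped paths, at the price of the extra potential-function layer; it also handles the multigraph bookkeeping explicitly, which is a real point in its favour since $G'$ does contain parallel edges (e.g., the two half-edges joining $c_H$ to $u_H^c$), whereas the paper delegates the decomposition of $M'\oplus N$ to a ``well-known'' fact. The paper's construction yields the marginally stronger conclusion that $N'$ agrees with $M'$ outside the flipped paths, but the application does not use this.
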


We postpone the proof of Lemma~\ref{lem:running-bounded-degree} since it is quite technical. The main idea behind calculating a minimum cardinality \lbmatching{} of $G'$ is to find some \lbmatching{} $M'$ of $G'$ which consists of $\mathcal{O}(n)$ edges. Such $M'$ can be constructed in the following way. We initialize $M'$ as the empty set. For every vertex $v$ of degree $t+1$ in $G$, we add any edge of $G$ incident to $v$ to $M'$. Additionally, we have to add some edges of the gadgets in order to satisfy the degree constrains of the global and subdivision vertices. It is easy to check that $O(t)$ edges of every gadget are sufficient, hence $M'$ has $\mathcal{O}(n)$ edges in total. Using Lemma~\ref{lem:running-bounded-degree}, we can find a minimum cardinality \lbmatching{} of $G'$ by running the algorithm from Lemma~\ref{lem:running-minimum-cardinality} for $G$ and vectors $l,b'\in\mathbb{N}^{V(G)}$ such that $b'(v)=\deg_M(v)$ for every vertex $v$ of $G'$. Since $\sum_{v\in V}{b'(v)}=2|M'|=\mathcal{O}(n)$, the total runtime of this procedure is $\mathcal{O}(\sqrt{n}m)$.

To prove Lemma~\ref{lem:running-bounded-degree}, we use the classical notion of alternating paths and cycles. Let $M$ be an \lbmatching{} of a graph $G$. An edge belonging to $M$ will be referred to as an {\bf \em $M$-edge} and an edge not belonging to $M$ as a {\bf \em non-$M$-edge}. {\bf \em An $M$-alternating path} $P$ is any sequence of vertices $(v_1,v_2,\ldots,v_k)$ such that edges on $P$ are alternately $M$-edges and non-$M$-edges and no edge occurs on $P$ more than once.
{\bf \em An $M$-alternating cycle} $C$ has the same definition as an $M$-alternating path except that $v_1=v_k$ and additionally $(v_{k-1},v_k)\in M$ if and only if $(v_1,v_2)\notin M$. Note that an $M$-alternating path or cycle may go through some vertices more than once but via different edges.
An $M$-alternating path is called {\bf \em $M$-augmenting} if it begins and ends with a non-$M$-edge. With a slight abuse of notation, we identify an $M$-alternating path or cycle with its edge set. For subsets $A,B\subseteq E$, we define a {\bf \em symmetric difference} $A\oplus B=(A\setminus B)\cup(B\setminus A)$.

\begin{proof}[\textbf{\textup{Proof of Lemma~\ref{lem:running-bounded-degree}}}]
Let $N$ be any minimum cardinality \lbmatching{} of $G'$. It is well-known \mwcom{add some source} that $M'\oplus N$ can be partitioned into a collection $\mathcal{P}$ of pairwise edge-disjoint paths and cycles which are both $M'$-alternating and $N$-alternating. We assume that partition $\mathcal{P}$ is {\bf \em minimal}, i.e. no two paths or cycles of $\mathcal{P}$ can be merged into one $M'$-alternating and $N$-alternating path or cycle. From the definition of $N$, $|N|\leq |M'|$, so $\mathcal{P}$ contains $k=|M'|-|N|$ different paths $P_1$, $P_2$, \ldots, $P_k$ which are both $M'$-alternating and $N$-augmenting. We say that these paths are {\bf \em special}. Define $N'=M'\oplus P_1\oplus P_2 \oplus\ldots\oplus P_k$, i.e. $N'$ consists of all $M'$-edges which do not belong to any special path and all $N$-edges which belong to some special path. We claim that $N'$ is the sought-after minimum cardinality \lbmatching{} of $G'$. At first, notice that $|N'|=|M'|-k=|N|$. Moreover, observe that adding $P_i$ to the symmetric difference with $M'$ decrease only the degree of both endpoints of $P_i$ while leaving the degrees of other vertices of $G'$ unchanged. Therefore, $\deg_{N'}(v)\leq\deg_{M'}(v)\leq b(v)$ for every vertex $v$ of $G'$.

It remains to show that $\deg_{N'}(v)\geq l(v)$ for every vertex $v$ of $G'$. From our previous observation, it is sufficient to prove this only for endpoints of special paths. Consider any vertex $v$ of $G'$ which is an endpoint of $p\geq 1$ special paths. If some special path has both endpoints in $v$, we calculate it twice. Notice that $v$ is incident to exactly $p$ different $M'$-edges in $G'$ which belong to some special paths. We claim that the number of remaining $N$-edges incident to $v$ in $G'$ is not greater than the number of remaining $M'$-edges incident to $v$ in $G'$. Indeed, if there are more $N$-edges than $M'$-edges among the remaining edges incident to $v$ in $G'$, then $v$ is an endpoint of some $M'$-alternating and $N$-alternating path $P'$ of $\mathcal{P}$ such that an edge of $P'$ which is incident to $v$ in $G'$ belongs to $N$. Notice that we can merge $P'$ and any special path which has an endpoint in $v$ into one path -- a contradition with the minimality of $\mathcal{P}$. From our claim, $\deg_{N}(v)\leq \deg_{M'}(v)-p$. Hence, $\deg_{N'}(v)=\deg_{M'}(v)-p \geq \deg_{N}(v) \geq l(v)$ because $N$ is an \lbmatching{} of $G'$.
\end{proof}

\subsection{Finding all forbidden subgraphs of $G$}\label{sec:running-kpq}
In this subsection, we present Algorithm~\ref{alg:find-kpq-main} which finds all \kpq{}'s of $G$ in $\mathcal{O}(m)$ time, given some natural numbers $p\geq 2$ and $q\geq 1$ such that $t=(p-1)q$. It also returns which pairs of \kpq{}'s of $G$ have some common vertices. Applying this algorithm to Algorithm~\ref{alg:main} for the bounded \kpq{}-free $t$-matching problem is fairly straightforward. For the bounded restricted $t$-matching problem, we run Algorithm~\ref{alg:find-kpq-main} twice -- once to find all \kt{}'s of $G$, and again to find all \ktt{}'s of $G$. However, we have to find also all pairs consisting of \kt{} and \ktt{} of $G$ which have a common vertex. This can be done in $\mathcal{O}(m)$ time since Lemma~\ref{lem:cliques-kt-ktt-description} indicates that for every such pair, $t=3$ and the vertex set of \kt{} is a subset of the vertex set of \ktt{}.

We say that a graph is {\bf \em bounded} if its every vertex has degree at most $t+1$. The following lemma gives an importart subprocedure of Algorithm~\ref{alg:find-kpq-main}. We postpone its proof since it is quite technical. 

\begin{lemma}\label{lem:find-kpq-naive}
Let $v$ be any vertex of a bounded graph $\tilde{G}$. We can find in $\mathcal{O}(t^2)$ time all \kpq{}'s of $\tilde{G}$ which contain $v$.
\end{lemma}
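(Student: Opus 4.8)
The plan is to exploit the degree bound: since \kpq{} is $t$-regular and every vertex of $\tilde{G}$ has degree at most $t+1$, any vertex lying in some \kpq{} has at most one neighbour outside that subgraph, so once we fix a single edge of the sought subgraph at $v$ its whole vertex set is pinned down up to a constant number of vertices. First I would dispose of the trivial case $\deg_{\tilde{G}}(v)<t$, in which $v$ lies in no $t$-regular subgraph. Otherwise $\deg_{\tilde{G}}(v)\in\{t,t+1\}$, and any \kpq{} containing $v$ uses at least $t$ of the at most $t+1$ edges of $\tilde{G}$ incident to $v$, hence contains at least one of two arbitrarily fixed such edges $e_1=(v,u_1)$ and $e_2=(v,u_2)$. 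So it suffices to enumerate, for one fixed edge $e=(v,u)$ at $v$, all \kpq{}'s of $\tilde{G}$ containing $e$, and then take the union over $e\in\{e_1,e_2\}$ with duplicates removed.

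Next, fix $e=(v,u)$ and suppose $H=\kpq{}$ contains $e$; then $v$ and $u$ lie in distinct colour classes $V_v,V_u$ of $H$. The $(p-2)q=t-q$ vertices in the remaining colour classes are adjacent in $H$ to both $v$ and $u$, hence lie in $W:=N_{\tilde{G}}(v)\cap N_{\tilde{G}}(u)$; the $q-1$ vertices of $V_v\setminus\{v\}$ lie in $N_{\tilde{G}}(u)\setminus(N_{\tilde{G}}(v)\cup\{v\})$; and the $q-1$ vertices of $V_u\setminus\{u\}$ lie in $N_{\tilde{G}}(v)\setminus(N_{\tilde{G}}(u)\cup\{u\})$. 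Thus $V(H)$ is contained in the set $A_e$ consisting of $v$, $u$, and these three groups of candidate vertices, and $A_e$ can be read off from the neighbour lists of $v$ and $u$ in $\mathcal{O}(t)$ time. Moreover every vertex of $A_e\setminus V(H)$ is a neighbour of $v$ or of $u$ that does not belong to the $t$-regular graph $H$, and $v$ (respectively $u$) has at most one such neighbour; therefore $|A_e\setminus V(H)|\le 2$ and $|A_e|\le pq+2=\mathcal{O}(t)$, using $pq\le 2t$.

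It then remains to extract, for the fixed edge $e$, all \kpq{}'s whose vertex set lies inside $A_e$ and which use $e$. I would compute once the subgraph $\tilde{G}[A_e]$ and its complement — an $\mathcal{O}(t^2)$ computation since $|A_e|=\mathcal{O}(t)$ — and then read off the \kpq{}'s from the connected-component and cut-vertex structure of that complement: a \kpq{} on a vertex set $A$ is exactly a partition of $A$ into $p$ colour classes of size $q$ each inducing an independent set of $\tilde{G}$, i.e. a covering of the complement of $\tilde{G}[A]$ by $p$ pairwise disjoint cliques of size $q$; and since $|A_e\setminus V(H)|\le 2$ the admissible such $A\subseteq A_e$ are located from the small-component/cut-vertex structure of the complement of $\tilde{G}[A_e]$ without inspecting more than $\mathcal{O}(1)$ near-maximal subsets, for $\mathcal{O}(t^2)$ in total. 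The degenerate special cases $q=1$ (so $\kpq{}=\kt{}$) and $p=2$ (so $\kpq{}=K_{q,q}$) fit the same scheme with the appropriate complement structure, and the way distinct \kpq{}'s can overlap is controlled by Lemma~\ref{lem:partite-description} for $q\ge 3$ and by Lemma~\ref{lem:partite-2-description} for $q=2$.

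The main obstacle is the case $q=2$: here two distinct \kpq{}'s can share the very same vertex set, a dense subgraph hosting one \kpq{} per perfect matching of its core (Observation~\ref{obs:all-kpt-in-dense-subgraph}), so one cannot list all \kpq{}'s through $v$ individually. The resolution is to detect from $\tilde{G}[A_e]$ whether the relevant vertex set is that of a dense subgraph — its core being the vertices of degree $t+1$ whose neighbourhood stays inside $A_e$ — and, if so, to return the dense subgraph together with its core, which compactly represents the whole family; by Lemma~\ref{lem:partite-core} and Lemma~\ref{lem:dense-disjoint} this is consistent and still costs only $\mathcal{O}(t^2)$. Checking that this bookkeeping, combined with the component/cut-vertex analysis, really stays within $\mathcal{O}(t^2)$ while capturing every \kpq{} through $v$ is the technical heart of the argument, which is why the lemma's proof is postponed.
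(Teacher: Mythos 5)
Your overall strategy coincides with the paper's: fix two edges at $v$, at least one of which must lie in any $K^p_q$ through $v$ because of the degree bound; assemble an $O(t)$-size candidate vertex set from the common and one-sided neighbourhoods of the chosen edge's endpoints (the paper's $\Gamma(e)$, $\Gamma_0(e)$, $\Gamma_1(e)$); build the complement of the induced subgraph on that set in $O(t^2)$ time; and read the forbidden subgraphs off its component and cut-vertex structure, with the $q=2$ dense subgraphs returned as a compactly represented family rather than listed one by one.

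One step of your extraction phase is stated incorrectly, however. A $K^p_q$ of $\tilde{G}$ on a vertex set $A$ is \emph{not} ``a partition of $A$ into $p$ colour classes of size $q$ each inducing an independent set of $\tilde{G}$,'' i.e.\ a clique cover of the complement. The forbidden subgraphs need not be induced subgraphs: a vertex of a colour class may use its one spare edge (degree $t+1$ versus $t$ inside the $t$-regular subgraph) to reach another vertex of the \emph{same} class. This is exactly what produces the dense subgraphs when $q=2$, and it can equally happen for $q\geq 3$. That is why the paper's observations characterize the colour classes as the \emph{connected components} of the complement of $\tilde{G}[A]$: each class induces $K_q$ minus a matching there, which is connected for $q\geq 3$ but not necessarily complete. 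If you literally enumerate coverings by disjoint $q$-cliques of the complement, you will miss every $K^p_q$ one of whose classes carries an intra-class edge of $\tilde{G}$. The fix is purely local --- replace ``clique of size $q$'' by ``connected component of size $q$'' for $q\geq 3$ (for $q=2$ the complement is a matching plus isolated vertices, and your dense-subgraph bookkeeping already handles the resulting ambiguity) --- after which the argument matches the paper's.
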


The main idea behind Algorithm~\ref{alg:find-kpq-main} is to find two different vertices $v_1$ and $v_2$ of $G$ which have $\Theta(t)$ common neighbours using Lemma~\ref{lem:find-kpq-friend} given below. We check if $v_1$ or $v_2$ belongs to some \kpq{} of $G$ using Lemma~\ref{lem:find-kpq-naive}. Recall that if some two \kpq{}'s of $G$ have a common vertex, then they have at least $pq-2$ common vertices (see Lemmas~\ref{lem:cliques-kt-description}, \ref{lem:cliques-ktt-description}, \ref{lem:partite-description} and~\ref{lem:partite-2-description}). Therefore, if we find some $H=K^p_q$ of $G$, we can find all \kpq{}'s of $G$ which have some common vertex with $H$ in $\mathcal{O}(t^2)$ time. In such a case, we return $H$ and all \kpq{}'s of $G$ which have a common vertex with $H$. Then we remove the vertex set of $H$ from $G$ and we continue to search for \kpq{}'s in the rest of the graph. Observe that if $v_1$ and $v_2$ do not belong to any \kpq{} of $G$, then none of the common neighbours of $v_1$ and $v_2$ belongs to any \kpq{} of $G$ as well. In this case, we remove $v_1$, $v_2$, and all their common neighbours from $G$, and we search for \kpq{}'s in the rest of the graph. In both cases, we remove $\Theta(t)$ vertices in $\mathcal{O}(t^2)$ time, which means that we spend $\mathcal{O}(t)$ time per vertex on average. Since no vertex of degree less than $t$ belongs to some \kpq{} of $G$, we can remove such vertices from $G$ at the beginning. Therefore, we can conclude that the running time of Algorithm~\ref{alg:find-kpq-main} is $\mathcal{O}(m)$.

\begin{lemma}\label{lem:find-kpq-friend}
Let $v_1$ be any vertex of a bounded graph $\tilde{G}$. There exists a procedure which works in $\mathcal{O}(t)$ time and it either concludes that $v_1$ does not belong to any \kpq{} of $\tilde{G}$, or returns a vertex $v_2\neq v_1$ of $\tilde{G}$ which has at least $\max\{p-2,1\}q$ common neighbours with $v_1$.
\end{lemma}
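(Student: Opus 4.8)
The plan is to give a procedure that inspects only a constant number of vertices in the second neighbourhood of $v_1$. First I would read off $\deg_{\tilde G}(v_1)$, which takes $\mathcal{O}(t)$ time since $\deg_{\tilde G}(v_1)\le t+1$; because every vertex of a $t$-regular subgraph has degree at least $t$ in $\tilde G$, if $\deg_{\tilde G}(v_1)<t$ we may at once report that $v_1$ lies in no \kpq{}. From now on assume $\deg_{\tilde G}(v_1)\ge t$, and note that $t=(p-1)q\ge 2$ in all cases where we invoke the lemma, so $v_1$ has at least two neighbours. I would then treat the cases $p\ge 3$ and $p=2$ separately; the reason for splitting is that for $p\ge 3$ a neighbour of $v_1$ lying in a \kpq{} through $v_1$ already shares enough neighbours with $v_1$, while for $p=2$ (where $\kpq{}=K_{q,q}$) such a neighbour lies in the opposite colour class and one must step one vertex further to reach the colour class of $v_1$.

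For $p\ge 3$: pick any two distinct neighbours $w_1,w_2$ of $v_1$ and compute $c_i=|N(w_i)\cap N(v_1)|$ for $i=1,2$, each intersection in $\mathcal{O}(t)$ time by temporarily marking the vertices of $N(v_1)$ in an auxiliary array indexed by the vertices and scanning $N(w_i)$. If $\max\{c_1,c_2\}\ge(p-2)q$ return the corresponding $w_i$ as $v_2$; otherwise report that $v_1$ lies in no \kpq{}. Correctness rests on the following: if $v_1$ belonged to some $H=\kpq{}$ with $v_1$ in colour class $V_j(H)$, then $v_1$ is adjacent to every vertex of $V(H)\setminus V_j(H)$, a set of exactly $(p-1)q=t$ vertices, so --- since $\deg_{\tilde G}(v_1)\le t+1$ --- at most one neighbour of $v_1$ lies outside $V(H)$; hence one of $w_1,w_2$, say $w_1$, lies in $V(H)$, in some colour class $V_k(H)$. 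Whether or not $k=j$, both $v_1$ and $w_1$ are adjacent in $H$ to every vertex of $V(H)\setminus(V_j(H)\cup V_k(H))$, which has at least $(p-2)q$ vertices, so $c_1\ge(p-2)q$. Thus the procedure returns some $v_2$ whenever $v_1$ lies in a \kpq{}, which justifies the negative report; conversely, whenever a $w_i$ is returned the inequality $c_i\ge(p-2)q=\max\{p-2,1\}q$ has been verified, and $w_i\ne v_1$ since $w_i$ is a neighbour of $v_1$.

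For $p=2$ (so $t=q$ and any such $H$ equals $K_{q,q}$): pick two distinct neighbours $w_1,w_2$ of $v_1$, and for each $w_i$ pick up to two neighbours of $w_i$ other than $v_1$, obtaining at most four candidates $x$; for each candidate compute $|N(x)\cap N(v_1)|$, return $x$ as $v_2$ as soon as this value reaches $q$, and report that $v_1$ lies in no \kpq{} if no candidate qualifies --- all in $\mathcal{O}(t)$ time. For correctness, suppose $v_1\in H=K_{q,q}$ with $v_1\in V_1(H)$. Since $v_1$ is adjacent to all $q=t$ vertices of $V_2(H)$ and $\deg_{\tilde G}(v_1)\le t+1$, at most one neighbour of $v_1$ misses $V_2(H)$, so some $w_i$, say $w_1$, lies in $V_2(H)$. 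Then $w_1$ is adjacent to all of $V_1(H)$ (in particular to $v_1$) and, again by the degree bound, at most one of its neighbours lies outside $V_1(H)$; since $q\ge 2$ the set $V_1(H)\setminus\{v_1\}$ is nonempty, so among the (at least one, at most two) neighbours of $w_1$ that we probe other than $v_1$ at least one, say $v_2$, lies in $V_1(H)\setminus\{v_1\}$. Then $v_1$ and $v_2$ are both adjacent to all of $V_2(H)$, so $|N(v_2)\cap N(v_1)|\ge q=\max\{p-2,1\}q$, and $v_2\ne v_1$ by construction. Hence the procedure again returns a valid $v_2$ whenever $v_1$ lies in a \kpq{}.

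The step I expect to need the most care is this $p=2$ analysis: one must verify that a probed neighbour $w_i$ genuinely has enough neighbours besides $v_1$ to reach the colour class of $v_1$, and that the returned candidate differs from $v_1$. Both follow from the degree bound $\deg_{\tilde G}(w)\ge t$ valid for every vertex $w$ of a \kpq{}, together with $t\ge 2$ and the explicit exclusion of $v_1$ from the candidate list. The remaining ingredients --- the initial degree test and the $\mathcal{O}(1)$ many neighbourhood intersections --- are plainly computable in $\mathcal{O}(t)$ time, which gives the claimed bound.
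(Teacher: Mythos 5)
Your proof is correct and follows essentially the same strategy as the paper: probe a constant number of vertices at distance at most two from $v_1$, using the fact that each vertex of a forbidden $K^p_q$ has at most one incident edge leaving it, and then verify the common-neighbour count directly. The only (harmless) difference is that for $p\geq 3$ you test the two first-neighbourhood candidates directly instead of uniformly stepping to the second neighbourhood as the paper does for all $p$.
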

\begin{proof}
The procedure works in the following way. It processes any two different neighbours of $v_1$, say $u_1$ and $u_2$ (or all neighbours of $v_1$, if there are less than two of them). The procedure processes every vertex $u\in\{u_1,u_2\}$ as follows. It checks any two different neighbours of $u$ different than $v_1$, say $z_1$ and $z_2$, as candidates for $v_2$ (or all of them, if there are less than two such vertices). For every $z\in\{z_1,z_2\}$, the procedure checks if $v_1$ and $z$ have at least $\max\{p-2,1\}q$ common neighbours. If so, the procedure returns $v_2=z$. If none of the vertices $z$ satisfies this condition, the procedure returns that $v_1$ does not belong to any \kpq{} of $\tilde{G}$. It is easy to notice that this procedure can be implemented to run in $\mathcal{O}(t)$ time. Its correctness follows from the following claims.
\begin{claim}
Assume that $v_1$ belongs to some $H=K^p_q$ of $\tilde{G}$. Then the procedure given above considers some $u$ and $z$ such that both edges $(v_1,u)$ and $(u,z)$ belong to $H$.
\end{claim}
\begin{proof}
We consider the execution of the presented procedure. Since $v_1$ belongs to $t$-regular $H$, $v_1$ has at least $t\geq 3$ neighbours in $\tilde{G}$, so $u_1$ and $u_2$ exist. Recall that $\tilde{G}$ is bounded, so $\deg_{\tilde{G}}(v_1) \leq t+1$. Therefore, at most one edge of $\tilde{G}$ incident to $v_1$ does not belong to $H$. It implies that at least one of edges $(v_1,u_1)$ and $(v_1,u_2)$ belongs to $H$. Consider $u\in\{u_1,u_2\}$ such that $(v_1,u)$ belongs to $H$. We can use similar arguments for $u$ as for $v_1$ to prove that both $z_1$, $z_2$ exist and that at least one of edges $(u,z_1)$, $(u,z_2)$ belongs to $H$. We choose $z\in\{z_1,z_2\}$ such that $(u,z)$ belongs to $H$.
\end{proof}
\begin{claim}
Assume that some two different edges $(w_1,t)$ and $(w_2,t)$ belong to the same \kpq{} of $\tilde{G}$. Then $w_1$ and $w_2$ have at least $\max\{p-2,1\}q$ common neighbours.
\end{claim}
\begin{proof}
Let $H$ be any \kpq{} of $G$ which contains both $(w_1,t)$ and $(w_2,t)$.
Assume that $t\in V_1(H)$. Then both $w_1$ and $w_2$ belong to $V(H)\setminus V_1(H)$. We consider the following two cases.
\begin{enumerate}
\item If $w_1$ and $w_2$ belong to the same color class of $H$, say $V_2(H)$, then they have at least $(p-1)q$ common neighbours -- all vertices of $V(H)\setminus V_2(H)$.
\item\label{case:kpq-different} If $w_1$ and $w_2$ belong to different color classes of $H$, say $V_2(H)$ and $V_3(H)$ respectively, then they have at least $(p-2)q$ common neighbours -- all vertices of $V(H)\setminus(V_2(H)\cup V_3(H))$.
\end{enumerate}
In both cases, $w_1$ and $w_2$ have at least $(p-2)q$ common neighbours. However, case~2. cannot happen if $p=2$, so we can conclude that $w_1$ and $w_2$ have at least $\max\{p-2,1\}q$ common neighbours.
\end{proof}
\end{proof}

Now we present a pseudocode of Algorithm~\ref{alg:find-kpq-main} and a proof of its correctness.

\begin{algorithm}[H]
%\hspace*{\algorithmicindent} \textbf{Input:} integer $t\geq 3$, undirected graph $G=(V,E)$ in which every vertex has degree at most $t+1$, weight function $w:E\to\mathbb{R}_{\geq 0}$ which is vertex-induced on every forbidden subgraph of $G$ \\
%\hspace*{\algorithmicindent} \textbf{Output:} a maximum weight restricted $t$-matching or \kpq{}-free $t$-matching of $G$
\begin{algorithmic}[1]
\State \label{itm:alg2-step1}Initialize $\tilde{G}$ as $G$. Remove all vertices of degree less than $t$ in $G$ from $\tilde{G}$.
\State \label{itm:alg2-step3}If $\tilde{G}$ is empty, stop the algorithm.
\State \label{itm:alg2-step5}Run the procedure from Lemma~\ref{lem:find-kpq-friend} on $\tilde{G}$ and any its vertex $v_1$.
\State \label{itm:alg2-step6}If $v_1$ does not belong to any \kpq{} of $\tilde{G}$, remove $v_1$ from $\tilde{G}$ and go to Step~\ref{itm:alg2-step3}.
\State \label{itm:alg2-step7}Otherwise, let $v_2$ be a vertex found in Step~\ref{itm:alg2-step5}. Run the procedure from Lemma~\ref{lem:find-kpq-naive} on $\tilde{G}$ and $v_i$, for every $i\in\{1,2\}$.
\State \label{itm:alg2-step8}If none of the vertices $v_1$, $v_2$ belong to any \kpq{} of $\tilde{G}$, remove $v_1$, $v_2$ and all its common neighbours from $\tilde{G}$ and go to Step~\ref{itm:alg2-step3}.
\State \label{itm:alg2-step9}Otherwise, let $H$ be any \kpq{} found in Step~\ref{itm:alg2-step7}. Return $H$ and all \kpq{}'s of $\tilde{G}$ which have some common vertex with $H$. Return that every two different returned \kpq{}'s have a common vertex. Remove $H$ from $\tilde{G}$ and go to Step~\ref{itm:alg2-step3}.
\end{algorithmic}
\caption{Finding all \kpq{}'s of $G$}
\label{alg:find-kpq-main}
\end{algorithm}

\begin{claim}
Algorithm~\ref{alg:find-kpq-main} finds all \kpq{}'s of $G$ in $\mathcal{O}(m)$ time. Moreover, it finds all pairs of \kpq{}'s of $G$ which have a common vertex.
\end{claim}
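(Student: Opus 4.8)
The plan is to establish the two assertions separately: correctness of the output, and the $\mathcal{O}(m)$ time bound. For correctness I would fix the type $(p,q)$ (for the restricted problem Algorithm~\ref{alg:find-kpq-main} is run once for \kt{}'s, with $q=1$, and once for \ktt{}'s, with $p=2$) and maintain the loop invariant that, at the top of every iteration (Step~\ref{itm:alg2-step3}), the \kpq{}'s of $\tilde G$ are precisely the \kpq{}'s of $G$ that have not yet been reported; this makes sense because $\tilde G$ is always a subgraph of $G$. The base case is Step~\ref{itm:alg2-step1}: a \kpq{} is $t$-regular, so every vertex lying in some \kpq{} of $G$ has degree at least $t$ in $G$ and is not deleted. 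For the inductive step I would check each point at which a vertex leaves $\tilde G$. In Step~\ref{itm:alg2-step6}, $v_1$ is certified by Lemma~\ref{lem:find-kpq-friend} to lie in no \kpq{} of $\tilde G$. In Step~\ref{itm:alg2-step8}, $v_1$ and $v_2$ lie in no \kpq{} of $\tilde G$ by Lemma~\ref{lem:find-kpq-naive}, and then no common neighbour $u$ of $v_1,v_2$ can lie in a \kpq{} $H'$ of $\tilde G$ either: since $u$ has exactly $t$ neighbours in $V(H')$ and degree at most $t+1$ in $\tilde G$, at most one neighbour of $u$ lies outside $V(H')$, yet $v_1,v_2\notin V(H')$ would be two such neighbours -- a contradiction. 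In Step~\ref{itm:alg2-step9} the deleted set $V(H)$ is the vertex set of a genuine \kpq{}, and preservation of the invariant reduces to showing that \emph{every} \kpq{} of $\tilde G$ meeting $V(H)$ is among those reported in that same step.

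That reduction is the structural heart of the proof, and it rests on a single observation: for a fixed type $(p,q)$ the relation ``two \kpq{}'s share a vertex'' is transitive. Indeed, by Lemmas~\ref{lem:cliques-kt-description}, \ref{lem:cliques-ktt-description}, \ref{lem:partite-description} and~\ref{lem:partite-2-description}, two \kpq{}'s that meet share at least $pq-1$ vertices, except for two \ktt{}'s, which share at least $pq-2=2t-2$; hence if $H_1$ meets $H_2$ and $H_2$ meets $H_3$ then $|V(H_1)\cap V(H_3)|\ge |V(H_1)\cap V(H_2)|+|V(H_2)\cap V(H_3)|-|V(H_2)|$ is at least $pq-2\ge 2$ in general and $2t-4\ge 2$ in the \ktt{} case, so $H_1$ meets $H_3$. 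Thus ``meets'' partitions the \kpq{}'s of $G$ into clusters that are pairwise vertex-disjoint. Combining this with the invariant, once the algorithm discovers one \kpq{} $H$ (in Step~\ref{itm:alg2-step9}) every other member of its cluster is still wholly contained in $\tilde G$ and meets $H$, so it is reported and its incidence with every co-reported \kpq{} is recorded; conversely no already-reported \kpq{} meets $H$, for otherwise $H$ would meet the pivot of the earlier reporting step and would itself have been reported then. Hence each cluster is reported, together with all of its internal incidences, during exactly one execution of Step~\ref{itm:alg2-step9}, which proves that every \kpq{} and every intersecting pair is output exactly once. For the restricted problem one adds a final $\mathcal{O}(m)$ sweep over the \ktt{}'s (needed only when $t=3$) to record \kt{}/\ktt{} incidences using that $V(\kt{})\subseteq V(\ktt{})$ by Lemma~\ref{lem:cliques-kt-ktt-description}. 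Termination is clear since every iteration deletes at least one vertex of the finite graph $\tilde G$.

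For the running time I would begin with $\mathcal{O}(m)$ preprocessing: give each edge cross-pointers between its two adjacency-list occurrences, so that deleting a vertex $v$ costs $\mathcal{O}(\deg_{\tilde G}(v))=\mathcal{O}(t)$, and perform Step~\ref{itm:alg2-step1}, which touches each edge $\mathcal{O}(1)$ times. Let $n'$ be the number of vertices of $G$ of degree at least $t$; since $t\,n'\le\sum_{v}\deg_G(v)=2m$ we get $n'\le 2m/t$, and the loop deletes exactly these $n'$ vertices over its whole run. The key is to bill every iteration's work against the vertices it deletes. An iteration ending at Step~\ref{itm:alg2-step6} does $\mathcal{O}(t)$ work (Lemma~\ref{lem:find-kpq-friend} plus one deletion) and deletes one vertex. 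Any other iteration does $\mathcal{O}(t^2)$ work -- the two invocations of Lemma~\ref{lem:find-kpq-naive}, the $\mathcal{O}(t^2)$ search for all \kpq{}'s meeting the pivot (justified in the text preceding the algorithm), and $\mathcal{O}(t)$ deletions -- but deletes $\Omega(t)$ vertices: in Step~\ref{itm:alg2-step8} the $\max\{p-2,1\}q=\Omega(t)$ common neighbours guaranteed by Lemma~\ref{lem:find-kpq-friend}, and in Step~\ref{itm:alg2-step9} the $pq\ge t$ vertices of $V(H)$. So every iteration spends $\mathcal{O}(t)$ per deleted vertex, and summing over iterations the total is $\mathcal{O}(t)\cdot n'=\mathcal{O}(t\cdot m/t)=\mathcal{O}(m)$.

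The step I expect to be the main obstacle is the one used inside Step~\ref{itm:alg2-step9}: given one \kpq{} $H$, enumerating -- in the appropriate compact form, since for $q=2$ a dense subgraph must be returned as a single object (it can contain exponentially many \kpt{}'s) -- all \kpq{}'s meeting $H$ within $\mathcal{O}(t^2)$ time. The overlap lemmas confine any such \kpq{} $H'$ to $V(H)$ together with at most two extra vertices, and each extra vertex is $G$-adjacent to at least $t-1$ vertices of $V(H)$; since $H$ is $t$-regular, at most $|V(H)|=pq\le 2t$ edges leave $V(H)$, so there are only $\mathcal{O}(1)$ candidate extra vertices, which bounds the search space. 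Turning this counting into the stated $\mathcal{O}(t^2)$ bound, and making the amortised-deletion bookkeeping fully rigorous, is the delicate part; the transitivity of ``meets'' is the conceptual crux that makes the cluster-by-cluster correctness argument go through.
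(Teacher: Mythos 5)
Your proposal is correct and follows essentially the same route as the paper: the same loop invariant over the removal steps, the same transitivity-of-intersection claim proved via the overlap lemmas and the bound $pq-2\geq 2$ (resp.\ $2t-4\geq 2$ for $p=2$), and the same amortised charging of $\mathcal{O}(t)$ work per deleted vertex combined with the handshaking bound $tn_t\leq 2m$. The extra details you supply (the explicit argument for common neighbours in the Step~\ref{itm:alg2-step8} removal, and the $\mathcal{O}(1)$ bound on candidate extra vertices for the Step~\ref{itm:alg2-step9} enumeration) only flesh out points the paper states more tersely.
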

\begin{proof}
At first, we prove that Algorithm~\ref{alg:find-kpq-main} finds all \kpq{}'s of $G$. We claim that during its execution an invariant is maintained that all \kpq{}'s of $G$ which were not returned yet are present in $\tilde{G}$. To show this, it is sufficient to prove that every time some vertex $v$ is removed from $\tilde{G}$, all \kpq{}'s of $\tilde{G}$ which contain $v$ were returned by Algorithm~\ref{alg:find-kpq-main} already. We consider all steps of Algorithm~\ref{alg:find-kpq-main} where some vertices are removed from $\tilde{G}$:
\begin{itemize}
\item Step~\ref{itm:alg2-step1}: notice that no vertex removed from $\tilde{G}$ belongs to any \kpq{} of $G$ since \kpq{} is $t$-regular,
\item Step~\ref{itm:alg2-step6}: we remove $v_1$ which does not belong to any \kpq{} of $\tilde{G}$,
\item Step~\ref{itm:alg2-step8}: if some common neighbour of $v_1$ and $v_2$ belongs to some \kpq{} of $\tilde{G}$, then also $v_1$ or $v_2$ belongs to some \kpq{} of $\tilde{G}$, but we assume that both $v_1$ and $v_2$ do not belong to any \kpq{} of $\tilde{G}$,
\item Step~\ref{itm:alg2-step9}: we remove $V(H)$ from $\tilde{G}$, but we return all \kpq{}'s of $\tilde{G}$ which contain some vertex of $V(H)$. 
\end{itemize}

Now we prove that Algorithm~\ref{alg:find-kpq-main} finds all pairs of \kpq{}'s of $G$ which have a common vertex. We return some \kpq{}'s only in Step~\ref{itm:alg2-step9}. Recall that all these \kpq{}'s do not belong to $\tilde{G}$ after Step~\ref{itm:alg2-step9} anymore. Notice that it is sufficient to show that all these \kpq{}'s do not have any common vertices with all other \kpq{}'s of $\tilde{G}$. It follows from the following claim.
\begin{claim}
Let $H_1$, $H_2$ and $H_3$ be any \kpq{}'s of a bounded graph $\tilde{G}$. Assume that $H_1$ has a common vertex with $H_2$ and a common vertex with $H_3$. Then $H_2$ and $H_3$ have a common vertex.
\end{claim}
\begin{proof}
Recall that, by Lemmas~\ref{lem:cliques-kt-description}, \ref{lem:cliques-ktt-description}, \ref{lem:partite-description} and~\ref{lem:partite-2-description}, any two different \kpq{}'s of $\tilde{G}$ which have a common vertex have at least $pq-2$ common vertices. Moreover, if $p > 2$, then they have at least $pq-1$ common vertices.

Consider a case where $p > 2$. Then at most one vertex of $H_1$ does not belong to $H_2$ and at most one vertex of $H_1$ does not belong to $H_3$. Hence, there are at least $pq-2=t+q-2\geq 2$ vertices of $H_1$ which belong to both $H_2$ and $H_3$. Any of these vertices is a common vertex of $H_2$ and $H_3$.

Consider a case where $p = 2$. Then at most two vertices of $H_1$ do not belong to $H_2$ and at most two do not belong to $H_3$. Hence, there are at least $2q-4=2t-4 \geq 2$ vertices of $H_1$ which belong to both $H_2$ and $H_3$.
\end{proof}

It remains to show that Algorithm~\ref{alg:find-kpq-main} works in $\mathcal{O}(m)$ time. It is easy to implement Step~\ref{itm:alg2-step1} in $\mathcal{O}(m)$ time. We divide the rest of the execution of Algorithm~\ref{alg:find-kpq-main} into phases. Every phase works in $\mathcal{O}(tk)$ time and ends with the removal of $k$ vertices from $\tilde{G}$, for some natural positive number $k$. Notice that it implies that Algorithm~\ref{alg:find-kpq-main} works in $\mathcal{O}(tn_t)$ time where $n_t$ denotes the number of the vertices of degree at least $t$ in $G$, but from the handshaking lemma, $tn_t\leq 2m$.

The partition of the rest of the execution of Algorithm~\ref{alg:find-kpq-main} into phases is simple. For every removal of some vertices from $\tilde{G}$, its phase consists of this removal and all steps preceding it until the previous removal. We prove that every such phase works in $\mathcal{O}(tk)$ time indeed, where $k$ denotes the number of removed vertices. For the removal in Step~\ref{itm:alg2-step6}, by Lemma~\ref{lem:find-kpq-friend}, its phase works in $\mathcal{O}(t)$ time and we remove one vertex. For the removal in Steps~\ref{itm:alg2-step8} and~\ref{itm:alg2-step9}, by Lemmas~\ref{lem:find-kpq-naive},~\ref{lem:find-kpq-friend} and our previous observations, its phase works in $\mathcal{O}(t^2)$ time and we remove $\Theta(t)$ vertices.

\end{proof}

Now we present the proof of Lemma~\ref{lem:find-kpq-naive}. We prove it for cases where $\kpq=\kt$, $\kpq=\kpt$ and for the remaining cases separately. At first, we give some definitions. Let $G=(V,E)$ be any graph. A {\bf \em complement} of $G$ is a graph $H=(V,E')$, where $E'$ consists of all pairs of different vertices of $V$ which are not connected by any edge from $E$. Graph $G$ is said to a {\bf \em star} if $E=\{(v_0,v):v\in V\setminus\{v_0\}\}$ for some vertex $v_0$ of $V$. In such a case, vertex $v_0$ is called a {\bf \em center} of $G$.
Graph $G$ is said to be {\bf \em empty} if its edge set is empty. For vertex-disjoint sets $A,B\subseteq V$, $G[A,B]$ denotes a bipartite graph $(A\cup B,E')$ where $E'$ consists of all edges of $G$ whose one endpoint belongs to $A$ and another one to $B$. A {\bf \em bipartite complement} of $G[A,B]$ is a bipartite graph $(A\cup B,E'')$ where $E''=\{(a,b):a\in A\land b\in B\land (a,b)\notin E\}$. Any vertex $v\in V$ is said to be a {\bf \em cut vertex} of $G$ if a removal of $v$ from $G$ increases the number of the connected components of $G$. A cut vertex $v$ is said to be {\bf \em nontrivial} if every connected component which appears after the removal of $v$ from $G$ consists of at least two vertices.
We extend the definition of \kpq{} to cases where $p\in\{0,1\}$ naturally, i.e. $K^0_q$ is a graph whose vertex set is empty whereas $K^1_q$ is an empty graph which consists of $q$ vertices.

\begin{proof}[\textbf{\textup{Proof of Lemma~\ref{lem:find-kpq-naive} for \kt{}'s}}]
Let $e_1$ and $e_2$ be any two different edges incident to $v$ in $\tilde{G}$. (If $v$ has the degree less than two in $\tilde{G}$, then $v$ clearly cannot belong to any $t$-regular \kt{} of $\tilde{G}$ since $t\geq 3$.) Because the degree of $v$ in $\tilde{G}$ is not greater than $t+1$, any \kt{} of $\tilde{G}$ which contains $v$ has to contain $e_1$ or $e_2$. For each edge $e\in\{e_1,e_2\}$, we find all \kt{}'s of $\tilde{G}$ which contain $e$. Observe that if $e$ belongs to some \kt{} of $\tilde{G}$, then both its endpoints must have at least $t-1$ common neighbours.

If both endpoints of $e$ have exactly $t-1$ common neighbours, we check if these common neighbours and both endpoints of $e$ form some \kt{} of $\tilde{G}$. Observe that checking if a given subset $A$ of $t+1$ vertices form a \kt{} in a bounded graph $\tilde{G}$ can be implemented to run in $\mathcal{O}(t^2)$ time -- we can construct the complement of $\tilde{G}[A]$ and check if it contains no edges. Indeed, for every vertex $u$ of $A$, we can mark all its neighbours in $\tilde{G}$ in $\mathcal{O}(t)$ time. All unmarked vertices of $A\setminus\{u\}$ are all the neighbours of $u$ in the complement of $\tilde{G}[A]$.

If both endpoints of $e$ have exactly $t$ common vertices, we check if they form a $K_{t+2}$ of $\tilde{G}$ together with both endpoints of $e$. If not, we make use of the following observation to find all \kt{}'s in a subgraph of $\tilde{G}$ induced by these vertices.
\begin{observation}\label{obs:cliques-kt-stars}
Let $A$ be any set of $t+2$ different vertices of $\tilde{G}$. Let $H$ be any \kt{} of $\tilde{G}$ such that $V(H)\subseteq A$. Then the complement of $\tilde{G}[A]$ is a star with a center in the only vertex of $A$ which does not belong to $H$.
\end{observation}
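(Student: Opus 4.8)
The plan is to extract everything from the single fact that $H$ is a copy of $K_{t+1}$ sitting inside $\tilde{G}$, so that $E(H)\subseteq E(\tilde{G})$ and in particular every two distinct vertices of $V(H)$ are adjacent in $\tilde{G}$. No structural machinery beyond this is needed; the whole proof is essentially one counting observation together with one adjacency observation.

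First I would pin down the center. Since $|A|=t+2$ and $|V(H)|=t+1$, the difference $A\setminus V(H)$ consists of exactly one vertex; call it $v_0$, so that $V(H)=A\setminus\{v_0\}$. This $v_0$ is the claimed center, and it is in fact the only possible one, because any edge of the complement of $\tilde{G}[A]$ must be incident to it, as I argue next.

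Then I would show that the complement of $\tilde{G}[A]$ has no edge with both endpoints in $V(H)$: if $u$ and $w$ are distinct vertices of $V(H)$, then $(u,w)\in E(H)\subseteq E(\tilde{G})$, hence $(u,w)$ is an edge of the induced subgraph $\tilde{G}[A]$ and therefore not an edge of its complement. Consequently every edge of the complement of $\tilde{G}[A]$ has the form $(v_0,u)$ with $u\in V(H)$, and conversely $(v_0,u)$ is such an edge exactly when $v_0$ and $u$ are non-adjacent in $\tilde{G}$. Thus the complement of $\tilde{G}[A]$ is a star centered at $v_0$ (with leaf set a subset of $V(H)$, the missing leaves being precisely the neighbours of $v_0$ in $\tilde{G}$), which is the statement.

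There is really no hard step here: the content is the clique argument of the previous paragraph, and the only point requiring a little care is that $H$ is only assumed to be a (not necessarily induced) subgraph isomorphic to $K_{t+1}$, so one must invoke $E(H)\subseteq E(\tilde{G})$ rather than completeness of the graph induced on $V(H)$. Boundedness of $\tilde{G}$ plays no role in this particular observation (it is used only in the surrounding parts of the proof of Lemma~\ref{lem:find-kpq-naive}). For the algorithmic use of the observation I would also note that $v_0$ is recovered as the unique vertex incident to every edge of the complement whenever that complement has at least two edges, the residual ambiguity in the one-edge (or edgeless) case being exactly the ``two centres'' situation the algorithm already accounts for.
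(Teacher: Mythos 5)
Your proof is correct and is exactly the one-line clique argument the paper implicitly relies on: Observation~\ref{obs:cliques-kt-stars} is stated without proof, and the whole content is that every pair of vertices of $V(H)$ is adjacent in $\tilde{G}$ (since $E(H)\subseteq E(\tilde{G})$), so every edge of the complement of $\tilde{G}[A]$ must be incident to the unique vertex $v_0\in A\setminus V(H)$. Your parenthetical caveat that the star need not be spanning (its leaves are only the non-neighbours of $v_0$ in $V(H)$) is a fair and correct reading, since the paper's formal definition of a star would otherwise make the observation literally false, and the algorithm only needs that all complement edges share the common endpoint $v_0$.
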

Therefore, it is sufficient to check if the complement of $\tilde{G}$ induced on these $t+2$ vertices is a star. If so, we get at most two different \kt{}'s of $\tilde{G}$ which contain $e$, depending on how many centers this star has. (Notice that a star consisting of a single edge has two different centers.) Again, we can construct such complement in $\mathcal{O}(t^2)$ time.
\end{proof}

\begin{comment}
\begin{proof}[\textbf{\textup{Proof of Lemma~\ref{lem:find-kpq-naive} for \ktt{}'s}}]
The proof for \ktt{}'s is similiar to the proof for \kt{}'s. We consider any two edges of $\tilde{G}$ incident to $v$ as the candidates for some edge of a \ktt{} of $\tilde{G}$ which contains $v$. Let $e=\{v_1,v_2\}$ be any considered edge. If $e$ belongs to some $H=\ktt{}$ of $\tilde{G}$ indeed, then $v_2$ and exactly $t-1$ other neighbours of $v_1$ form some color class of $H$, and vice versa. We search for edges between the neighbours of $v_1$ and the neighbours of $v_2$. We have an analogue of Observation~\ref{obs:cliques-kt-stars} for \ktt{}'s which allows us to find all \ktt{}'s of $G$ which contain $e$.
\begin{observation}
Let $A_1$ and $A_2$ be any two vertex-disjoint subsets of vertices of $G$, each of size exactly $t+1$. Let $H$ be any \ktt{} of $G$ such that $V_i(H)\subseteq A_i$ for every $i\in\{1,2\}$. Then a (bicolor classite) complement of $G[A_1,A_2]$ consists of two (possibly empty) stars with centers in vertices of, respectively, $A_1\setminus V_1(H)$ and $A_2\setminus V_2(H)$.
\end{observation}
\end{proof}
\end{comment}

\begin{proof}[\textbf{\textup{Proof of Lemma~\ref{lem:find-kpq-naive} for \kpq{}'s} for $p\geq 2$ and $q\geq 3$}]
We consider any two different edges $e_1$ and $e_2$ incident to $v$ in $\tilde{G}$, similarly as in the proof for \kt{}'s. We find all \kpq{}'s of $\tilde{G}$ which contain $e_1$ or $e_2$.

\begin{definition}
Let $e=(v_0,v_1)$ be any edge of $\tilde{G}$. We denote by $\Gamma(e)$ a set of common neighbours of both endpoints of $e$. For $i\in\{0,1\}$, we denote by $\Gamma_i(e)$ a set of all neighbours of $v_i$ different than $v_{1-i}$ which do not belong to $\Gamma(e)$.
\end{definition}

Observe that the endpoints of any edge $e$ of $\tilde{G}$ belong to neither $\Gamma(e)$ nor $\Gamma_i(e)$. Since $\tilde{G}$ is bounded, $|\Gamma(e)|+|\Gamma_i(e)| \leq t$ for every $i\in\{0,1\}$.

\begin{observation}\label{obs:partite-common-part}
For any edge $e$ of any \kpq{} of $\tilde{G}$, $(p-2)q\leq|\Gamma(e)|\leq (p-2)q+2$.
\end{observation}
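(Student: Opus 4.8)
The plan is to sandwich $|\Gamma(e)|$ between $(p-2)q$ and $(p-2)q+2$ by comparing the neighbourhoods of the two endpoints of $e$ inside the complete $p$-partite subgraph $H=K^p_q$ with their neighbourhoods in $\tilde G$. First I would fix notation: write $e=(v_0,v_1)$ for the given edge of $H$. Since $H$ is complete partite, every color class is an independent set, so the endpoints of $e$ lie in two distinct color classes, and after renaming we may assume $v_0\in V_1(H)$ and $v_1\in V_2(H)$. In $H$ the vertex $v_0$ is adjacent to precisely the vertices outside $V_1(H)$ and $v_1$ to precisely the vertices outside $V_2(H)$; here I would emphasise that $H$ need not be an induced subgraph, so this statement is about the neighbourhood $N_H(v_i)$ taken inside $H$, not inside $\tilde G[V(H)]$.

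For the lower bound I would observe that $W:=V(H)\setminus\bigl(V_1(H)\cup V_2(H)\bigr)$ is contained in $N_H(v_0)\cap N_H(v_1)$, hence in $N_{\tilde G}(v_0)\cap N_{\tilde G}(v_1)=\Gamma(e)$, and $|W|=pq-2q=(p-2)q$. For the upper bound I would invoke that $\tilde G$ is bounded: $\deg_H(v_0)=(p-1)q=t$ while $\deg_{\tilde G}(v_0)\le t+1$, so $v_0$ has at most one neighbour in $\tilde G$ lying outside $N_H(v_0)=V(H)\setminus V_1(H)$; denoting the set of such extra neighbours by $A_0$ we get $|A_0|\le 1$ and $N_{\tilde G}(v_0)=\bigl(V(H)\setminus V_1(H)\bigr)\cup A_0$, and symmetrically $N_{\tilde G}(v_1)=\bigl(V(H)\setminus V_2(H)\bigr)\cup A_1$ with $|A_1|\le 1$. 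Intersecting the two neighbourhoods yields $\Gamma(e)\subseteq W\cup A_0\cup A_1$, whence $|\Gamma(e)|\le (p-2)q+2$, which finishes the argument.

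I do not anticipate any real obstacle; the proof is short. The only points that need a little care are (i) remembering that forbidden subgraphs are not required to be induced, so the whole argument must be phrased via neighbourhoods inside $H$, whose structure is dictated by the partition, and (ii) carrying out the bookkeeping of the single possible "extra" edge at each endpoint cleanly — which the one-line containment $\Gamma(e)\subseteq W\cup A_0\cup A_1$ accomplishes, absorbing the various situations (an extra neighbour in another color class of $H$, or an external vertex shared by both endpoints, or two distinct external vertices) without an explicit case split.
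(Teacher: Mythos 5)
Your proof is correct and is exactly the argument the paper leaves implicit (the observation is stated without proof): the lower bound comes from the $(p-2)q$ vertices of $V(H)\setminus\bigl(V_1(H)\cup V_2(H)\bigr)$ being common neighbours of both endpoints, and the upper bound from the fact that each endpoint, having degree $t=(p-1)q$ inside $H$ and at most $t+1$ in the bounded graph $\tilde{G}$, contributes at most one neighbour outside its $H$-neighbourhood. The containment $\Gamma(e)\subseteq W\cup A_0\cup A_1$ cleanly handles all cases, so there is nothing to add.
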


Because of Observation~\ref{obs:partite-common-part}, it is sufficient to consider the following three cases while considering an edge $e=(v_0,v_1)\in\{e_1,e_2\}$ as a candidate for an edge of some \kpq{} of $\tilde{G}$. We assume that both $v_0$ and $v_1$ have degree exactly $t+1$ in $\tilde{G}$ since we can add artificial vertices if necessary.

\begin{enumerate}
\item\label{itm:running1} $|\Gamma(e)|=(p-2)q+2$ and $|\Gamma_i(e)|=q-2$.

Since the number of all neighbours of each endpoint of $e$ (including $v_0$ and $v_1$) is not greater than $pq$, we can find sought-after \kpq{} using the following observation.
\begin{observation}\label{obs:partite-3-search}
Let $A$ be any set of $pq$ different vertices of $\tilde{G}$. Let $H$ be any \kpq{} of $\tilde{G}$ such that $V(H)\subseteq A$. Then the complement of $G[A]$ consists of exactly $p$ connected components, each of which consists of exactly $q$ vertices. Moreover, the vertex set of each of these components is some color class of $H$.
\end{observation}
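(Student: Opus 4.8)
The plan is as follows. Since $H$ is isomorphic to \kpq{}, it has exactly $pq$ vertices; as $V(H)\subseteq A$ and $|A|=pq$, we get $A=V(H)$. Write $V_1,\dots,V_p$ for the colour classes of $H$, so that $A=V_1\cup\dots\cup V_p$ is a partition into sets of size $q$.

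First I would observe that every edge of $H$ --- that is, every pair of vertices lying in two distinct colour classes --- is an edge of $\tilde{G}$, hence a non-edge of the complement of $\tilde{G}[A]$. Consequently the complement of $\tilde{G}[A]$ is a spanning subgraph of the disjoint union of the cliques on $V_1,\dots,V_p$; in particular no edge of this complement joins two different colour classes, so each of its connected components lies entirely inside a single $V_i$.

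The key step is to show that, for each $i$, the subgraph of the complement of $\tilde{G}[A]$ induced on $V_i$ is connected. Fix $v\in V_i$. In $\tilde{G}$ the vertex $v$ is adjacent to all $(p-1)q=t$ vertices of $A\setminus V_i$, and since $\tilde{G}$ is bounded, $\deg_{\tilde{G}}(v)\leq t+1$; hence $v$ has at most one further neighbour, so at most one neighbour in $V_i\setminus\{v\}$. Therefore in the complement of $\tilde{G}[A]$ the vertex $v$ is adjacent to at least $q-2$ vertices of $V_i\setminus\{v\}$; equivalently, the complement inside $K_q$ of the graph induced on $V_i$ by the complement of $\tilde{G}[A]$ is a matching. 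Here I would invoke the standing assumption $q\geq 3$: a graph on $q\geq 3$ vertices whose complement is a matching is connected --- given two of its vertices $x,y$, either $xy$ is an edge, or $\{x,y\}$ is the matching pair, in which case both $x$ and $y$ are adjacent to every remaining vertex and, since $q\geq 3$, some third vertex is a common neighbour of $x$ and $y$. So the induced subgraph on $V_i$ is connected.

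Combining the two observations, the connected components of the complement of $\tilde{G}[A]$ are precisely $V_1,\dots,V_p$: each $V_i$ is connected, and there are no edges of the complement between distinct classes. This yields exactly $p$ components, each of size $q$ and each equal to a colour class of $H$, as claimed. The argument is short; the only point needing care is the connectivity of ``$K_q$ minus a matching'', which is exactly where the case hypothesis $q\geq 3$ enters (for $q=2$ the within-class complement can be a single edge, splitting a pair into its own component), so I do not anticipate a substantial obstacle.
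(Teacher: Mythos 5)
Your proof is correct. The paper states this observation without proof, and your argument --- identifying $A=V(H)$, noting that the complement of $\tilde{G}[A]$ can only have edges inside colour classes, and then using the degree bound $\deg_{\tilde{G}}(v)\leq t+1$ to show that each $V_i$ induces $K_q$ minus a matching, which is connected for $q\geq 3$ --- is exactly the intended justification, with the $q\geq 3$ hypothesis entering precisely where it must.
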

\item $|\Gamma(e)|=(p-2)q$ and $|\Gamma_i(e)|=q$.

We check if there exists a $K^{p-2}_q$ $H'$ in $\tilde{G}[\Gamma(e)]$ similarly as in Case~\ref{itm:running1}. If so, we check if it is possible to extend $H'$ to a valid \kpq{}, using $q$ vertices of each set $\Gamma_i(e)\cup\{v_{1-i}\}$ as another color class. To do this, we search for \ktt{}'s of $\tilde{G}$ whose each color class is contained in $\Gamma_i(e)\cup\{v_{1-i}\}$, for some $i\in\{1,2\}$, using the following observation.
\begin{observation}
Let $A_1$ and $A_2$ be any two vertex-disjoint subsets of vertices of $\tilde{G}$, each of size exactly $t+1$. Let $H$ be any \ktt{} of $\tilde{G}$ such that $V_i(H)\subseteq A_i$,  for every $i\in\{1,2\}$. Then the bipartite complement of $\tilde{G}[A_1,A_2]$ consists of two (possibly empty) stars with centers in vertices of, respectively, $A_1\setminus V_1(H)$ and $A_2\setminus V_2(H)$.
\end{observation}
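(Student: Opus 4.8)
The plan is to determine, for every vertex of $V_1(H)$ and of $V_2(H)$, exactly which vertices it can fail to be adjacent to inside $A_2$ and $A_1$ respectively, using only the degree bound of $\tilde{G}$ together with the fact that $H$ is isomorphic to \ktt{}; this is the natural bipartite analogue of Observation~\ref{obs:cliques-kt-stars}.

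First I would note that since $|A_i|=t+1$ and $|V_i(H)|=t$, each of the sets $A_1\setminus V_1(H)$ and $A_2\setminus V_2(H)$ is a singleton; write $\{a\}=A_1\setminus V_1(H)$ and $\{b\}=A_2\setminus V_2(H)$. Next, fix any $u\in V_1(H)$. Because $H=\ktt{}$, the vertex $u$ is joined in $\tilde{G}$ to all $t$ vertices of $V_2(H)$, so $\deg_{\tilde{G}}(u)\ge t$; since $\tilde{G}$ is bounded, $\deg_{\tilde{G}}(u)\le t+1$, hence $u$ has at most one neighbour outside $V_2(H)$. As $A_2=V_2(H)\cup\{b\}$ and $u$ is adjacent to every vertex of $V_2(H)$, the only vertex of $A_2$ that can be non-adjacent to $u$ is $b$; thus in the bipartite complement of $\tilde{G}[A_1,A_2]$ the vertex $u$ is joined to a subset of $\{b\}$. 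By the symmetric argument applied to any $v\in V_2(H)$, in the bipartite complement $v$ is joined to a subset of $\{a\}$.

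From these two facts the statement follows directly. Take any edge $(u,v)$ of the bipartite complement, with $u\in A_1$ and $v\in A_2$. If $u\neq a$, then $u\in V_1(H)$, and the first fact forces $v=b$; symmetrically, if $v\neq b$, then $v\in V_2(H)$ and the second fact forces $u=a$. Hence every edge of the bipartite complement is incident to $a$ or to $b$. Putting all edges incident to $a$ into one star, centred at $a$, and the remaining edges (which are then all incident to $b$ and not to $a$) into a second star, centred at $b$, yields the claimed decomposition of the bipartite complement into two possibly empty stars with centres in $A_1\setminus V_1(H)$ and $A_2\setminus V_2(H)$.

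I do not expect a genuine obstacle here. The only points requiring care are that the degree bound must be invoked for the vertices of $H$ rather than for $a$ or $b$ (about whose degrees nothing is claimed), and that the single edge $(a,b)$, when present, must be allotted to exactly one of the two stars — I would simply place it in the star centred at $a$.
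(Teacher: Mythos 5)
Your proof is correct and is exactly the (omitted) argument the paper intends: since each $u\in V_1(H)$ is already adjacent to all of $V_2(H)=A_2\setminus\{b\}$, every edge of the bipartite complement must touch $a$ or $b$, giving the two stars. One tiny remark: the appeal to the degree bound of $\tilde{G}$ is superfluous here — the completeness of $H$ between $V_1(H)$ and $V_2(H)$ already pins down $b$ (resp.\ $a$) as the only possible non-neighbour in $A_2$ (resp.\ $A_1$).
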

We check if it is possible to merge $H'$ and found \ktt{}'s into a single \kpq{} of $\tilde{G}$.
Since each set $\Gamma_i(e)\cup\{v_{1-i}\}$ contains $q+1$ vertices, we can find several valid \kpq{}'s of $\tilde{G}$ in this case. \\
\item $|\Gamma(e)|=(p-2)q+1$ and $|\Gamma_i(e)|=q-1$.

We check if there exists a $K^{p-2}_q$ $H'$ in $\tilde{G}[\Gamma(e)]$ using the following observation.
\begin{observation}\label{obs:partite-3-search-plus}
Let $A$ be any set of $pq+1$ different vertices of $\tilde{G}$. Let $H$ be any \kpq{} of $\tilde{G}$ such that $V(H)\subseteq A$. Then the only vertex of $A$ which does not belong to $H$ is only either isolated vertex or nontrivial cut vertex of the complement of $\tilde{G}[A]$.
\end{observation}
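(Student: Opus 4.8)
The plan is to translate the claim into a structural statement about the complement $\bar G$ of $\tilde G[A]$, read off in terms of the color classes $V_1,\dots,V_p$ of $H$ and the unique extra vertex $x\in A\setminus V(H)$. First I would describe $\bar G$ on $V(H)$. Because $H\subseteq\tilde G$, any two vertices lying in different classes are adjacent in $\tilde G$ and hence non-adjacent in $\bar G$, so $\bar G$ has no edge between distinct classes. Inside a single class $V_i$, every vertex $v\in V_i$ is already incident to its $t$ edges of $H$ and $\deg_{\tilde G}(v)\le t+1$, so $v$ has at most one further $\tilde G$-edge; thus $\tilde G[V_i]$ is a matching and $\bar G[V_i]$ is $K_q$ with a matching deleted, which is connected for $q\ge 3$ (a split $S\uplus T$ would force every pair with one endpoint in $S$ and one in $T$ to be a matching edge, hence $|S|=|T|=1$). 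So, restricted to $V(H)$, $\bar G$ is the disjoint union of $p$ connected graphs $C_1,\dots,C_p$ on $q$ vertices each.

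Next I would place $x$. Writing $d_i$ for the number of $\tilde G$-neighbours of $x$ in $V_i$, the vertex $x$ is joined in $\bar G$ to exactly the $q-d_i$ remaining vertices of $V_i$, so $x$ is attached to $C_i$ precisely when $d_i<q$. From $\deg_{\tilde G}(x)\le t+1=(p-1)q+1$ we get $\sum_i d_i\le(p-1)q+1$, so at most $p-1$ of the $d_i$ equal $q$; hence $x$ is attached to at least one $C_i$, and in particular $x$ is never isolated when $q\ge 2$. The claim therefore reduces to showing that $x$ is a \emph{nontrivial cut vertex} of $\bar G$. The clean case is when $x$ is attached to two or more classes: put $b=\#\{i:d_i=q\}$ and suppose $b\le p-2$. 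Then $\bar G$ is the disjoint union of the $b$ components $C_i$ with $d_i=q$ together with one component $\{x\}\cup\bigcup_{d_i<q}C_i$; deleting $x$ splits the latter into the individual $C_i$'s, each of size $q\ge 2$, so $x$ is a nontrivial cut vertex and the characterization follows.

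The case $b=p-1$ is where I expect the real difficulty to sit, and it is the step I would dwell on. Here $x$ is attached to exactly one class $V_j$, and the degree budget forces $d_j\le 1$, so the component of $\bar G$ containing $x$ is just $\{x\}\cup V_j$, a graph on only $q+1$ vertices in which $x$ need not separate anything — e.g. if $\bar G[V_j]=K_q$ and $x$ is joined in $\bar G$ to all of $V_j$, that component is a clique $K_{q+1}$. To push through here one has to examine $\bar G[\{x\}\cup V_j]$ directly, using $q\ge 3$ together with the exact value $\deg_{\tilde G[A]}(x)=(p-1)q+d_j$ rather than a mere inequality, and keep careful track of which $\kpq$ on $A$ is meant, because deleting different vertices of $V_j$ (or $x$ itself) can yield several distinct-but-isomorphic $\kpq$'s in this configuration. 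I would close this case by combining it with the companion Observation~\ref{obs:partite-3-search} and its $(pq{+}1)$-vertex variants and by invoking the special shape of the set $A$ in the application (a common-neighbourhood $\Gamma(e)$), which restricts the admissible values of $d_j$; pinning down precisely the extra hypothesis that makes the statement go through in this degenerate configuration is, I believe, the heart of the argument.
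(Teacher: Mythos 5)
The paper offers no proof of Observation~\ref{obs:partite-3-search-plus} at all --- it is simply asserted inside the proof of Lemma~\ref{lem:find-kpq-naive} --- so your attempt has to stand on its own, and its first half does. The decomposition is correct: on $V(H)$ the complement of $\tilde{G}[A]$ is the disjoint union of $p$ connected graphs $C_1,\dots,C_p$ (each $\tilde{G}[V_i]$ is a matching, and $K_q$ minus a matching is connected for $q\ge 3$), the extra vertex $x$ can never be isolated for $q\ge 2$, and whenever $x$ has a $\tilde{G}$-non-neighbour in at least two color classes it is a nontrivial cut vertex. All of that is right.

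The case you defer to the end, however, is not a hard case waiting for a cleverer argument --- it is a counterexample, so the statement as written cannot be proved. Take $p=q=3$ (so $t=6$), let $H$ have classes $V_1,V_2,V_3$ with no edges inside the classes, and let $x$ be adjacent in $\tilde{G}$ to all six vertices of $V_1\cup V_2$ and to nothing in $V_3$; every degree is at most $7=t+1$, yet the complement of $\tilde{G}[A]$ is $K_3\sqcup K_3\sqcup K_4$, which has no isolated vertex and no cut vertex whatsoever. The "extra hypothesis" you were groping for is in fact already visible in your own degree count: if $x$ is fully adjacent to $p-1$ classes then $d_p\le 1$, so $x$ together with $q-1$ of its non-neighbours in $V_p$ forms an independent set of size $q$ completely joined to $V_1\cup\dots\cup V_{p-1}$, i.e.\ $A$ contains a \emph{second} \kpq{}. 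Consequently, under the hypothesis that $H$ is the only \kpq{} of $\tilde{G}$ with $V(H)\subseteq A$ (the multi-\kpq{} configuration being detectable separately via an $I_{q+1}\times K^{p-1}_q$), your "clean case" $b\le p-2$ is the only case and your proof closes; alternatively one can prove the characterization the algorithm really needs, namely that $x$ is a vertex whose deletion leaves a complement with exactly $p$ components of size $q$. You should also note that the uniqueness half of the observation fails independently: if $x$ has exactly one non-neighbour $v$ in $V_1$, exactly one in $V_2$, and is adjacent to all of $V_3$, then $v$ is a nontrivial cut vertex of the complement as well.
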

We check if both sets $\Gamma_i(e)\cup\{v_{1-i}\}$ form other color classes of sought-after \kpq{}. Moreover, we check if it possible to replace some vertex of one of these sets by the only vertex of $\Gamma(e)$ which does not belong to $H'$ to get a valid \kpq{} of $\tilde{G}$.
\end{enumerate}
\end{proof}

\begin{proof}[\textbf{\textup{Proof of Lemma~\ref{lem:find-kpq-naive} for \kpt{}'s} for $p\geq 3$}]
We find all \kpt{}'s of $\tilde{G}$ which contain $v$ similarly to \kpq{}'s. We need the following slight modifications to Observation~\ref{obs:partite-3-search} and Observation~\ref{obs:partite-3-search-plus}.

\begin{observation}
Let $A$ be any subset of $2p$ different vertices of $\tilde{G}$. Let $H$ be any \kpt{} of $\tilde{G}$ such that $V(H)\subseteq A$. Then the complement of $\tilde{G}[A]$ is a (possibly empty) matching.
\end{observation}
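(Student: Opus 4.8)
The plan is to reduce the statement to an elementary computation of the complement of $\kpt$. First I would observe that $\kpt$ has exactly $2p$ vertices, so the hypothesis $V(H)\subseteq A$ together with $|A|=2p$ forces $V(H)=A$; in other words $H$ is a spanning subgraph of $\tilde{G}[A]$. Consequently every non-edge of $\tilde{G}[A]$ is a non-edge of $H$, i.e.\ the edge set of the complement of $\tilde{G}[A]$ is contained in the edge set of the complement of $H$.

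Next I would describe the complement of $H=\kpt$ explicitly. By definition of a complete partite graph, the only non-adjacent pairs of $H$ are the pairs of vertices lying in the same color class; since each of the $p$ color classes of $\kpt$ has exactly two vertices, the complement of $H$ consists of exactly $p$ edges, one inside each color class, and these edges are pairwise vertex-disjoint. Hence the complement of $H$ is a perfect matching on $A$.

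Combining the two facts, the edge set of the complement of $\tilde{G}[A]$ is contained in a matching, so it is itself a matching. It may be empty, or smaller than perfect, precisely because a within-class non-edge of $H$ may happen to be present as an edge of $\tilde{G}$; this is why the statement only claims "(possibly empty) matching''. This finishes the argument.

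There is essentially no obstacle: this is the $\kpt$-analogue of Observation~\ref{obs:cliques-kt-stars} and of Observation~\ref{obs:partite-3-search}, and the only points to be careful about are the degenerate identity $V(H)=A$ and writing down the exact shape of the complement of $\kpt$. The boundedness of $\tilde{G}$ is not even needed for this observation itself; it is used only afterwards, when the matching is used to reconstruct the color classes of $H$, just as in the case $q\geq 3$.
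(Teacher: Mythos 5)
Your proof is correct and is exactly the argument the paper intends (the observation is stated without proof there): since $V(H)=A$ and $H$ is a subgraph of $\tilde{G}[A]$, the complement of $\tilde{G}[A]$ is contained in the complement of $H=\kpt{}$, which is the perfect matching formed by the $p$ two-element color classes. Your side remarks — that the matching may be non-perfect because a within-class pair may be an edge of $\tilde{G}$, and that boundedness is not needed here — are also accurate.
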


\begin{observation}
Let $A$ be any set of $2p+1$ different vertices of $\tilde{G}$. Let $H$ be any \kpt{} of $\tilde{G}$ such that $V(H)\subseteq A$. Then there exist a constant number of vertices $v$ such that $\tilde{G}[A\setminus\{v\}]$ is a matching. Moreover, all of these vertices can be found in $\mathcal{O}(t^2)$ time.
\end{observation}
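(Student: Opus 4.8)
The plan is to work with the complement $C:=\overline{\tilde G[A]}$ of the subgraph induced on $A$; the vertices $v$ asked for are exactly those for which $C-v$, equivalently $\overline{\tilde G[A\setminus\{v\}]}$, is a matching. Since $H$ exists there is a unique vertex $v^\ast\in A\setminus V(H)$, and $A\setminus\{v^\ast\}=V(H)$, so by the preceding observation $C-v^\ast$ is a (possibly empty) matching; thus at least one admissible $v$ exists, and it remains to bound their number from above and to locate them. I would first pin down the shape of $C$ and then read off both claims from it.

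Two elementary facts about $C$. Because $\tilde G$ is bounded and $|A|=2p+1$, every vertex of $A$ has at least $|A|-1-(t+1)=1$ non-neighbour inside $A$, so $C$ has minimum degree at least $1$; since $|A|$ is odd, $C$ is therefore not itself a matching. Next write $C-v^\ast=M^\ast$, a matching on $A\setminus\{v^\ast\}$, and set $Y:=N_C(v^\ast)$. Since $M^\ast$ is a matching and $C-v^\ast=M^\ast$, every vertex of $Y$ has degree at most $2$ in $C$ (its $M^\ast$-edge plus the edge to $v^\ast$), and the minimum-degree bound forces every vertex of $(A\setminus\{v^\ast\})\setminus Y$ to be covered by $M^\ast$ (otherwise it would be isolated in $C$). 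Hence the only vertex that can have degree at least $3$ in $C$ is $v^\ast$, and the set $D$ of degree-$2$ vertices of $C$ satisfies $D\subseteq\{v^\ast\}\cup Y$.

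Now I would bound the admissible $v$. If $C-v$ is a matching, every $x\in A\setminus\{v\}$ has at most one $C$-neighbour other than $v$; so any vertex of degree at least $3$ in $C$ must equal $v$, and every vertex of $D\setminus\{v\}$ must be a $C$-neighbour of $v$. If $C$ has a vertex $z$ of degree at least $3$, this forces $v=z$, so there is at most one admissible vertex. Otherwise $C$ has maximum degree at most $2$; since $v^\ast$ is admissible, $D\setminus\{v^\ast\}\subseteq N_C(v^\ast)$, whence $|D|\le 3$, and an admissible $v$ is either one of these at most three vertices of $D$ or a vertex outside $D$ adjacent in $C$ to all of $D$; as $D\neq\emptyset$ and each of its vertices has exactly two $C$-neighbours, the latter contributes at most two further candidates. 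So the number of admissible $v$ is a small absolute constant.

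For the time bound I would, exactly as in the proof for \kt{}'s, construct $C$ and its degree sequence in $\mathcal O(|A|^2)=\mathcal O(t^2)$ time, then in $\mathcal O(t)$ time single out the $\mathcal O(1)$ candidates above — the unique degree-$\ge 3$ vertex, or the at most three degree-$2$ vertices together with the at most two common $C$-neighbours of $D$ — and for each check in $\mathcal O(t)$ time whether $C-v$ is a matching; for the use in Lemma~\ref{lem:find-kpq-naive} one additionally keeps only those $v$ for which $C-v$ is a \emph{perfect} matching, which is exactly when $\tilde G[A\setminus\{v\}]$ is a \kpt{}. The one genuinely non-routine step is the structural claim of the second paragraph: that $C-v^\ast$ being a matching together with $C$ having minimum degree at least $1$ forces $C$ to be a matching with a single star glued on at $v^\ast$; granting that, both the constant bound and the $\mathcal O(t^2)$ bound are immediate.
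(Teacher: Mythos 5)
The paper states this observation without proof, so there is no official argument to compare against; your proof is correct and fills the gap in the same spirit as the paper's other complement-based observations (e.g.\ the star analysis for $K_{t+1}$). Your structural decomposition of $C=\overline{\tilde G[A]}$ as the matching $C-v^\ast$ plus the star at $v^\ast$, the resulting bound of at most five candidates, and the $\mathcal{O}(t^2)$ construction-plus-check all hold up (noting, as you do, that the statement's ``$\tilde G[A\setminus\{v\}]$ is a matching'' must be read as referring to the complement).
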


\end{proof}

\subsection{Finding the potential functions}\label{sec:running-potential}

In this subsection, we present how to find the potential function of any forbidden subgraph of $G$.
We use the following observations to do so.
\begin{observation}\label{obs:triangle-potentials}
Consider any triangle $t$ of $G$. Then there exists exactly one potential function of $t$.
\end{observation}
The potential function of any \kt{} of $G$ can be calculated in $\mathcal{O}(t^2)$ time by covering the set of its vertices by triangles and by using Observation~\ref{obs:triangle-potentials}.  
\begin{observation}
Consider any $H=\ktt{}$ of $G$ and let $r_H$ be any potential function of $H$. For any real number $\delta$, define the following potential function of $H$:
\[ r_H^\delta(v)=\left\{\begin{array}{ll}
r_H(v)+\delta & \textrm{if $v\in V_1(H)$,}\\
r_H(v)-\delta & \textrm{if $v\in V_2(H)$.}
\end{array}\right. \]
Then the set of all potential functions of $H$ is a set of all $r_H^\delta$ over all real numbers $\delta$.
\end{observation}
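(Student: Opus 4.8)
The plan is to prove the claimed equality of sets by establishing the two inclusions separately. First I would verify that each $r_H^\delta$ is indeed a potential function of $H=\ktt{}$. Take any edge $(u,v)$ of $H$; since $H$ is complete bipartite with color classes $V_1(H)$ and $V_2(H)$, we may assume $u\in V_1(H)$ and $v\in V_2(H)$. Then
\[ r_H^\delta(u)+r_H^\delta(v)=\bigl(r_H(u)+\delta\bigr)+\bigl(r_H(v)-\delta\bigr)=r_H(u)+r_H(v)=w(u,v), \]
where the last equality holds because $r_H$ is a potential function of $H$. Hence $r_H^\delta$ is a potential function of $H$ for every real $\delta$, which gives one inclusion.

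For the reverse inclusion, let $r'$ be an arbitrary potential function of $H$. I would fix a vertex $a_1\in V_1(H)$, set $\delta=r'(a_1)-r_H(a_1)$, and show $r'=r_H^\delta$. The key point is that $K_{t,t}$ is connected and every vertex of $V_1(H)$ is adjacent to every vertex of $V_2(H)$, so fixing the value at $a_1$ propagates through every edge. Concretely, for any $v\in V_2(H)$ the edge $(a_1,v)$ belongs to $H$, so $r'(a_1)+r'(v)=w(a_1,v)=r_H(a_1)+r_H(v)$, which yields $r'(v)=r_H(v)-\delta=r_H^\delta(v)$. Then for any $u\in V_1(H)$, choosing any $v\in V_2(H)$ (this set is nonempty since $t\geq 1$), the edge $(u,v)$ belongs to $H$, so $r'(u)+r'(v)=w(u,v)=r_H(u)+r_H(v)$; substituting $r'(v)=r_H(v)-\delta$ gives $r'(u)=r_H(u)+\delta=r_H^\delta(u)$. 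Thus $r'=r_H^\delta$, and the two inclusions together prove the observation.

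I do not expect a real obstacle here: the argument is elementary linear algebra on the adjacency structure of $\ktt{}$. The only point worth stating explicitly is that the single free parameter $\delta$ reflects exactly the one-dimensional degree of freedom coming from bipartiteness — adding a constant to one color class and subtracting it from the other leaves all edge sums unchanged, and, because every cross pair of vertices forms an edge, these are the only transformations that do so.
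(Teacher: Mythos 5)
Your proof is correct and matches the paper's (implicit) reasoning: the paper states this as an unproved observation, and the remark immediately following it --- that fixing the potential of one vertex determines the potentials of all others --- is exactly the propagation argument you carry out in your second inclusion. Nothing is missing.
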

We can find any potential function of any $H=\ktt{}$ of $G$ by defining the potential of some vertex of $H$ arbitrarily which determines potentials of all other vertices of $H$. It is easy to see that this procedure can be implemented to run in $\Theta(t)$ time. %\mwcom{remark that our algorithm can be easily generalized to a setting where a list of forbidden cliques is given}

For $p\geq 3$, we can find the potential function of any \kpq{} of $G$ in $\mathcal{O}(t^2)$ time using Observation~\ref{obs:triangle-potentials} in the same way as for \kt{}'s.

\section{Conclusion}
We  presented the first polynomial algorithms for the weighted versions of the bounded  restricted and \kpq{}-free $t$-matching problems in which the weight function is vertex-induced on every forbidden subgraph. Moreover, these algorithms for the unweighted versions are faster than those known previously.

It is imaginable that our algorithms can be adapted to find  maximum size or weight $t$-matchings which contain  no complete bipartite subgraphs consisting of $t+2$ vertices in the graphs of maximum degree at most $t+1$, possibly leading to a faster algorithm for the problem of increasing connectivity by one. %Of course, an assumption that the weight function is vertex-induced on every forbidden complete bipartite subgraph has to be made in the weighted version of this problem.

Based on our result, one can suspect that the degree sequences of maximum weight restricted $t$-matchings form an $M$-concave function on the constant-parity jump system in any graph of  maximum degree at most $t+1$ if the weight function is vertex-induced on every forbidden subgraph.%We do not know if it is true.

\begin{comment}
We were able to generalize our method to the new $k$-restricted $t$-matching problem.
We do not know if its weighted variant in which the weight function is vertex-induced  on every subgraph from $\mathcal{K}$  can be solved in  polynomial time and we were not able to generalize our algorithm for the unweighted version to the weighted  one. %We suspect that the weighted variant is \np-hard.
Another generalization of the $k$-restricted $t$-matching problem arises when we drop the assumption that the subgraphs from $\mathcal{K}$ are pairwise edge-disjoint. We do not know if this version is solvable in polynomial time and generalizing our algorithm to this case can be quite tedious. However, it can be quite interesting to consider a variant when all subgraphs from  $\mathcal{K}$ are isomorphic to the same $t$-regular \kpq{} and the values of $k_H$ are \mwcom{I think that this variant should be easy} the same for all $H\in\mathcal{K}$.
\end{comment}
%\input{appendix}

\bibliographystyle{abbrv}
\bibliography{bib}

%\appendix
%\section{Deferred Proofs}
%\label{sec:deferredproofs}

\end{document}